\newtheorem{theorem}{Theorem}
\newtheorem{corollary}[theorem]{Corollary}
\newtheorem{definition}[theorem]{Definition}
\newtheorem{example}[theorem]{Example}
\newtheorem{lemma}[theorem]{Lemma}
\newtheorem{proposition}[theorem]{Proposition}
\newcommand{\1}[1]{\mathds{1}_{#1}}
\newcommand{\doverline}[1]{\overline{\overline{#1}}}
\numberwithin{equation}{section}
\theoremstyle{plain}
\newcommand{\var}{\mathrm{var}}
\DeclareMathOperator{\CRM}{CRM}
\DeclareMathOperator{\GGP}{GGP}
\DeclareMathOperator{\Poi}{Poisson}
\DeclareMathOperator{\Gam}{Gamma}
\begin{document}

\begin{frontmatter}
\title{Non-exchangeable random partition models for microclustering}
\runtitle{Non-exchangeable random partition models}
%\thankstext{T1}{Footnote to the title with the ``thankstext'' command.}

\begin{aug}
\author{\fnms{Giuseppe} \snm{Di Benedetto}\thanksref{m1}\ead[label=e1]{giuseppe.dibenedetto@spc.ox.ac.uk}},
\author{\fnms{Fran\c cois} \snm{Caron}\thanksref{m1}\ead[label=e2]{caron@stats.ox.ac.uk}}
\and
\author{\fnms{Yee Whye} \snm{Teh}\thanksref{m1,m2}\ead[label=e3]{y.w.teh@stats.ox.ac.uk}
\ead[label=u1]{http://www.foo.com}}

\runauthor{G. di Benedetto et al.}

\address{University of Oxford\thanksmark{m1}\and Google Deepmind\thanksmark{m2}}
%\affiliation{University of Oxford\thanksmark{m1}\and Google Deepmind\thanksmark{m2}}

%\address{Department of Statistics\\
%University of Oxford\\
%\printead{e1}\\
%\phantom{E-mail:\ }\printead*{e2}
%\phantom{E-mail:\ }\printead*{e3}}

%\address{Address of the Third author\\
%Usually a few lines long\\
%Usually a few lines long\\
%\printead{e3}\\
%\printead{u1}}
\end{aug}

\begin{abstract}
Many popular random partition models, such as the Chinese restaurant process and its two-parameter extension, fall in the class of exchangeable random partitions, and have found wide applicability in model-based clustering, population genetics, ecology or network analysis. While the exchangeability assumption is sensible in many cases, it has some strong implications. In particular, Kingman's representation theorem implies that the size of the clusters necessarily grows linearly with the sample size; this feature may be undesirable for some applications, as recently pointed out by Miller et al. (2015). We present here a flexible class of  non-exchangeable random partition models which are able to generate partitions whose cluster sizes grow sublinearly with the sample size, and where the growth rate is controlled by one parameter. Along with this result, we provide the asymptotic behaviour of the number of clusters of a given size, and show that the model can exhibit a power-law behavior, controlled by another parameter. The construction is based on completely random measures and a Poisson embedding of the random partition, and inference is performed using a Sequential Monte Carlo algorithm. Additionally, we show how the model can also be directly used to generate sparse multigraphs with power-law degree distributions and degree sequences with sublinear growth. Finally, experiments on real datasets emphasize the usefulness of the approach compared to a two-parameter Chinese restaurant process.
\end{abstract}

\end{frontmatter}
\begin{keyword}
power-law, random partitions, completely random measure, stochastic process, sparse random graph, sublinear degree growth
\end{keyword}

\section{Introduction}

Random partitions arise in a wide range of different applications such as Bayesian model-based clustering~\citep{Lau2007,Mueller2013}, population genetics~\citep{Kingman1978}, ecology~\citep{Lijoi2007a} or network modelling~\citep{Bloem-Reddy2017}. A partition of a set $[n]=\{1,\dots,n\}$ is a set of disjoint non-empty subsets $A_{n,j}\subseteq  [n]$, $j=1,\ldots,K_n$ with $\cup_j A_{n,j}=[n]$ where $K_n\leq n$ is the number of clusters and  $A_{n,j}$ denotes the set of integers in cluster $j$. A random partition $\Pi_n$ of $[n]$ is a random variable taking values in the finite set of partitions of $[n]$. A random partition of $\mathbb{N}$ is a sequence $\Pi=(\Pi_n)_{n\geq 1}$ of random partitions of $[n]$, defined on a common probability space, that satisfy the Kolmogorov consistency condition: for every $1\le m<n$, $\Pi_n$ restricted to $[m]$ is $\Pi_m$~\citep{Kingman1975,Aldous1985,Pitman2002,Pitman1995}. For many applications, it is important to characterize the properties of the random partition model as the number of items $n$ grows. Of particular importance are the asymptotic behavior of (i) the number of clusters, (ii) the proportion of clusters of a given size, and (iii) the cluster sizes.\smallskip

In some contexts a natural and useful assumption is the \emph{exchangeability} of the random partition: $\Pi$ is said to be exchangeable if for every $n\ge1$ the distribution of $\Pi_n$ is invariant to the group of permutations of $[n]$. Arguably the best known exchangeable random partition model is the Chinese Restaurant Process (CRP)~\cite{Aldous1985}. This model has a single parameter, a very simple generative process and well established asymptotic properties; the number of clusters $K_n$ grows logarithmically with $n$~\cite{Korwar1973}, while the proportion of clusters of any given size goes to zero. Such behaviour is not appropriate for some applications such as natural language processing or image segmentation~\citep{Teh2006,Sudderth2009}, where these proportions typically exhibit a power-law behavior. This asymptotic property can be achieved by considering the two-parameter CRP~\cite{pitman1997}, another exchangeable random partition model which generalizes the one-parameter CRP. Beyond these two popular models, the class of exchangeable random partitions offers a rich, flexible and tractable framework, including models based on normalized random measures~\citep{Regazzini2003,James2002,Lijoi2007,James2009}, Poisson-Kingman processes~\citep{Pitman2003} or Gibbs-type priors~\citep{Gnedin2006,Favaro2013,DeBlasi2015,Bacallado2015}.\smallskip

Although exchangeability is a sensible assumption in many applications, it has strong implications regarding the growth rate of the cluster's sizes: Kingman's representation theorem indeed implies that the size of each cluster grows linearly with the sample size $n$. As recently noted by Miller et al.~\cite{Miller2015} and Betancourt et al.~\cite{Betancourt2016}, this assumption may be unrealistic for some applications, such as entity resolution, which require the construction of random partition models where the cluster sizes grow sublinearly with the sample size; Miller et al.~\cite{Miller2015} call it the \emph{microclustering} property. \smallskip

The objective of this article is to present a general class of models for non-exchangeable random partitions of $\mathbb N$ which retains the wide range of asymptotic properties of exchangeable partition models, while capturing the microclustering property. The model allows:
\begin{itemize}
\item Flexibility in the asymptotic growth rates of (i) the number of cluster, and (ii) the proportion of clusters of a given size, tuned by interpretable parameters; in particular, it is possible to obtain the same growth rates as with the two-parameter CRP, including the power-law regime.
\item Flexibility in the asymptotic sublinear growth rates of the cluster sizes, tuned by interpretable parameters.
\end{itemize}

The paper is organized as follows. In Section~\ref{sec:preliminaries} we provide background on completely random measures (CRM), exchangeable random partitions, and give a derivation of the partition associated to a normalized completely random measure via a Poissonization technique. In Section~\ref{sec:model} we present our novel class of non-exchangeable random partition models that builds on the same Poissonization idea. Section~\ref{sec:properties} develops the properties of this class of models and posterior inference. In Section~\ref{sec:graphs}, we describe how our model can also be used to build sparse random multigraph models with an asymptotic power-law degree distribution and sublinear degree growth. Section~\ref{sec:discussion} discusses related approaches in the literature. Section~\ref{experiments} provides comparisons between the proposed non-exchangeable model and the two-parameter CRP on two datasets. Most proofs and some definitions can be found in the Appendix.

\section{Background material}\label{sec:preliminaries}

\subsection{Completely random measures}
Completely random measures, introduced by Kingman~\citep{Kingman1967}, have found wide applicability as priors over functional spaces in Bayesian nonparametrics~\citep{Regazzini2003,Lijoi2010}, due to their flexibility and tractability; the reader can refer to \citep[Chapter 10.1]{Daley2008} or \citep{Lijoi2010} for an extended coverage. A homogeneous CRM on $\mathbb R_+$ without fixed atoms nor deterministic component is almost surely discrete and takes the form
$$
W = \sum_{j\ge1}\omega_j \,\delta_{\vartheta_j}
$$
where $\{(\omega_j,\vartheta_j)\}_{j\geq 1}$ are the points of a Poisson process on $(0,\infty)\times \mathbb R_+$ with mean measure $\nu(d\omega,d\theta)$. The measure decomposes as $\nu(d\omega,d\theta)=\rho(d\omega)\alpha(d\theta)$ where $\alpha$ is a non-atomic Borel measure on $\mathbb R_+$, called the base measure, such that $\alpha(A)<\infty$ for any bounded Borel set $A$, and $\rho$ is a L\'evy measure on $(0,\infty)$. We write $W\sim \CRM(\alpha,\rho)$. We will also assume in the following that the base measure $\alpha(d\theta)$ is absolutely continuous with respect to the Lebesgue measure and
\begin{equation}
\int_{(0,\infty)\times\mathbb R_+} \rho(dw)\alpha(d\theta)=\infty.
\label{eq:conditionmeasure}
\end{equation}

Let
\begin{equation}
\psi(t)=\int_0^\infty \left \{1-e^{-wt} \right \}\rho(dw)
\end{equation}
be the Laplace exponent and define, for any integer $m\geq 1$ and any $u>0$
\begin{align*}
\kappa(m,u)=\int_0^\infty\omega^me^{-u\omega} \rho(d\omega).
\end{align*}

A remarkable example of CRM is the generalized gamma process \citep{Hougaard1986,Brix1999} (GGP) with mean measure
\[
\nu(d\omega,d\theta) = \frac{1}{\Gamma(1-\sigma_0)}\omega^{-1-\sigma_0}e^{-\zeta_0\omega}d\omega\,\alpha(d\theta)
\]
with $\sigma_0\in(0,1)$ and $\zeta_0\geq 0$ or $\sigma_0\in(-\infty,0]$ and $\zeta_0>0$. We write $W\sim \GGP(\alpha,\sigma_0,\zeta_0)$. The GGP has been a popular model in Bayesian nonparametrics due to its flexibility and attractive conjugacy properties~\citep{James2002,Lijoi2003,Lijoi2007,Caron2017}. It includes several important models as special cases: the gamma process for $\sigma_0=0$, $\zeta_0>0$; the inverse gaussian process for $\sigma_0=1/2$, $\zeta_0>0$ and the stable process for $\sigma_0\in(0,1)$ and $\zeta_0=0$.

\subsection{Exchangeable random partitions}
For an exchangeable partition $\Pi=(\Pi_n)_{n\geq 1}$ of $\mathbb N$ we have, for every $n\geq 1$
\[
\Pr(\Pi_n = \{A_{n,1},\dots,A_{n,K_n}\}) = p(|A_{n,1}|,\dots,|A_{n,K_n}|)
\]
where the sets $A_{n,j}$ are considered in order of appearance and $p$ is a symmetric function of its arguments called \emph{exchangeable partition probability function} (EPPF). Therefore, by definition, the ordering in which we observe the data is not taken into account and the only information that affects the distribution of the random partition is the size of the clusters. For an infinite sequence of random variables $(\theta_{(1)},\theta_{(2)},\ldots)$ taking values in $\mathbb R_+$, let $\Pi(\theta_{(1)},\theta_{(2)},\ldots)$ be the random partition of $\mathbb N$ defined by the equivalence relation ``$i$ and $j$ are in the same cluster" if and only if $\theta_{(i)} = \theta_{(j)}$~\citep{Pitman1995}.  By Kingman's representation theorem~\citep{Kingman1975}, every exchangeable random partition has the same distribution as $\Pi(\theta_{(1)},\theta_{(2)},\ldots)$, where the random variables $\theta_{(1)},\theta_{(2)},\ldots$ are conditionally independent and identically distributed from some random probability distribution $\mathbb P$.

A popular model for this random probability distribution is a normalized completely random measure~\citep{Regazzini2003,James2005,James2009}, defined as $\mathbb P=W/W(\mathbb R_+)$, where $W\sim \CRM(\rho,\alpha)$ and $\alpha(\mathbb R_+)<\infty$. This condition, together with the condition \eqref{eq:conditionmeasure}, ensures that $0<W(\mathbb R_+)<\infty$  almost surely, and the model is thus properly defined.

\subsection{Continuous-time embedding of exchangeable random partitions via Poissonization}

Let $(\theta_{(1)},\theta_{(2)},\ldots)$ be an infinite sequence of random variables taking values in $\mathbb R_+$ and $\Pi(\theta_{(1)},\theta_{(2)},\ldots)$ be the random partition of $\mathbb N$ defined by the equivalence relation ``$i$ and $j$ are in the same cluster" if and only if $\theta_{(i)} = \theta_{(j)}$. Let $0< \tau_{(1)} < \tau_{(2)}<\ldots$ be an infinite sequence of arrival times. Define the continuous-time partition-valued process $(\Pi(t))_{t\geq 0}$ as
\begin{align*}
\Pi(t):=\Pi_{N(t)}=\Pi(\theta_{(1)},\ldots,\theta_{(N(t))}),~~t\geq 0
\end{align*} where $N(t)=\sum_{i} \1{\tau_{(i)}\leq t}$ with  $\1{\tau\leq t}=1$ if $\tau\leq t$ and 0 otherwise. $(\Pi(t))_{t\geq 0}$ defines a continuous-time embedding of the partition $\Pi$. Note that we have
\begin{align*}
\Pi_n=\Pi(\tau_{(n)})~~n\geq 1.
\end{align*}

A remarkable feature of exchangeable random partitions is that they admit a continuous-time embedding via a Poisson process. Poissonization is a classical technique used in combinatorial problems in order to derive analytical properties of exchangeable partitions and urn schemes~\citep{Karlin1967,Gnedin2007,Broderick2012}. Let $\mathbb P$ be a random probability measure on $\mathbb R_+$ associated to the exchangeable random partition $\Pi$. Consider the Poisson process $Q=\{(\tau_i,\theta_i)\}_{i\ge1}$ on $\mathbb{R}_+ \times \mathbb{R}_+$ with mean measure $d\tau\mathbb P(d\theta)$. Let $(\tau_{(i)},\theta_{(i)})_{i\ge1}$ be the sequence of points ordered by their arrival times, that is $\tau_{(1)}<\tau_{(2)}<\ldots$, and let $(\Pi(t))_{t\geq 0}$ be the associated continous-time partition-valued process. Then by construction, $(\Pi(t))_{t\geq 0}$ is a continuous-time embedding of the exchangeable random partition $\Pi$ of $\mathbb N$.

\begin{figure}[t]
  \begin{center}
    \subfigure[Poisson embedding of an exchangeable random partition]{\includegraphics[width=.49\textwidth]{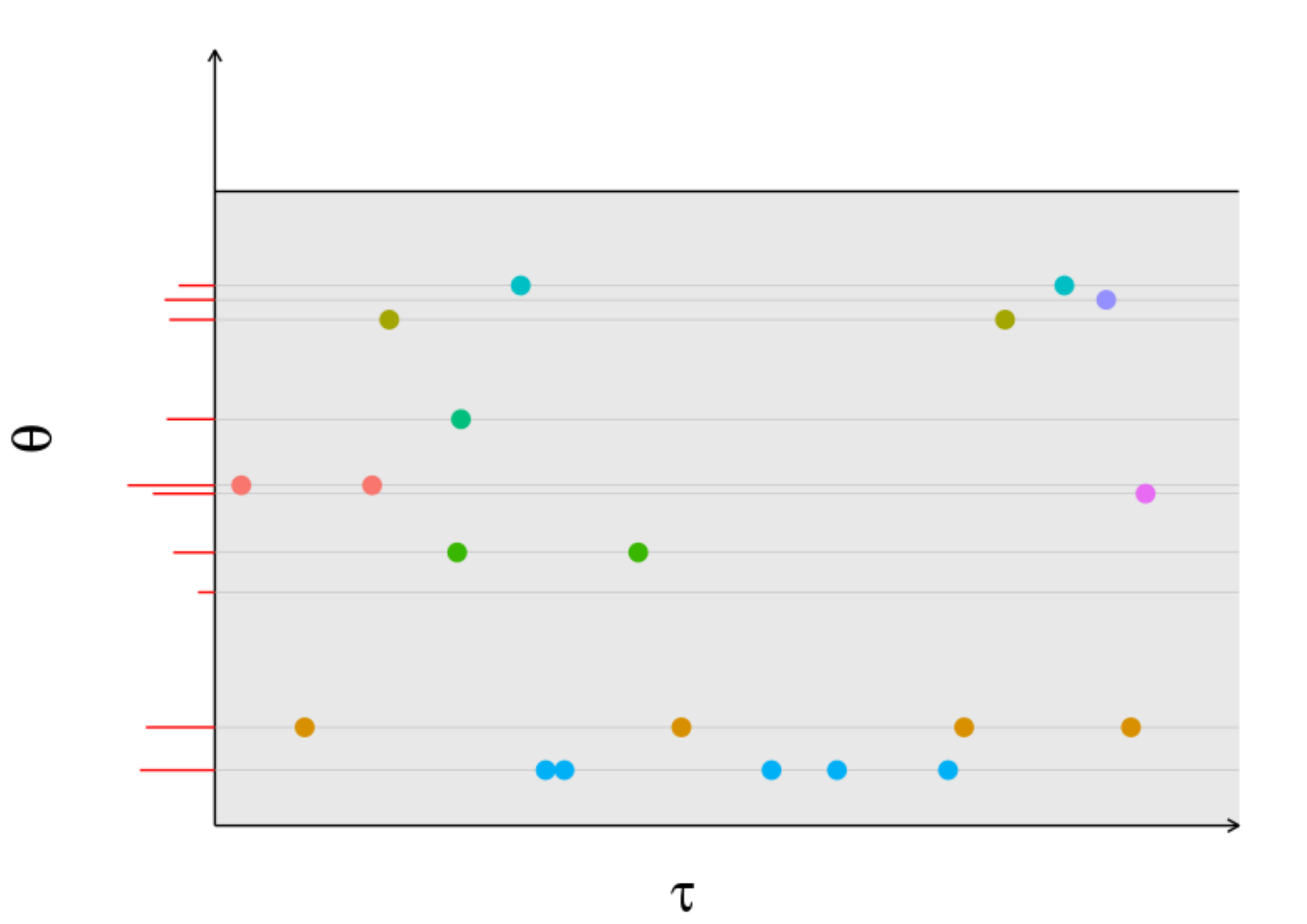}}
    \subfigure[Poisson embedding of the non-exchangeable random partition]{\includegraphics[width=.49\textwidth]{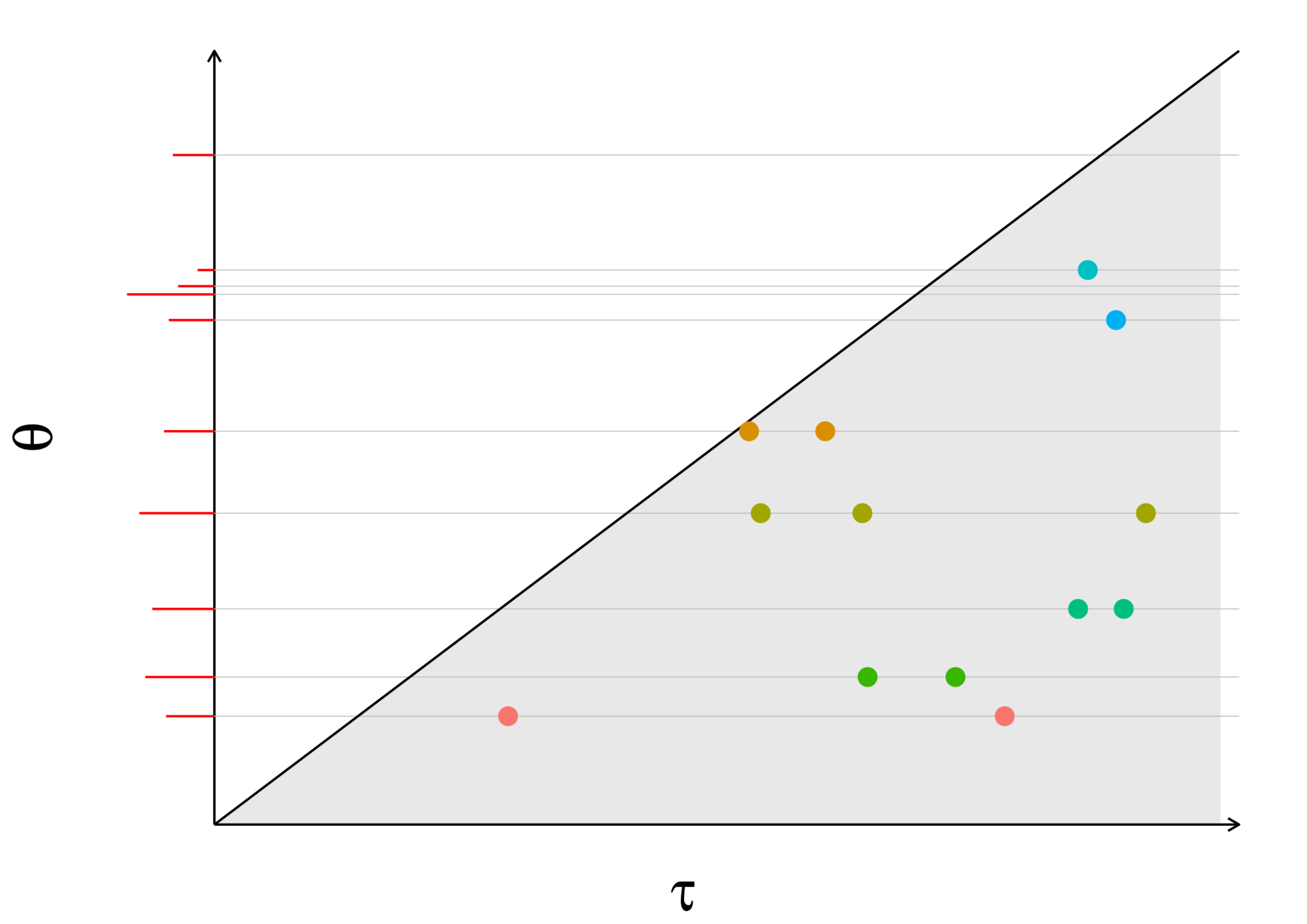}}
    \caption{(a) The Chinese restaurant process obtained via a Poisson embedding. Points $(\tau_i,\theta_i)$ are drawn from a Poisson point process on $\mathbb R_+\times[0,1]$ with mean measure $\mu(d\tau,d\theta) = d\tau\,W(d\theta)\1{\theta\leq1}$ where $W$ is a gamma random measure with base measure $\alpha(d\theta)=\alpha_0 d\theta$ and $\zeta=1$. The red sticks on the $\theta$-axis represent the jumps of the Gamma random measure $W$. Points on the same horizontal line are in the same cluster. The random partition $\Pi(\theta_{(1)},\theta_{(2)},\ldots)$ of $\mathbb N$ induced by the sequence of points $\theta_{(1)},\theta_{(2)},\ldots$ ordered by their arrival times $\tau_{(1)}<\tau_{(2)}<\ldots$ is the CRP. (b) Non-exchangeable random partition model via a Poisson embedding. Points $(\tau_i,\theta_i)$ are drawn from a Poisson point process on $\mathbb R_+\times[0,1]$ with mean measure $\mu(d\tau,d\theta) = d\tau\,W(d\theta)\1{\theta\leq\tau}$ where $W$ is a CRM. The red sticks on the $\theta$-axis represent the jumps of the CRM $W$. Points on the same horizontal line are in the same cluster.}
    \label{fig:poissonembedding}
  \end{center}
\end{figure}

We now focus on the important case where the partition is obtained from a normalized CRM~\citep{Regazzini2003,Nieto-Barajas2004,Lijoi2005,Lijoi2005a,Lijoi2007, James2009}, which includes as a special case the one-parameter CRP. Let $Q=\{(\tau_i,\theta_i)\}_{i\ge1}$ be a Poisson (Cox) process on $\mathbb{R}_+ \times \mathbb{R}_+$ with random mean measure
\[
\mu(d\tau,d\theta) = d\tau\,W(d\theta)\1{\theta\leq1}
\]
where $W\sim\CRM(\alpha,\rho)$  with base measure $\alpha(d\theta) = \alpha_0\,d\theta$ with $\alpha_0>0$ and L\'evy measure $\rho$ satisfying $\int_0^\infty \rho(d\omega)=\infty$. The point process $Q$ has support on $\mathbb{R}_+\times[0,1]$ and we can write it as follows
\begin{align*}
  W &\sim \CRM\,(\alpha,\rho)\\
  Q\,|\,W &\sim \Poi(d\tau\,W(d\theta)\1{\theta\leq1})
\end{align*}
where $\Poi(\mu)$ denotes a Poisson point process with mean $\mu$.  This is illustrated in Figure~\ref{fig:poissonembedding}(a). The distribution of the first $n$ points  given $W$ is
\begin{equation}
\Pr(d\theta_{(1:n)},d\tau_{(1:n)}\mid W)=\left[  \prod_{i=1}^{n}W\left\{d\theta_{(i)}\right \}\right]  e^{-\tau_{(n)}\overline W(1)%
}\1{\tau_{(1)}<\tau_{(2)}<\ldots<\tau_{(n)}}\,d\tau_{(1:n)}\label{eq:likelihood}
\end{equation}
where $\overline W(t)=\int_0^t W(d\theta)=\sum_{i\geq 1} \omega_i \1{\vartheta_i\leq t}$.
Let us denote by $m_{n,j}$ the number of points in the $j$-th cluster after having observed $n$ points, and by $(\theta_i^*)_{i=1,\dots,K_n}$ the unique values in $(\theta_{(1)},\ldots,\theta_{(n)})$, ordered by arrival times. Using the results in
\cite[Proposition 3.1 page 18]{James2002}, we can obtain the expectation of \eqref{eq:likelihood} with respect to the CRM $W$
\[
\Pr(d\theta_{(1:n)},d\tau_{(1:n)})=e^{-\alpha_0\psi(\tau_{(n)})}\alpha_0^{K_n}\left[  \prod_{j=1}^{K_{n}}\kappa(m_{n,j},\tau_{(n)})\,d\theta_j^*\right]
\1{\tau_{(1)}<\ldots<\tau_{(n)}}\,d\tau_{(1:n)}.
\]
Integrating over the arrival times $\tau_{(i)}$ and the cluster locations $\theta_{j}^{\ast}$ gives
\[
\Pr(\Pi_{n})=\int_{0}^{\infty}e^{-\alpha_0\psi(u)}\alpha_0^{K_n}\left[  \prod
  _{j=1}^{K_{n}}\kappa(m_{n,j},u)\right]  \frac{u^{n-1}}{\Gamma(n)}du,
\]
and one recovers the EPPF of the exchangeable random partition associated to a normalized completely random measure \citep[Corollary 6]{Pitman2003}, \cite[Proposition 3]{James2009}.
In the gamma process case, $\kappa(m,u)=\Gamma(m)/(1+u)^m$ and $\psi(u)=\log(1+u)$, yielding
\[
\Pr(\Pi_{n})=\frac{\alpha_0^{K_{n}}}{\Gamma(n)}\left[  \prod
_{j=1}^{K_{n}}\Gamma(m_{n,j})\right]\int_0^\infty \frac{u^{n-1}}{(1+u)^{n+\alpha_0}}du=\\
\frac{\alpha_0^{K_{n}}\Gamma(\alpha_0)}{\Gamma(\alpha_0+n)}\left[  \prod
_{j=1}^{K_{n}}\Gamma(m_{n,j})\right]
\]
which is the EPPF of the Chinese Restaurant process.

\section{Non-exchangeable random partitions}
\label{sec:model}

In this section, we build on the Poissonization idea in order to derive a class of non-exchangeable random partitions. This class is shown to have the microclustering property in the next section. The Cox Process $Q=\{(\tau_i,\theta_i)\}_{i\geq 1}$ that defines our non-exchangeable random partition model has the following random mean measure
\begin{equation}\label{eq:mm}
\mu(d\tau, d\theta) = \1{\theta \leq \tau} W(d\theta) d\tau
\end{equation}
therefore the points will lie under the bisector as shown in Figure~\ref{fig:poissonembedding}(b). The overall model is therefore defined as
\begin{align*}
  W &\sim \CRM\,(\alpha,\rho)\\
  Q\,|\,W &\sim \Poi(\1{\theta\leq\tau}\,d\tau\,W(d\theta))
\end{align*}

The random partition $\Pi=(\Pi_n)_{n\geq 1}$ of $\mathbb N$ is obtained by considering the points $((\tau_{(i)},\theta_{(i)}))_{i\geq 1}$ of the point process $Q$ ordered by their arrival time, and let $\Pi_n=\Pi(\theta_{(1)},\ldots,\theta_{(n)})$ be the partition induced by the first $n$ points for any $n\geq 1$. The random partition model is completely specified by the base measure $\alpha$ and the L\'evy measure $\rho$.

The crucial difference with the previous construction is the support of the point process. In the continuous time version of the CRP, every atom of $W$ in $[0,1]$ was allowed to be chosen at any time, hence the set of potential cluster labels was constant over time. Now, for every fixed $t>0$, all the clusters whose $\theta$ are greater than $t$ cannot be chosen before that time, therefore the set of potential cluster labels increases with $t$, if for instance the base measure $\alpha$ has unbounded support on $\mathbb{R}_+$. This property intuitively leads to both the non-exchangeability of the random partition induced on $\mathbb{N}$, but also to the microclustering property.

Samples from the process are represented in Figure~\ref{fig:pointprocess} when $W\sim \GGP(\alpha,\sigma,1)$ with base measure $\alpha(d\theta)=\xi\theta^{\xi-1}$, for different values of $\xi$ and $\sigma$.

\begin{figure}[h!]
\centering
\includegraphics[width=.43\textwidth]{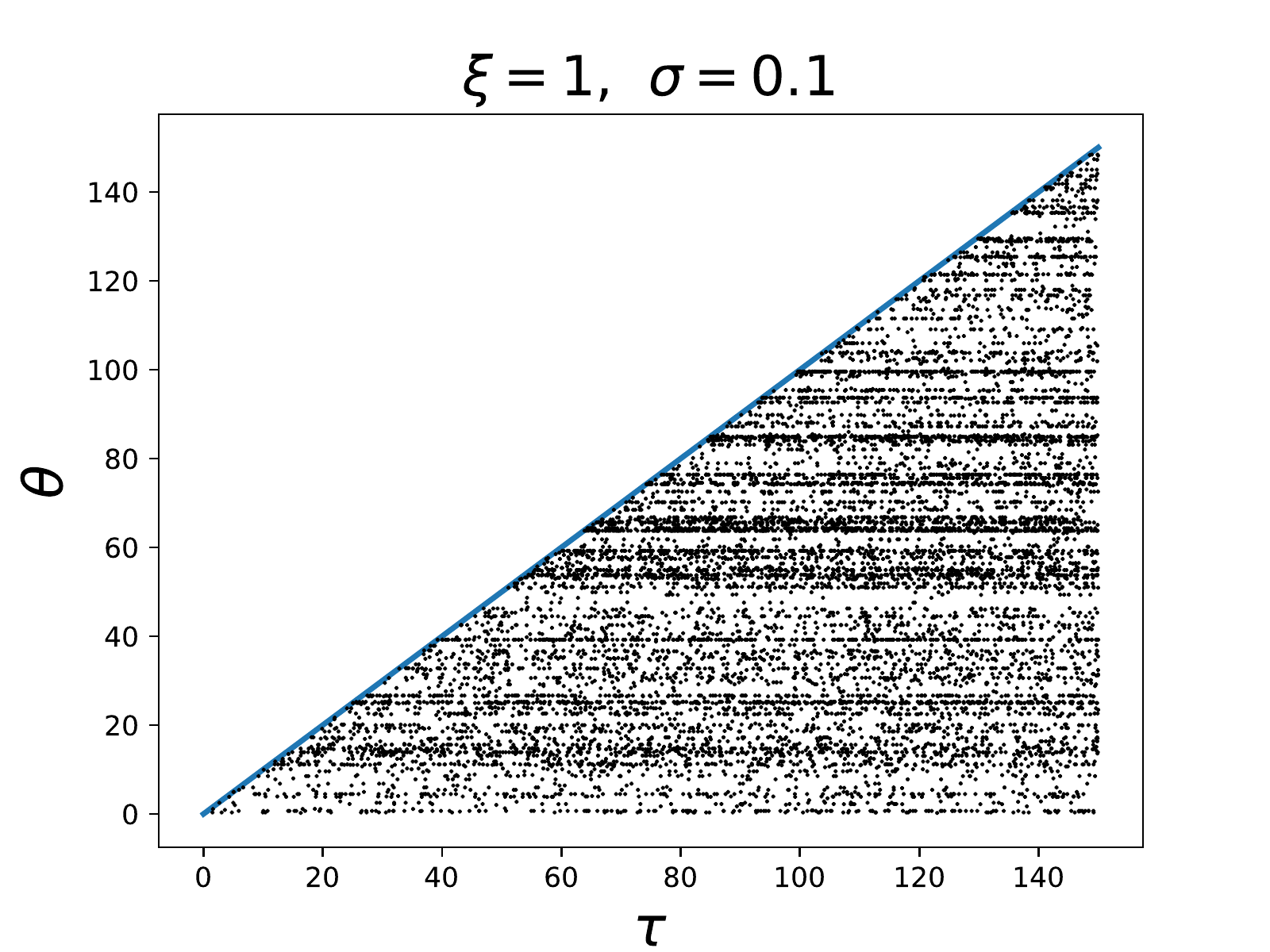} \quad \includegraphics[width=.43\textwidth]{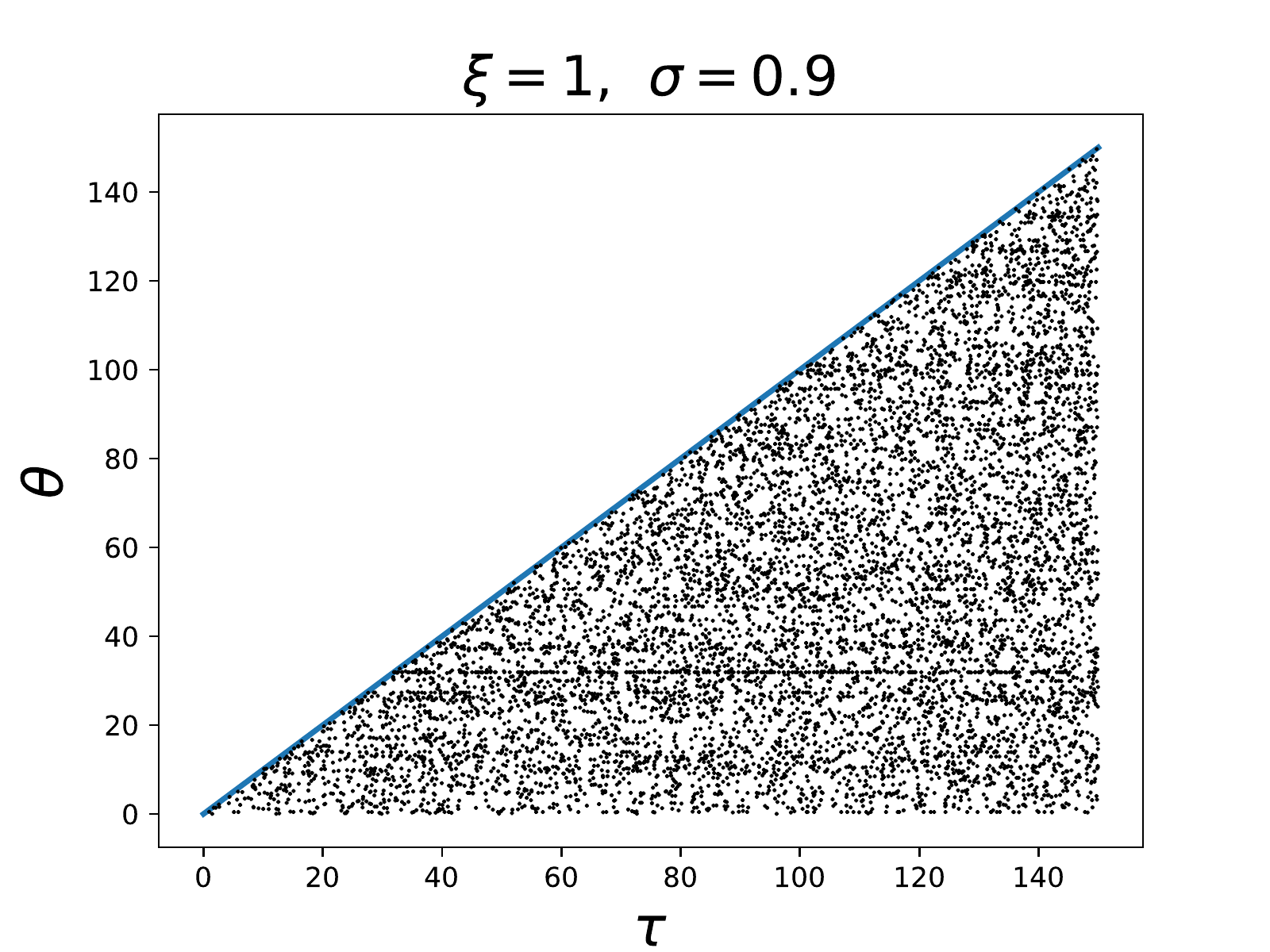} \\
\includegraphics[width=.43\textwidth]{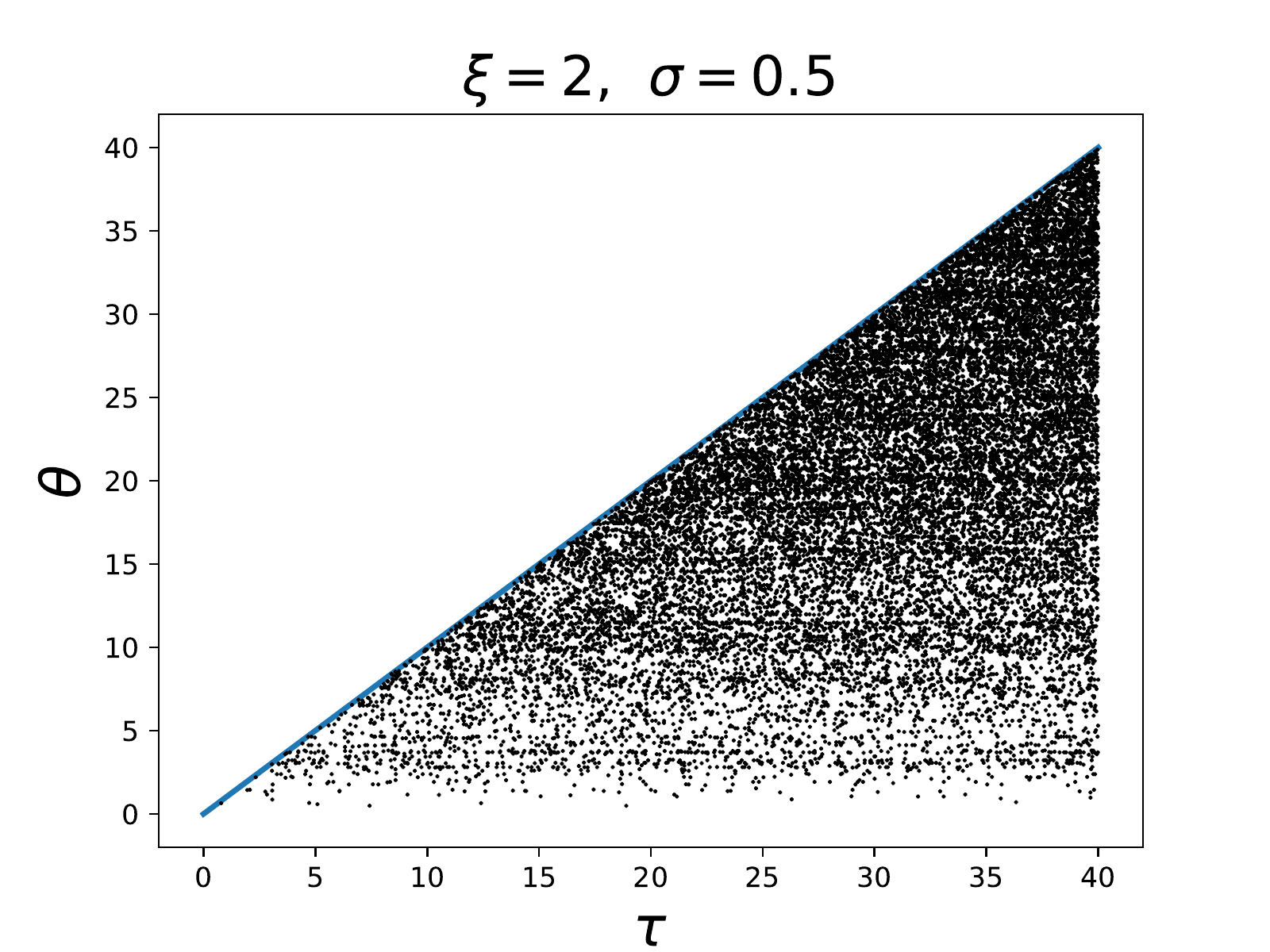} \quad \includegraphics[width=.43\textwidth]{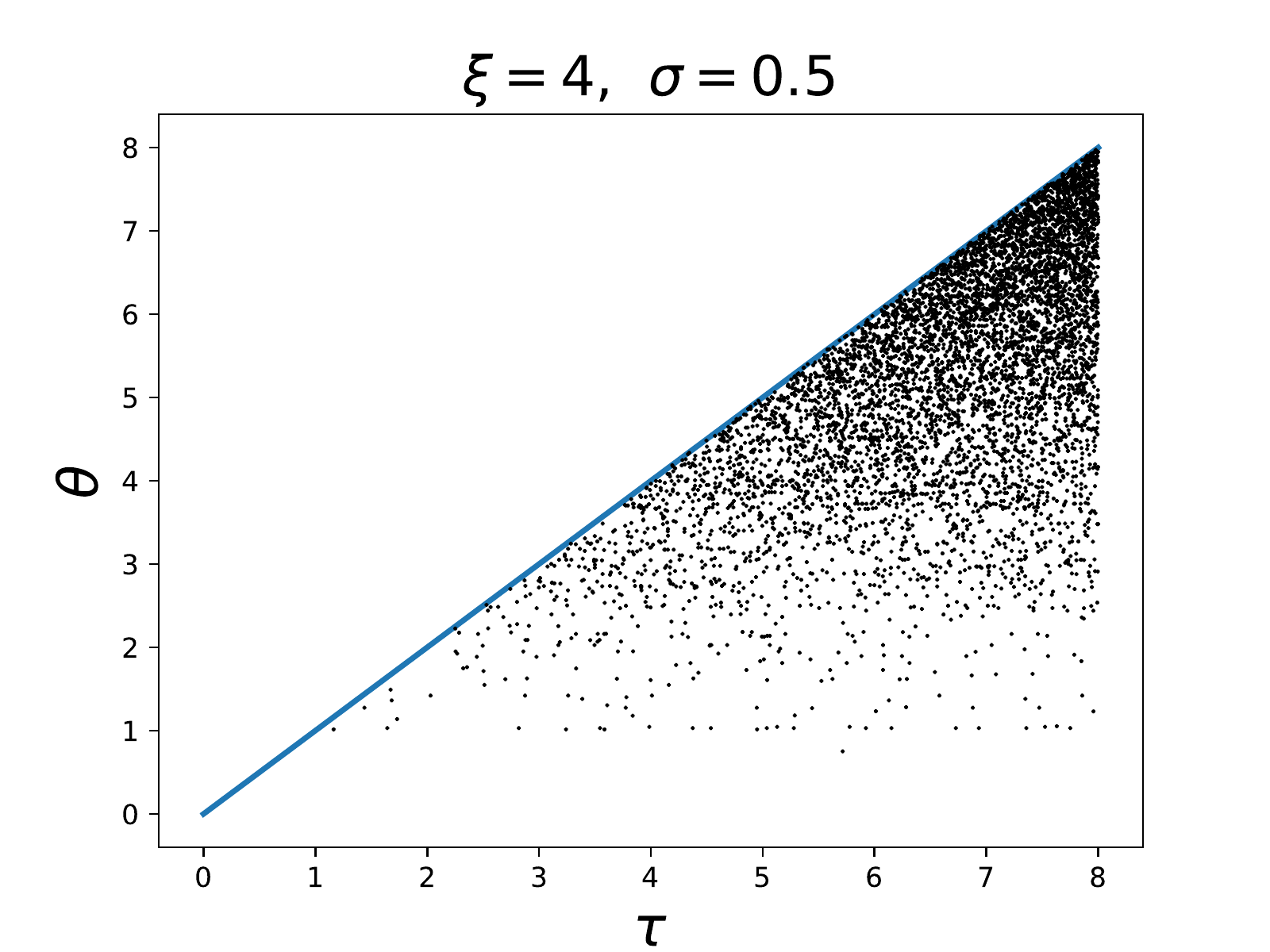}
\caption{Samples from the Cox Process $Q$ where $W\sim GGP(\alpha,\sigma,1)$ with base measure $\alpha(d\theta) = \xi\,\theta^{\xi-1}\,d\theta$, for different values of $\sigma$ and $\xi$.}
\label{fig:pointprocess}
\end{figure}

\begin{proposition}\label{likelihood}
  Let $W$ be a homogeneous CRM$(\alpha,\rho)$ and a point process $Q=\{(\tau_i,\theta_i)\}_{i\geq 1}$ on $\mathbb R_+^2$ with mean measure $\mu(d\tau\,d\theta)=\mathbbm{1}_{\theta\le\tau}d\tau\,W(d\theta)$. Let $((\tau_{(i)},\theta_{(i)}))_{i\geq 1}$ be the sequence of points ordered in time, that is such that $\tau_{(1)}<\tau_{(2)}<\ldots$. For any $n\geq 1$,

\begin{align}
\Pr(d\theta_{(1:n)},d\tau_{(1:n)})&=\left[\prod_{j=1}^{K_{n}}%
\kappa(m_{n,j},\tau_{(n)}-\theta_{j}^{\ast})\alpha(d\theta_{j}^{\ast})\right]
e^{-\int_0^{\tau_{(n)}} \psi(\tau_{(n)}-\theta)\alpha(d\theta)}\nonumber\\
&\times\quad\left[  \prod_{i=1}^{n}\1{\theta_{(i)}\leq\tau_{(i)}}\right]  \1{\tau_{(1)}<\tau_{(2)}<\ldots<\tau_{(n)}}%
\,d\tau_{(1:n)}\label{eq:joint}
\end{align}
where $\theta_{j}^{\ast}$, $j=1,\ldots,K_n$, are the unique values of $(\theta_{(1)},\ldots,\theta_{(n)})$, and $m_{n,j}$ their multiplicities.
\end{proposition}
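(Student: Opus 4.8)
The plan is to condition on the completely random measure $W$, write down the conditional joint law of the first $n$ time-ordered points of the Cox process $Q$, simplify the exponential term, and then integrate out $W$ with the same CRM moment identity used for the exchangeable EPPF in Section~\ref{sec:preliminaries}, i.e.\ \cite[Proposition 3.1]{James2002}; the only change is the function against which $W$ is integrated.

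First I would note that, conditionally on $W$, $Q$ is an inhomogeneous Poisson process on $\mathbb R_+^2$ with mean measure $\1{\theta\le\tau}\,d\tau\,W(d\theta)$, whose restriction to $\{(\tau,\theta):\tau\le t\}$ has finite mass $\int_0^t\overline W(\tau)\,d\tau$ for every $t$, so that a.s.\ only finitely many points have $\tau\le t$ and the time-ordering $\tau_{(1)}<\tau_{(2)}<\cdots$ is well defined. For a Poisson process with $\sigma$-finite mean measure whose points are ordered by a coordinate, the joint law of the first $n$ points equals, with respect to the $n$-fold product of the mean measure restricted to the increasing region, the exponential of minus the mass assigned to the region lying before the $n$-th point --- the elementary ``no further arrival occurs before $\tau_{(n)}$'' computation (as is readily checked against the homogeneous case). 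This gives
\[
\Pr(d\theta_{(1:n)},d\tau_{(1:n)}\mid W)=\Big[\prod_{i=1}^n \1{\theta_{(i)}\le\tau_{(i)}}\,W(d\theta_{(i)})\,d\tau_{(i)}\Big]\,e^{-\int_0^{\tau_{(n)}}\overline W(\tau)\,d\tau}\,\1{\tau_{(1)}<\cdots<\tau_{(n)}}.
\]

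Next I would rewrite the exponent by Fubini, $\int_0^{\tau_{(n)}}\overline W(\tau)\,d\tau=\int_0^\infty f_{\tau_{(n)}}(\theta)\,W(d\theta)$ with $f_t(\theta):=(t-\theta)\1{\theta\le t}$, and collect the factors $\prod_{i=1}^n W(d\theta_{(i)})$ over the $K_n$ distinct values $\theta_1^\ast,\dots,\theta_{K_n}^\ast$ with multiplicities $m_{n,j}$, writing this as $\prod_{j=1}^{K_n} W(\{\theta_j^\ast\})^{m_{n,j}}$ viewed as a measure in $(d\theta_1^\ast,\dots,d\theta_{K_n}^\ast)$; note $\theta_j^\ast\le\tau_{(n)}$ for each $j$ since every $\theta_{(i)}\le\tau_{(i)}\le\tau_{(n)}$ on the event considered. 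Then I would take the expectation over $W\sim\CRM(\alpha,\rho)$: this is exactly the step in the exchangeable derivation, with $\tau_{(n)}\overline W(1)$ replaced by $\int f_{\tau_{(n)}}\,dW$, and the CRM moment identity yields
\[
\mathbb E\Big[\prod_{j=1}^{K_n}W(\{\theta_j^\ast\})^{m_{n,j}}\,e^{-\int f_{\tau_{(n)}}\,dW}\Big]=\Big[\prod_{j=1}^{K_n}\kappa\big(m_{n,j},\,f_{\tau_{(n)}}(\theta_j^\ast)\big)\,\alpha(d\theta_j^\ast)\Big]\,e^{-\int_0^\infty\psi(f_{\tau_{(n)}}(\theta))\,\alpha(d\theta)}.
\]
Since $\psi(0)=0$ the exponent reduces to $\int_0^{\tau_{(n)}}\psi(\tau_{(n)}-\theta)\,\alpha(d\theta)$, and since $\theta_j^\ast\le\tau_{(n)}$ we have $f_{\tau_{(n)}}(\theta_j^\ast)=\tau_{(n)}-\theta_j^\ast$; combining this with the retained indicator factors gives \eqref{eq:joint}.

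The step I expect to be the main obstacle is the bookkeeping of the first two steps rather than any delicate estimate: pinning down the joint density of the first $n$ time-ordered points when the conditional mean measure is neither homogeneous nor a product (so the instantaneous arrival rate is the random nondecreasing function $\overline W(\tau)$), and correctly passing from $\prod_i W(d\theta_{(i)})$ to its multiplicity form before the moment identity can be applied. Once the conditional law is established, the expectation step is a verbatim repetition of the exchangeable computation with $f_{\tau_{(n)}}$ in place of $\tau_{(n)}\1{\theta\le1}$.
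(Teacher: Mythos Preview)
Your proposal is correct and follows essentially the same route as the paper. The only cosmetic difference is that the paper obtains the conditional law $\Pr(d\theta_{(1:n)},d\tau_{(1:n)}\mid W)$ by first projecting onto the $\tau$-axis---so $(\tau_{(i)})_{i\ge1}$ is an inhomogeneous Poisson process on $\mathbb R_+$ with rate $\overline W(\tau)$, giving the marginal of the ordered arrival times---and then sampling each mark as $\Pr(d\theta_{(i)}\mid\tau_{(i)},W)=W(d\theta_{(i)})\1{\theta_{(i)}\le\tau_{(i)}}/\overline W(\tau_{(i)})$; multiplying these two factors yields exactly the formula you wrote directly. From there both arguments rewrite $\int_0^{\tau_{(n)}}\overline W(t)\,dt=W(g_{\tau_{(n)}})$ with $g_t(\theta)=(t-\theta)_+$ (your $f_t$) and invoke \cite[Proposition~3.1]{James2002} to integrate out $W$.
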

\begin{proof}
The derivation is similar to the derivation for the exchangeable case described in the previous section. Given $W$, the set of points $(\tau_i)_{i\geq 1}$ is an inhomogeneous Poisson point process on $\mathbb R_+$ with rate $\overline W(t)$, hence
 \begin{align*}
    \Pr(d\tau_{(1:n)}\,|\,W) = e^{-\int_{0}^{\tau_{(n)}}
\overline W(t)dt}\left[\prod_{i=1}^n \overline W(\tau_{(i)})\right] \1{\tau_{(1)}<\ldots<\tau_{(n)}} d\tau_{(1:n)}.
  \end{align*}
  Given the $n$ time variables, the $\theta_{(i)}$'s are distributed as follows
  \begin{align*}
    \Pr(d\theta_{(1:n)}\,|\,\tau_{(1:n)}, W) =\prod_{i=1}^n\frac{W(d\theta_{(i)})}{\overline W(\tau_{(i)})} \1{\theta_{(i)}<\tau_{(i)}}.
  \end{align*}
It follows that
  \begin{align}
    \Pr(d\theta_{(1:n)},d\tau_{(1:n)}\mid W)&=\left[  \prod_{i=1}^n W(d\theta_{(i)})\1{\theta_{(i)}\leq \tau_{(i)}}\right]  e^{-\int_{0}^{\tau_{(n)}}
\overline W(t)dt} \1{\tau_{(1)}<\ldots<\tau_{(n)}}\,d\tau_{(1:n)}\label{eq:likelihoodnonex}
  \end{align}

where  $\int_{0}^{\tau_{(n)}}\overline W(t)dt=\sum_{j}\omega_{i}(\tau_{(n)}-\vartheta_{j})_{+}=W(g_{\tau_{(n)}})$ with $g_{t}(x)=(t-x)_+=\max(0,t-x)$.
 Using \cite[Proposition 3.1]{James2002}, we can integrate over $W$ to obtain the final result.
\end{proof}

Integrating Equation~\eqref{eq:joint} over the cluster allocations $(\theta_j^*)_{j=1,\dots,K_n}$ and
the arrival times $\tau_{(1:n)}$, we would obtain the distribution of the random partition $\Pi_n$. To the best of our knowledge, there is however no analytical expression for this distribution. We can nonetheless simulate random partitions by using the cluster allocations and arrival times as latent variables. In particular, for the generalized gamma process, we have the following result.

\begin{proposition}\label{predictive}
  Let $W\sim\text{GGP}\,(\alpha,\sigma_0,\zeta_0)$, and
  $Q=\{(\tau_i,\theta_i)\}_{i\ge1}$ be the points of a Cox process with
  mean measure $\mu(d\tau,d\theta) = \1{\theta \leq \tau}
  W(d\theta) d\tau$. Then the predictive distribution of $\tau_{(n)}$ has density

\begin{align*}
p(\tau_{(n)}& \,|\, (\theta_{(i)}, \tau_{(i)})_{i=1,\dots, n-1}) \propto
\left[ \prod_{j=1}^{K_{n-1}} \frac{1}{(\tau_{(n)}-\theta_j^* +\zeta_0)^{m_{n-1,j}-\sigma_0}}\right]e^{-\int_0^{\tau_{(n)}} \psi(\tau_{(n)}-\theta)\alpha(d\theta)}\\ &\times\Bigg(\sum_{j=1}^{K_{n-1}}\frac{m_{n-1,j}-\sigma_0}{\tau_{(n)}-\theta_j^*+\zeta_0}+\int_0^{\tau_{(n)}}\frac{\alpha(\theta)}{(\tau_{(n)} - \theta + \zeta_0)^{1-\sigma_0}}\, d\theta \Bigg)\, \1{\tau_{(n)}>\tau_{(n-1)}}
\end{align*}
where $\psi(t) = \log(1+t/\zeta_0)$ for $\sigma_0=0$, while $\psi(t) = \left((t+\zeta_0)^{\sigma_0}-\zeta_0^{\sigma_0}\right)/\sigma_0$ for $\sigma_0\in(0,1)$. The conditional distribution for $\theta_{(n)}$
  is a convex combination of a discrete distribution and a diffuse one,
 \[
 \Pr(\theta_{(n)}\in d\theta \,|\,
 (\theta_{(i)}, \tau_{(i)})_{i=1,\dots, n-1},\tau_{(n)}) \propto
 H_{\tau_{(n)}}(d\theta) +
 \sum_{i=1}^{K_{n-1}}
 \frac{m_{n-1,i}-\sigma_0}{\tau_{(n)}-\theta_i^* + \zeta_0}
 \delta_{\theta_i^*}(d\theta)
 \]
 where $H_t$ is a diffuse distribution defined as
 $H_t(A) = \int_A
 \frac{\1{\theta\leq t}}
      {(t-\theta +\zeta_0)^{1-\sigma_0}}\,\alpha(d\theta)$
      for every Borel set $A\subset \mathbb{R}_+$.
\end{proposition}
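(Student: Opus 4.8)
The plan is to read off both conditional laws from the joint density \eqref{eq:joint} in Proposition~\ref{likelihood}, by forming the ratio of the density on $n$ points to the density on $n-1$ points. The first step will be to specialize the ingredients $\kappa$ and $\psi$ to the GGP. From $\rho(d\omega)=\frac{1}{\Gamma(1-\sigma_0)}\omega^{-1-\sigma_0}e^{-\zeta_0\omega}d\omega$ a single Gamma integral gives $\kappa(m,u)=\frac{\Gamma(m-\sigma_0)}{\Gamma(1-\sigma_0)}(u+\zeta_0)^{-(m-\sigma_0)}$, whence $\kappa(1,u)=(u+\zeta_0)^{-(1-\sigma_0)}$ and $\kappa(m+1,u)=\frac{m-\sigma_0}{u+\zeta_0}\kappa(m,u)$. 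Differentiating under the integral sign gives $\psi'(t)=\kappa(1,t)=(t+\zeta_0)^{-(1-\sigma_0)}$, and integrating from $0$ with $\psi(0)=0$ produces $\psi(t)=\log(1+t/\zeta_0)$ when $\sigma_0=0$ and $\psi(t)=((t+\zeta_0)^{\sigma_0}-\zeta_0^{\sigma_0})/\sigma_0$ when $\sigma_0\in(0,1)$, as claimed.

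Next I would write $f_n$ for the right-hand side of \eqref{eq:joint}, regarded as a measure in $(\theta_{(1:n)},\tau_{(1:n)})$ with Lebesgue mass on the arrival times and the base measure $\alpha$ carried by each distinct location. Conditioning on the first $n-1$ points, the law of $(\theta_{(n)},\tau_{(n)})$ is the Radon--Nikodym derivative of $f_n$ against $f_{n-1}$, and one splits into two cases. If $\theta_{(n)}$ hits an existing location $\theta_\ell^*$ then $K_n=K_{n-1}$, only the $\ell$-th multiplicity rises by one, all factors $\alpha(d\theta_j^*)$ cancel against those already present in $f_{n-1}$, and by the first step the derivative is $\frac{m_{n-1,\ell}-\sigma_0}{\tau_{(n)}-\theta_\ell^*+\zeta_0}\,g(\tau_{(n)})\,d\tau_{(n)}$ (a Dirac mass at $\theta_\ell^*$); if $\theta_{(n)}$ is new then $K_n=K_{n-1}+1$, one fresh factor $\kappa(1,\tau_{(n)}-\theta_{(n)})\alpha(d\theta_{(n)})=(\tau_{(n)}-\theta_{(n)}+\zeta_0)^{-(1-\sigma_0)}\alpha(d\theta_{(n)})$ appears, and the derivative is $\frac{\1{\theta_{(n)}\le\tau_{(n)}}}{(\tau_{(n)}-\theta_{(n)}+\zeta_0)^{1-\sigma_0}}\,g(\tau_{(n)})\,\alpha(d\theta_{(n)})\,d\tau_{(n)}$. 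In both expressions $g(\tau_{(n)})=\big[\prod_{j=1}^{K_{n-1}}(\tau_{(n)}-\theta_j^*+\zeta_0)^{-(m_{n-1,j}-\sigma_0)}\big]e^{-\int_0^{\tau_{(n)}}\psi(\tau_{(n)}-\theta)\alpha(d\theta)}\1{\tau_{(n)}>\tau_{(n-1)}}$ collects every factor that depends only on $\tau_{(n)}$ and the past, multiplicative constants being discarded. Summing the two cases, the joint conditional law of $(\theta_{(n)},\tau_{(n)})$ equals $g(\tau_{(n)})$ times the mixed measure $H_{\tau_{(n)}}(d\theta)+\sum_i\frac{m_{n-1,i}-\sigma_0}{\tau_{(n)}-\theta_i^*+\zeta_0}\delta_{\theta_i^*}(d\theta)$ times $d\tau_{(n)}$; this is precisely the stated conditional law of $\theta_{(n)}$ given $\tau_{(n)}$, and integrating it over $\theta_{(n)}$ — the diffuse part contributing $H_{\tau_{(n)}}(\mathbb R_+)=\int_0^{\tau_{(n)}}\alpha(\theta)(\tau_{(n)}-\theta+\zeta_0)^{-(1-\sigma_0)}d\theta$ and the atoms $\sum_i(m_{n-1,i}-\sigma_0)/(\tau_{(n)}-\theta_i^*+\zeta_0)$ — multiplies $g(\tau_{(n)})$ to yield the claimed predictive density of $\tau_{(n)}$.

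I expect the main obstacle to be purely the measure-theoretic bookkeeping in the ratio step: one has to identify carefully which product of base measures $f_n$ is a density against, so that in the ``join'' case the $\alpha(d\theta_j^*)$ factors of the past cancel and leave Dirac masses, while in the ``new'' case exactly one fresh factor $\alpha(d\theta_{(n)})$ survives and produces the diffuse component with the stated kernel. Everything else is the elementary Gamma-integral identities for $\kappa$ and $\psi$ from the first step, together with a routine check that the arrival-time and $\theta_{(i)}\le\tau_{(i)}$ indicators collapse correctly (in particular $\theta_\ell^*\le\tau_{(n-1)}<\tau_{(n)}$ automatically for an already-occupied cluster, so that the discrete term carries no indicator).
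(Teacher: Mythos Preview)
Your proposal is correct and follows essentially the same route as the paper: both start from the joint density \eqref{eq:joint}, split according to whether $\theta_{(n)}$ hits an existing atom or is new, specialize $\kappa(m,u)=\frac{\Gamma(m-\sigma_0)}{\Gamma(1-\sigma_0)}(u+\zeta_0)^{-(m-\sigma_0)}$ and use the recursion $\kappa(m+1,u)=\frac{m-\sigma_0}{u+\zeta_0}\kappa(m,u)$, then integrate out $\theta_{(n)}$ to obtain the predictive for $\tau_{(n)}$. The only cosmetic difference is ordering: you specialize to the GGP first and then form the ratio $f_n/f_{n-1}$, whereas the paper writes the full $n$-point joint with the case split, integrates over $\theta_{(n)}$, and only then inserts the GGP form of $\kappa$; the substance is identical.
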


For example, if $W$ is a gamma process ($\sigma_0=0$) and $\alpha(d\theta)=d\theta$, we obtain
\begin{align*}
p(\tau_{(n)} \,|\, (\theta_{(i)}, \tau_{(i)})_{i=1,\dots, n-1}) &\propto
\left[ \prod_{j=1}^{K_{n-1}} \frac{1}{(\tau_{(n)}-\theta_j^* +\zeta_0)^{m_{n-1,j}}}\right]e^{-\tau_{(n)}}\left(1+\frac{\tau_{(n)}}{\xi_0}\right)^{-\tau_{(n)}-\xi_0} \\ &\times\Bigg(\sum_{j=1}^{K_{n-1}}\frac{m_{n-1,j}}{\tau_{(n)}-\theta_j^*+\zeta_0}+\log(1+\tau_{(n)}/\zeta_0) \Bigg)\, \1{\tau_{(n)}>\tau_{(n-1)}},
\end{align*}
and
 \begin{align*}
 \left \{
 \begin{tabular}{ll}
   $\Pr(\theta_{(n)}=\theta_j^*~~\mid (\theta_{(i)}, \tau_{(i)})_{i=1,\dots, n-1},\tau_{(n)})=C_n\frac{m_{n-1,j}}{\tau_{(n)}-\theta_j^* + \zeta_0}$ & for $j=1,\ldots,K_{n-1}$ \\
   $\Pr(\theta_{(n)}\text{ is new}\mid (\theta_{(i)}, \tau_{(i)})_{i=1,\dots, n-1},\tau_{(n)})=C_n\log(1+\tau_{(n)}/\zeta_0)$
 \end{tabular}
 \right .
\end{align*}
 where $C_n$ is the appropriate normalizing constant.

\section{Properties and inference}
\label{sec:properties}

\subsection{Asymptotic properties}

 In this section, denote $X_t\sim Y_t$, $X_t=o(Y_t)$ and $X_t=O(Y_t)$ respectively for $X_t/Y_t\rightarrow 1$, $X_t/Y_t\rightarrow 0$ and $\lim\sup_t X_t/Y_t<\infty$. The notation $X_t\asymp Y_t$ means both $X_t=O(Y_t)$ and $Y_t=O(X_t)$ hold. When $X_t$ and/or $Y_t$ are random variables the asymptotic relation is meant to hold almost surely.\smallskip

The properties we are most interested in are the asymptotic behaviour of the cluster sizes $m_{n,j}$, of the number $K_n$ of clusters and the number $K_{n,r}$ of clusters of size $r$ in the random partition. We show in this section that our non-exchangeable model allows for a sublinear growth of the clusters' sizes while retaining desirable properties for the other quantities. Let us list the assumptions on the CRM $W$ to derive the asymptotic results.
\begin{itemize}
\item[(A1)] $W$ has finite first two moments, that is
\[
\kappa(1,0)=\int_0^\infty\omega\rho(d\omega)<\infty\text{ and } \kappa(2,0)=\int_0^\infty\omega^2\rho(d\omega)<\infty.
\]
\item[(A2)] The L\'evy tail intensity $\bar{\rho}(x)=\int_x^\infty \rho(d\omega)$ is a \emph{regularly varying} function at 0, that is
\[
\overline{\rho}(x)\sim \ell\left(1/x\right)x^{-\sigma}
\]
as $x\rightarrow 0^+$, where $\ell$ is a slowly varying function at infinity and $\sigma\in[0,1]$.
\item[(A3)] The improper cumulative distribution $\overline\alpha(t)=\int_0^t \alpha(dx)$ of the base measure $\alpha$ is a regularly varying function at infinity, that is
$$
\overline\alpha(t)\sim L(t)\,t^\xi
$$
as $t\rightarrow\infty$, where $\xi>0$ and $L$ is a slowly varying function. Assume additionally that the base measure $\alpha$ is dominated by the Lebesgue measure, and admits a continuous and monotone density denoted $\alpha(\theta)$.
\end{itemize}
The moment assumption (A1) excludes the stable process that has infinite first moment. (A2) controls, through the parameter $\sigma$, the power-law behaviour of the proportion of clusters of a given size, while condition (A3) is used to prove the microclustering property and control the sublinear rate of the clusters' size. It is worth noting that the last condition is very mild and allows to pick the density of the base measure from a very large class of functions. Assumptions (A1-A2) are satistied for the GGP with parameters $\sigma_0\in(-\infty,1)$ and $\zeta_0> 0$. In this case, we have $\sigma=\max(\sigma_0,0)$ and $\ell(t)\propto \log t$ for $\sigma=0$ and $\ell(t)$ is constant otherwise.  \bigskip

Recall that $$N(t)=\sum_{i\geq 1} \1{\tau_i\leq t}$$ denotes the number of points of $Q$ such that $\tau_i\leq t$. For each atom $\vartheta_j$, $j\geq 1$, of the CRM $W$, let $$X_j(t)=\sum_{i\geq 1} \1{\tau_i\leq t}\1{\theta_i=\vartheta_j}.$$
For $j\geq 1$, let $$M_j(t)=\sum_{i\geq 1} \1{\tau_i\leq t}\1{\theta_i=\theta^\ast_j}$$ the size of cluster $j$, ordered by appearance, at time $t$. Note that $N(\tau_{(n)})=n$ and $M_j(\tau_{(n)})=m_{n,j}$.

\begin{proposition}\label{thm:clustsize}
  Let $W=\sum_{j\geq 1}\omega_i \delta_{\vartheta_j}$ be a CRM with mean measure $\alpha(d\theta)\rho(d\omega)$ satisfying Assumptions (A1-A3). Let $\{(\tau_i,\theta_i)\}_{i\geq 1}$ be a Poisson point process with mean measure $\mu(d\tau\,d\theta)=\1{\theta\leq \tau}d\tau\,W(d\theta)$. We have, almost surely as $t$ tends to infinity,
  \begin{align*}
  N(t)&\sim \frac{\kappa(1,0)}{\xi+1}t^{\xi +1}L(t)
\end{align*}
and, for $j\geq 1$
  \begin{align*}
 X_j(t)&\sim W(\{\vartheta_j\}) t\\
 M_j(t)&\sim W(\{\theta_j^\ast\})t.
\end{align*}
\end{proposition}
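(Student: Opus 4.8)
\emph{Proof proposal.} The plan is to exploit the two conditioning layers of the model: given the CRM $W$, the process $Q$ is an ordinary (inhomogeneous) Poisson process, so $N(t)$, $X_j(t)$ and $M_j(t)$ are Poisson counts with explicit mean functions, and strong laws for Poisson processes pin them down up to those means; it then remains to identify the means asymptotically via a strong law for $W$ itself and Karamata's theorem, and to unconditionalise. Concretely, since the mean measure of $Q$ is $\1{\theta\le\tau}\,d\tau\,W(d\theta)$ and $\alpha$ is non-atomic (so the $\vartheta_j$ are a.s.\ distinct), the restriction of $Q$ to the horizontal line $\{\theta=\vartheta_j\}$ is, given $W$, a Poisson process on the time axis with intensity $\omega_j\1{\vartheta_j\le\tau}$, hence $X_j(t)\mid W\sim\Poi\big(\omega_j(t-\vartheta_j)_+\big)$; summing over lines, $N(t)\mid W\sim\Poi\big(\Lambda(t)\big)$ with $\Lambda(t)=\int_0^t\bar W(\tau)\,d\tau=\int_{\mathbb R_+}(t-\theta)_+\,W(d\theta)$, which is the quantity $W(g_t)$ already appearing in the proof of Proposition~\ref{likelihood}. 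Writing each such process as a unit-rate Poisson process run along its mean function and using the elementary strong law $s^{-1}\tilde N(s)\to1$ a.s., one obtains, for $W$ outside a null set, that $X_j(t)\sim\omega_j(t-\vartheta_j)_+\sim\omega_j t$ a.s.\ for every $j$ with $\vartheta_j<\infty$ and $N(t)\sim\Lambda(t)$ a.s.; intersecting over the countably many atoms gives $X_j(t)\sim W(\{\vartheta_j\})\,t$ simultaneously in $j$.

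Next I would identify $\Lambda(t)$. The increasing process $\tau\mapsto\bar W(\tau)=W([0,\tau])$ has independent increments and is a subordinator with L\'evy measure $\rho$ run along the deterministic clock $\bar\alpha$; by (A1) this subordinator has finite mean $\kappa(1,0)$, so the strong law for finite-mean L\'evy processes yields $\bar W(\tau)/\bar\alpha(\tau)\to\kappa(1,0)$ a.s.\ (using $\bar\alpha(\tau)\to\infty$ from (A3)). Feeding the a.s.\ equivalence $\bar W(\tau)\sim\kappa(1,0)L(\tau)\tau^{\xi}$ into $\Lambda(t)=\int_0^t\bar W(\tau)\,d\tau$ — a routine squeeze, since the contribution of any fixed initial interval is $O(1)=o(t^{\xi+1})$ — and invoking Karamata's theorem (index $\xi>-1$) gives $\Lambda(t)\sim\frac{\kappa(1,0)}{\xi+1}L(t)\,t^{\xi+1}$ a.s., which combined with $N(t)\sim\Lambda(t)$ is the first assertion. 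For $M_j(t)$ it suffices that cluster $j$ eventually appears: since $\overline\rho(0^+)=\int_0^\infty\rho(d\omega)=\infty$ and $\bar\alpha(\infty)=\infty$ there are infinitely many atoms of $W$ with finite location, and given $W$ each atom $\vartheta_k$ is missed by $Q$ before time $t$ with probability $e^{-\omega_k(t-\vartheta_k)_+}\to0$, so $K_n\to\infty$ a.s.; then for each fixed $j$ the location $\theta_j^\ast$ coincides with some $\vartheta_k$ and $M_j(t)=X_k(t)\sim\omega_k t=W(\{\theta_j^\ast\})t$ by the previous step. The unconditional almost-sure statements then follow by Fubini, the conditional-on-$W$ events above having conditional probability one for $\alpha(d\theta)\rho(d\omega)$-almost every $W$.

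The probabilistic ingredients — the strong law for finite-mean subordinators, the strong law for Poisson processes, and Karamata's theorem — are all classical, so I do not expect any single deep step. The main obstacle is the bookkeeping across the two layers of randomness: ensuring that the Poisson strong laws, which hold conditionally on $W$, combine cleanly with the almost-sure statement about $W$ without circularity, and handling the slowly varying factor $L$ carefully when passing from $\bar W(\tau)$ to its integral $\Lambda(t)$.
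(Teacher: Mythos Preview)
Your argument is correct and takes a genuinely different route from the paper's for the $N(t)$ asymptotic. The paper works unconditionally: by Campbell's theorem $\mathbb{E}[N(t)]=\kappa(1,0)\,\doverline\alpha(t)$, and using \emph{both} moment conditions in (A1) it computes $\var(N(t))\asymp t^{\xi+2}L(t)$, so that $\var(N(t))=O(t^{-a}\mathbb{E}[N(t)]^2)$ for any $0<a<\xi$; a Borel--Cantelli-type concentration lemma (cited from \cite{Caron2017a}) then gives $N(t)\sim\mathbb{E}[N(t)]$ almost surely, and Karamata identifies the mean. You instead separate the two layers of randomness: the Poisson strong law conditional on $W$ yields $N(t)\sim\Lambda(t)$, and the strong law for the finite-mean subordinator $\bar W(\tau)=S(\bar\alpha(\tau))$ identifies $\Lambda(t)$ after an $\varepsilon$-splitting argument and Karamata. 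Your route is a little more self-contained (no external concentration lemma) and in fact only needs $\kappa(1,0)<\infty$, so the second-moment half of (A1) is not used; the paper's route avoids the time-change and the passage from $\bar W\sim\kappa(1,0)\bar\alpha$ to its integral. For $X_j(t)$ and $M_j(t)$ the two arguments coincide: both condition on $W$ and apply the Poisson strong law on the horizontal line through $\vartheta_j$.
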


Proposition \ref{thm:clustsize} implies the microclustering property for the random partition $\Pi_n$: almost surely, $M_j(t)/N(t)\rightarrow 0$ as $t\rightarrow \infty$, hence $m_{n,j}/n\rightarrow 0$ as $n\rightarrow\infty$. In the following corollary, which follows from properties of inverse of regularly varying functions~\citep[Proposition 1.5.15]{Bingham1987} or \citep[Lemma 22]{Gnedin2007}, we obtain exact rates of growth for the cluster sizes.

\begin{corollary}[Microclustering property]
We have
\begin{equation}
t\sim \left (\frac{\xi +1}{\kappa(1,0)}\right )^{1/(\xi+1)} L_{\xi+1}^\ast(N(t)) \,N(t)^{1/(\xi+1)}\label{eq:inversiont}
\end{equation}
almost surely as $t\rightarrow \infty$, where $L_{\xi+1}^*$ is a slowly varying function defined in equation \eqref{eq:Lstar} in the Appendix. It follows that the cluster sizes $m_{n,j}=M_j(\tau_{(n)})$ verify, for any $j\geq 1$, $$m_{n,j}\sim W(\{\theta_j^\ast\}) \left (\frac{\xi +1}{\kappa(1,0)}\right )^{1/(\xi+1)} L_{\xi+1}^\ast(n)\,n^{1/(\xi+1)} $$
almost surely as $n$ tends to infinity. For $j=1$, the distribution of $\omega_1^*=W(\{\theta^*_1\})$ is given by
\begin{align*}
\Pr(d\omega^\ast_1)=\omega_1^\ast \rho(d\omega_1^\ast)\int_0^\infty\int_0^\tau e^{-\omega_1^\ast (\tau-\theta)}
e^{-\int_0^t \psi(\tau-u )\alpha(du) }
\alpha(d\theta)d\tau.
\end{align*}
\end{corollary}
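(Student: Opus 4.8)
The plan is to treat the corollary in three parts: the inversion relation~\eqref{eq:inversiont}, the resulting growth rate of $m_{n,j}$, and the law of $\omega_1^\ast$.

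For~\eqref{eq:inversiont} I would start from Proposition~\ref{thm:clustsize}, which gives $N(t)\sim f(t)$ almost surely with $f(t)=\frac{\kappa(1,0)}{\xi+1}\,t^{\xi+1}L(t)$. The function $f$ is regularly varying at infinity with positive index $\xi+1$ (recall $\xi>0$ in (A3)), so by the theory of asymptotic inverses \citep[Proposition~1.5.15]{Bingham1987} (see also \citep[Lemma~22]{Gnedin2007}) it has an asymptotic inverse $f^{\leftarrow}$, regularly varying of index $1/(\xi+1)$, with $f(f^{\leftarrow}(y))\sim y$ and $f^{\leftarrow}(f(t))\sim t$; moreover $f^{\leftarrow}(y)=\big(\frac{\xi+1}{\kappa(1,0)}\big)^{1/(\xi+1)}L^{\ast}_{\xi+1}(y)\,y^{1/(\xi+1)}$, with $L^{\ast}_{\xi+1}$ the slowly varying function of~\eqref{eq:Lstar}. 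Since $N(t)\to\infty$ almost surely and $f^{\leftarrow}$ is regularly varying, asymptotic equivalence is preserved under the composition (uniform convergence theorem), so $f^{\leftarrow}(N(t))\sim f^{\leftarrow}(f(t))\sim t$ almost surely, which is~\eqref{eq:inversiont}.

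For $m_{n,j}$ I would note that $\tau_{(n)}\uparrow\infty$ almost surely (otherwise $N$ would be infinite at a finite time, contradicting the a.s.\ finiteness of $N(t)\mid W\sim\Poi(\int_0^t\overline W(s)\,ds)$ under (A1)). Evaluating~\eqref{eq:inversiont} at $t=\tau_{(n)}$ and using $N(\tau_{(n)})=n$ gives $\tau_{(n)}\sim\big(\frac{\xi+1}{\kappa(1,0)}\big)^{1/(\xi+1)}L^{\ast}_{\xi+1}(n)\,n^{1/(\xi+1)}$; combining with $M_j(t)\sim W(\{\theta_j^\ast\})\,t$ from Proposition~\ref{thm:clustsize}, evaluated along the sequence $\tau_{(n)}$, yields $m_{n,j}=M_j(\tau_{(n)})\sim W(\{\theta_j^\ast\})\,\tau_{(n)}$ and hence the stated rate.

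For the law of $\omega_1^\ast=W(\{\theta_1^\ast\})$ I would return to the $n=1$ case of the conditional likelihood~\eqref{eq:likelihoodnonex}, namely $\Pr(d\theta_{(1)},d\tau_{(1)}\mid W)=W(d\theta_{(1)})\,\1{\theta_{(1)}\le\tau_{(1)}}\,e^{-\int_0^{\tau_{(1)}}\overline W(s)\,ds}\,d\tau_{(1)}$, and observe that the factor $W(d\theta_{(1)})$ picks out an atom of $W$ whose mass is exactly $\omega_1^\ast$. Applying the Palm/disintegration identity for completely random measures that underlies \citep[Proposition~3.1]{James2002}, in the form $\mathbb E\big[\int h(\theta)\,W(d\theta)\,F(W)\big]=\int\!\!\int h(\theta)\,\omega\,\mathbb E\big[F(W+\omega\delta_\theta)\big]\,\rho(d\omega)\,\alpha(d\theta)$, together with $\int_0^{\tau_{(1)}}\overline{(W+\omega\delta_{\theta_{(1)}})}(s)\,ds=\int_0^{\tau_{(1)}}\overline W(s)\,ds+\omega(\tau_{(1)}-\theta_{(1)})$ on $\{\theta_{(1)}\le\tau_{(1)}\}$, one finds that $(\theta_{(1)},\tau_{(1)},\omega_1^\ast)$ has joint law proportional to $\omega_1^\ast\,\rho(d\omega_1^\ast)\,\alpha(d\theta_{(1)})\,e^{-\omega_1^\ast(\tau_{(1)}-\theta_{(1)})}\,e^{-\int_0^{\tau_{(1)}}\psi(\tau_{(1)}-u)\,\alpha(du)}\,\1{\theta_{(1)}\le\tau_{(1)}}\,d\tau_{(1)}$; integrating out $\theta_{(1)}\in(0,\tau_{(1)})$ and $\tau_{(1)}\in(0,\infty)$ gives the claimed expression for $\Pr(d\omega_1^\ast)$. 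The two asymptotic statements are essentially bookkeeping on top of Proposition~\ref{thm:clustsize} and the cited inversion lemma; the step requiring real care is this last one — one must make sure the Palm formula is applied to the atom singled out by $W(d\theta_{(1)})$, which is what size-biases the law by the extra factor $\omega_1^\ast$, and that the Laplace-exponent term transforms correctly under $W\mapsto W+\omega\delta_{\theta_{(1)}}$, producing the factor $e^{-\omega_1^\ast(\tau_{(1)}-\theta_{(1)})}$.
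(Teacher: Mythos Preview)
Your proposal is correct and follows essentially the same route as the paper: the inversion~\eqref{eq:inversiont} is obtained exactly as you describe, by applying the asymptotic-inverse result for regularly varying functions \citep[Proposition~1.5.15]{Bingham1987} to the relation $N(t)\sim\frac{\kappa(1,0)}{\xi+1}t^{\xi+1}L(t)$ from Proposition~\ref{thm:clustsize}, and the rate for $m_{n,j}$ then follows by evaluating along $t=\tau_{(n)}$. For the law of $\omega_1^\ast$ the paper phrases the derivation as a consequence of Propositions~\ref{likelihood} and~\ref{posterior} (the joint $(\omega_1^\ast,\theta_{(1)},\tau_{(1)})$-density is in fact written out in the proof of Proposition~\ref{thm:clustsize} and then integrated), whereas you invoke the Palm/disintegration identity of \citep[Proposition~3.1]{James2002} directly; since Propositions~\ref{likelihood} and~\ref{posterior} are themselves obtained from that same identity, the two arguments are the same in substance.
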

Note that the growth rate of the cluster sizes only depends on the parameters $\xi$ and $L$ of the base measure $\alpha$, and not on the properties of the L\'evy measure $\rho$.  For example, taking $\overline\alpha(t)=\gamma t^\xi$, with $\xi,\gamma>0$, we have $L(t)=\gamma$ and $m_{n,j}\asymp n^{1/(\xi +1)}$ and the cluster sizes grow at a rate of $n^a$ where $0<a<1$.

We now provide results on the asymptotic rates of the number of clusters and number of clusters of a given size, showing that we can have the same range of behaviour as for exchangeable random partitions.
Let $$K(t)=\sum_{j\geq 1}\1{X_j(t)>0}$$ be the number of different clusters in $\Pi(t)$ at time $t$ and $$K_r(t)=\sum_{j\geq 1}\1{X_j(t)=r}$$ the number of clusters of size $r$ at time $t$.

\begin{proposition}\label{thm:nbclust}
  Let $W=\sum_{j\geq 1}\omega_i \delta_{\vartheta_j}$ be a CRM with mean measure $\alpha(d\theta)\rho(d\omega)$ satisfying Assumptions (A1-A3). Define
   $$\left \{
   \begin{array}{ll}
     \ell_\sigma(t)=\Gamma(1-\sigma)\ell(t) & if  \sigma\in[0,1)\\
     \ell_1(t)=\int_t^\infty y^{-1}\ell(y)dy & if \sigma=1.
   \end{array}  \right .
   $$
   Let $\{(\tau_i,\theta_i)\}_{i\geq 1}$ be a Poisson point process with mean measure $\mu(d\tau\,d\theta)=\1{\theta\leq \tau}d\tau\,W(d\theta)$. We have, almost surely at $t$ tends to infinity,
  \begin{align*}
 K(t) \sim \frac{\Gamma(\sigma+1)\Gamma(\xi+1)}{\Gamma(\sigma+\xi+1)}\,L(t)\ell_\sigma(t)\, t^{\sigma+\xi}.
 \end{align*}
 For $r\geq 1$, if $\sigma=0$ then $K_r(t)=o(K(t))$, if $\sigma\in(0,1)$,
 \begin{align*}
  K_r(t)\sim\frac{\sigma\Gamma(r-\sigma)}{r!\Gamma(1-\sigma)}K(t)\, .
  \end{align*}
    If $\sigma=1$, $K_1(t)\sim K(t)$ and $K_r(t)=o(K(t))$ for all $r\geq 2$.

\end{proposition}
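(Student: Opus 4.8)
The plan is to condition on the completely random measure $W$ and exploit the Poisson structure twice, reducing both $K(t)$ and $K_r(t)$ to explicit integrals that can be analysed by regular variation. Given $W$, the points of $Q$ lying on the line $\theta=\vartheta_j$ arrive, in the time coordinate, as an inhomogeneous Poisson process on $[\vartheta_j,\infty)$ with constant rate $\omega_j$, independently over $j$; hence $X_j(t)\mid W\sim\Poi(\omega_j(t-\vartheta_j)_+)$ independently. Marking each atom $(\omega_j,\vartheta_j)$ of the Poisson process with intensity $\rho(d\omega)\alpha(d\theta)$ by its conditionally Poisson count $X_j(t)$ and restricting the marks to a fixed set, the marking/thinning theorem shows that, for every fixed $t$, the quantities $K(t)$, $K_r(t)$ and $\widetilde K_{\ge m}(t):=\sum_j\mathbbm{1}\{X_j(t)\ge m\}$ are all Poisson random variables, with
\[
\mathbb{E}[K(t)]=\int_0^t\psi(t-\theta)\,\alpha(d\theta),\qquad \mathbb{E}[K_r(t)]=\int_0^t\kappa_r(t-\theta)\,\alpha(d\theta),
\]
where $\kappa_r(s):=\frac{s^r}{r!}\kappa(r,s)$, and likewise $\mathbb{E}[\widetilde K_{\ge m}(t)]=\int_0^t\widetilde\kappa_{\ge m}(t-\theta)\,\alpha(d\theta)$ with $\widetilde\kappa_{\ge m}=\sum_{k\ge m}\kappa_k$.

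Next comes the regular-variation analysis of these integrals. By Karamata's Tauberian theorem \citep{Bingham1987}, Assumption (A2) gives $\psi(s)\sim\ell_\sigma(s)\,s^\sigma$ as $s\to\infty$. Since $\psi^{(r)}(s)=(-1)^{r+1}\kappa(r,s)$ has a constant sign and is monotone, iterating the differentiation theorem for regularly varying functions yields $\kappa(r,s)\sim\frac{\sigma\Gamma(r-\sigma)}{\Gamma(1-\sigma)}\psi(s)s^{-r}$ for $\sigma\in(0,1)$, hence $\kappa_r(s)\sim\frac{\sigma\Gamma(r-\sigma)}{r!\,\Gamma(1-\sigma)}\psi(s)$; for $\sigma=0$ one gets $\kappa_r(s)=o(\psi(s))$ for every $r\ge1$, and for $\sigma=1$ one gets $\kappa_1(s)\sim\psi(s)$ while $\kappa_r(s)=o(\psi(s))$ for $r\ge2$. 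In particular $\widetilde\kappa_{\ge m}(s)\sim c_{\ge m}\,\psi(s)$ with $c_{\ge m}=\sum_{k\ge m}\frac{\sigma\Gamma(k-\sigma)}{k!\,\Gamma(1-\sigma)}=\sum_{k\ge m}\binom{\sigma}{k}(-1)^{k+1}\in(0,1]$, a convergent series (its terms are $\asymp k^{-1-\sigma}$), with $c_{\ge1}=1$ consistent with $\widetilde K_{\ge1}=K$. Evaluating the convolution on pure powers $\psi(s)=cs^\sigma$, $\overline\alpha(t)=at^\xi$ produces the Beta integral $ca\,\xi\,t^{\sigma+\xi}B(\xi,\sigma+1)=ca\,t^{\sigma+\xi}\frac{\Gamma(\sigma+1)\Gamma(\xi+1)}{\Gamma(\sigma+\xi+1)}$; a Potter-bound/uniform-convergence argument under (A3) upgrades this to
\[
\mathbb{E}[K(t)]\sim\frac{\Gamma(\sigma+1)\Gamma(\xi+1)}{\Gamma(\sigma+\xi+1)}\,L(t)\,\ell_\sigma(t)\,t^{\sigma+\xi},
\]
and, since $\kappa_r$ and $\widetilde\kappa_{\ge m}$ are asymptotically constant multiples of $\psi$, $\mathbb{E}[K_r(t)]\sim\frac{\sigma\Gamma(r-\sigma)}{r!\,\Gamma(1-\sigma)}\mathbb{E}[K(t)]$ for $\sigma\in(0,1)$, and $\mathbb{E}[K_r(t)]=o(\mathbb{E}[K(t)])$ in the boundary cases.

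Finally I would pass from first moments to almost-sure statements. Clusters are never destroyed, so $t\mapsto K(t)$ and $t\mapsto\widetilde K_{\ge m}(t)$ are nondecreasing; each is marginally Poisson with a mean that is regularly varying of positive index $\sigma+\xi$ (using (A1)) and tends to infinity (cf.\ Proposition~\ref{thm:clustsize}). The usual recipe — sample $t$ along a sequence where the mean grows geometrically, combine a Chernoff bound with Borel--Cantelli, and fill the gaps by monotonicity while letting the geometric ratio tend to $1$ — then gives $K(t)\sim\mathbb{E}[K(t)]$ and $\widetilde K_{\ge m}(t)\sim\mathbb{E}[\widetilde K_{\ge m}(t)]$ almost surely. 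Since $K_r$ is not monotone I would write $K_r(t)=\widetilde K_{\ge r}(t)-\widetilde K_{\ge r+1}(t)$: for $\sigma\in(0,1)$ this gives $K_r(t)\sim(c_{\ge r}-c_{\ge r+1})\,\mathbb{E}[K(t)]=\frac{\sigma\Gamma(r-\sigma)}{r!\,\Gamma(1-\sigma)}K(t)$ a.s.; for $\sigma=1$, $\widetilde K_{\ge2}$ is monotone with mean $o(\mathbb{E}[K(t)])$, hence $\widetilde K_{\ge2}(t)=o(K(t))$ a.s.\ (whether or not that mean stays bounded), so $K_1(t)=K(t)-\widetilde K_{\ge2}(t)\sim K(t)$ and $K_r(t)\le\widetilde K_{\ge2}(t)=o(K(t))$ for $r\ge2$; for $\sigma=0$, both $\widetilde K_{\ge r}(t)$ and $\widetilde K_{\ge r+1}(t)$ are $\sim\mathbb{E}[K(t)]$ with fluctuations $o(\mathbb{E}[K(t)])$ while their mean difference $\mathbb{E}[K_r(t)]$ is $o(\mathbb{E}[K(t)])$, so $K_r(t)=o(K(t))$.

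The hard part will be the regular-variation bookkeeping in the second step: establishing $\kappa_r(s)\sim\frac{\sigma\Gamma(r-\sigma)}{r!\,\Gamma(1-\sigma)}\psi(s)$ with the correct sign and constant via the differentiation theorem for higher derivatives of $\psi$, evaluating $\int_0^t(\,\cdot\,)(t-\theta)\,\alpha(d\theta)$ uniformly in the slowly varying factors $L$ and $\ell_\sigma$, and above all checking that the relations $\kappa_r=o(\psi)$ and $\widetilde\kappa_{\ge m}\sim c_{\ge m}\psi$ hold uniformly enough — e.g.\ under a Potter envelope — to be carried inside the integral, together with the separate treatment of the degenerate regimes $\sigma\in\{0,1\}$. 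The probabilistic content, namely the Poisson marking identity and the monotone Borel--Cantelli interpolation, is by comparison routine.
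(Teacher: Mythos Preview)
Your proposal is correct and follows essentially the same route as the paper: marking/thinning to identify $K(t)$, $K_r(t)$ and the tail counts $\widetilde K_{\ge m}(t)$ as Poisson, reducing their means to the convolutions $\int_0^t\psi(t-\theta)\,\alpha(d\theta)$ and $\int_0^t\frac{(t-\theta)^r}{r!}\kappa(r,t-\theta)\,\alpha(d\theta)$, applying Tauberian results for $\psi$ and $\kappa(r,\cdot)$ together with a Potter-bound convolution lemma (the paper's Lemma~\ref{lg}), and then upgrading to almost-sure statements via monotonicity of $K$ and of $\widetilde K_{\ge m}$ plus Borel--Cantelli, recovering $K_r$ by differencing. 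The paper packages the $\kappa(r,\cdot)$ asymptotics as a quoted lemma (Lemma~\ref{lemma:tauberian}, taken from \cite{Gnedin2007}) rather than deriving them from the differentiation theorem, and states the convolution estimate as a standalone lemma, but the content and the probabilistic skeleton are the same as yours.
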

By noting that $K_n=K(\tau_{(n)})$ and $K_{n,r}=K_r(\tau_{(n)})$, we can combine the results of Proposition~\ref{thm:nbclust} and Equation~\eqref{eq:inversiont} to obtain asymptotic expressions for the number $K_n$ of clusters and the number $K_{n,j}$ of clusters of size $j$ in $\Pi_n$.
  \begin{corollary}\label{corollarypowerlaw}
  We have, almost surely as $n$ tends to infinity,
  \begin{align*}
 K_n \sim \,\widetilde{\ell}(n)\, n^{(\sigma+\xi)/(\xi+1)}
 \end{align*}
 where $\widetilde \ell$ is a slowly varying function defined in equation \eqref{eq:elltilde} in the Appendix.

 For $r\geq 1$, if $\sigma=0$ then $K_{n,r}=o(K_n)$; if $\sigma\in(0,1)$,
 \begin{align*}
  \frac{K_{n,r}}{K_n}\rightarrow\frac{\sigma\Gamma(r-\sigma)}{r!\Gamma(1-\sigma)}\, .
  \end{align*}
  This corresponds to a power-law behaviour for the proportion of clusters of size $r$, as $$\frac{\sigma\Gamma(r-\sigma)}{r!\Gamma(1-\sigma)}\asymp \frac{1}{j^{1+\sigma}}$$ for large $j$. If $\sigma=1$, $K_{n,1}\sim K_n$ and $K_{n,r}=o(K_n)$ for all $r\geq 2$. In this case, the proportion of clusters of size 1 tends to one almost surely.
 \end{corollary}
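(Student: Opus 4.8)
The plan is to transfer the continuous-time asymptotics of Proposition~\ref{thm:nbclust} to the discrete index $n$ through the random time change $t=\tau_{(n)}$, exactly as the Microclustering Corollary does for the cluster sizes. First I would record the elementary fact that $\tau_{(n)}\to\infty$ almost surely: for every fixed $t$, $N(t)$ is, conditionally on $W$, Poisson with mean $W(g_t)\le t\,\overline W(t)$, and $\overline W(t)<\infty$ a.s.\ under (A1) since $E[\overline W(t)]=\kappa(1,0)\,\overline\alpha(t)<\infty$; hence $N(t)<\infty$ a.s.\ for every $t$, which forces the increasing sequence $(\tau_{(n)})_{n\ge1}$ to diverge. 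Consequently all the almost-sure equivalences used below hold simultaneously on a single event of probability one.

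For the number of clusters I would combine $K_n=K(\tau_{(n)})$ with $K(t)\sim C\,L(t)\,\ell_\sigma(t)\,t^{\sigma+\xi}$ from Proposition~\ref{thm:nbclust}, where $C=\Gamma(\sigma+1)\Gamma(\xi+1)/\Gamma(\sigma+\xi+1)$; since $\tau_{(n)}\to\infty$ this already gives $K_n\sim C\,L(\tau_{(n)})\,\ell_\sigma(\tau_{(n)})\,\tau_{(n)}^{\sigma+\xi}$. I would then substitute the inversion $\tau_{(n)}\sim c\,L^\ast_{\xi+1}(n)\,n^{1/(\xi+1)}$ with $c=\big((\xi+1)/\kappa(1,0)\big)^{1/(\xi+1)}$, which is precisely \eqref{eq:inversiont} evaluated at $t=\tau_{(n)}$ (so that $N(\tau_{(n)})=n$). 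The power factor then satisfies $\tau_{(n)}^{\sigma+\xi}\sim c^{\sigma+\xi}\,L^\ast_{\xi+1}(n)^{\sigma+\xi}\,n^{(\sigma+\xi)/(\xi+1)}$, while the uniform convergence theorem for slowly varying functions (e.g.\ \citep[Theorem 1.5.2]{Bingham1987}) gives $L(\tau_{(n)})\sim L\big(c\,L^\ast_{\xi+1}(n)\,n^{1/(\xi+1)}\big)$ and likewise for $\ell_\sigma$; since the composition of a slowly varying function with a regularly varying function of positive index $1/(\xi+1)$ is again slowly varying, $L(\tau_{(n)})$ and $\ell_\sigma(\tau_{(n)})$ are slowly varying in $n$. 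Collecting every slowly varying factor into one function $\widetilde\ell$ — the explicit product displayed in \eqref{eq:elltilde} — yields $K_n\sim\widetilde\ell(n)\,n^{(\sigma+\xi)/(\xi+1)}$.

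The statements on $K_{n,r}$ need no further argument: writing $K_{n,r}=K_r(\tau_{(n)})$ and $K_n=K(\tau_{(n)})$ and using $\tau_{(n)}\to\infty$ a.s., Proposition~\ref{thm:nbclust} yields directly $K_{n,r}=o(K_n)$ when $\sigma=0$, $K_{n,r}/K_n\to\sigma\Gamma(r-\sigma)/(r!\,\Gamma(1-\sigma))$ when $\sigma\in(0,1)$, and $K_{n,1}\sim K_n$ with $K_{n,r}=o(K_n)$ for $r\ge2$ when $\sigma=1$ (so that the proportion of singletons tends to one). The power-law claim is then the elementary asymptotic $\sigma\Gamma(r-\sigma)/(r!\,\Gamma(1-\sigma))=\sigma\,\Gamma(r-\sigma)/\big(\Gamma(1-\sigma)\,\Gamma(r+1)\big)\sim r^{-1-\sigma}$ as $r\to\infty$, a direct consequence of Stirling's formula.

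I expect the only delicate point to be in the second paragraph: tracking the multiplicative constants through the time change and justifying that the various $\sim$ relations — which here are statements about slowly/regularly varying functions evaluated along the \emph{random} sequence $\tau_{(n)}$, itself known only up to asymptotic equivalence — may be composed legitimately. This is exactly what the uniform convergence theorem is for, and it is the step that warrants a little care; everything else is bookkeeping.
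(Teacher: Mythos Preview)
Your proposal is correct and follows exactly the route the paper takes: the corollary is obtained by composing the continuous-time asymptotics of Proposition~\ref{thm:nbclust} with the inversion \eqref{eq:inversiont} along $t=\tau_{(n)}$, and the paper's own justification is precisely the one-line remark preceding the corollary together with the definition~\eqref{eq:elltilde} of $\widetilde\ell$. You have simply made explicit the details (divergence of $\tau_{(n)}$, use of the uniform convergence theorem to compose the $\sim$ relations) that the paper leaves implicit.
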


\begin{example}
If $W\sim\text{GGP}(\alpha,\sigma,1)$ with $\sigma\in(0,1)$ and base measure $\alpha(d\theta) = \gamma \xi \theta^{\xi-1}d\theta$ with $\xi,\gamma>0$  we have $\ell(t) =\frac{1}{\sigma\,\Gamma(1-\sigma)}$ and $L(t) = \gamma $, therefore
\[
K_n \sim \frac{\Gamma(\sigma+1)\Gamma(\xi+1)}{\sigma\Gamma(\sigma+\xi+1)} (\xi+1)^{\frac{\sigma+\xi}{1+\xi}}\,
\gamma^{1-\frac{\sigma+\xi}{1+\xi}}\,
n^{\frac{\sigma+\xi}{1+\xi}}
\]
and for all $r\geq 1$
 \begin{align*}
  \frac{K_{n,r}}{K_n}\rightarrow\frac{\sigma\Gamma(r-\sigma)}{r!\Gamma(1-\sigma)}
  \end{align*}
almost surely as $n$ tends to infinity. This power-law behavior is illustrated on Figure~\ref{fig:powerlaw}.

\end{example}

It is worth noting that although the asymptotic behaviour of the number of clusters and the number of clusters of a given size depend also on the base measure $\alpha$, the power-law exponent in the proportion of clusters of a given size is solely tuned by the L\'{e}vy measure $\rho$ through the parameter $\sigma$.

\begin{figure}
\begin{center}
\includegraphics[scale=.2]{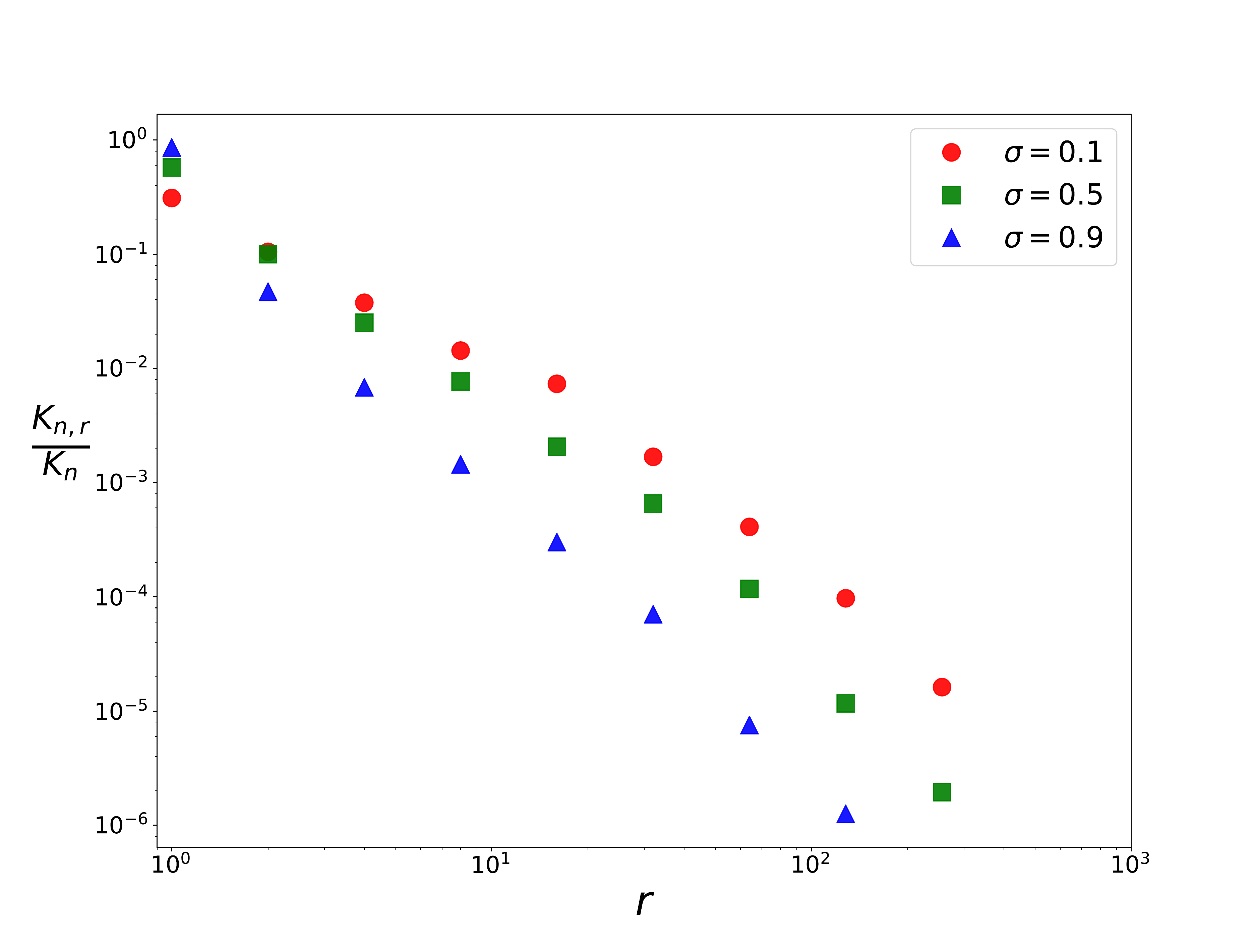}
\end{center}
\caption{Log-log plot of the proportions of clusters of given size for the GGP with $\alpha(d\theta)=d\theta$, $\zeta_0=1$, $\sigma=0.1,0.5,0.9$ and sample size 10000.}
\label{fig:powerlaw}
\end{figure}

\subsection{Inference} \label{inference}
\subsubsection{Posterior characterization}
Assuming we observe the first $n$ time-ordered points $(\tau_{(i)},\theta_{(i)})_{i=1,\ldots,n}$ from the Cox process $Q$, we want to characterize the conditional  distribution of the CRM $W$ given the time-ordered observations. The following posterior characterization follows from \citep[Proposition 3.1 page 18]{James2002}.

%We present the result in the case $W\sim\text{GGP}(\alpha,\sigma,\zeta)$, although the same proof can be trivially extended to inhomogeneous CRMs.
\begin{proposition} \label{posterior}
Given the first $n$ time-ordered observations $(\tau_{(i)},\theta_{(i)})_{i=1,\ldots,n}$ from the Cox process $Q$, with unique cluster labels $\theta^*_1,\dots,\theta^*_{K_n}$, the conditional distribution of the CRM $W$ is given by
\[
W' + \sum_{j=1}^{K_n}\omega^*_j\delta_{\theta^*_j}
\]
where the random positive weights $(\omega^*_1,\ldots,\omega^*_{K_n})$ are independent of the random measure $W'$. $W'$ is an inhomogeneous CRM with mean measure $\nu'(d\omega,d\theta) = e^{-\omega (\tau_{(n)}-\theta)_+}\rho(d\omega)\alpha(d\theta)$. The masses of the fixed atoms are conditionally independent with density
\[
p(\omega^*_j\mid \text{rest})\propto \rho(d\omega^*_j)\,\omega_j^{*m_{n,j}}e^{-\omega^*_j (\tau_{(n)}-\theta_j^*)} .
\]
In particular, when $W$ is a generalized gamma process the masses are conditionally gamma distributed
\[
\omega^*_j\mid \text{rest}\sim \Gam(m_{n,j}-\sigma_0, \zeta_0+\tau_{(n)}-\theta_j^*) .
\]
\end{proposition}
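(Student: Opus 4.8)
The plan is to read off the posterior from the same disintegration formula of \cite[Proposition 3.1]{James2002} that was used in the proof of Proposition~\ref{likelihood} to compute the marginal; the posterior is simply the complementary conditional factor in that disintegration.

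First I would recall the conditional likelihood \eqref{eq:likelihoodnonex}: given $W$,
\[
\Pr(d\theta_{(1:n)},d\tau_{(1:n)}\mid W)=\left[\prod_{i=1}^n W(d\theta_{(i)})\1{\theta_{(i)}\leq\tau_{(i)}}\right]e^{-W(g_{\tau_{(n)}})}\1{\tau_{(1)}<\ldots<\tau_{(n)}}\,d\tau_{(1:n)},
\]
with $g_t(x)=(t-x)_+$ and $W(g_{\tau_{(n)}})=\sum_j\omega_j(\tau_{(n)}-\vartheta_j)_+\le\tau_{(n)}\overline W(\tau_{(n)})<\infty$ almost surely under (A1). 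Grouping the $\theta_{(i)}$ into their $K_n$ distinct values $\theta_1^*,\dots,\theta_{K_n}^*$ with multiplicities $m_{n,1},\dots,m_{n,K_n}$, the dependence of this expression on $W$ is exactly a ``product over observed atoms times an exponential Laplace functional'', $\prod_{j=1}^{K_n}W(\{\theta_j^*\})^{m_{n,j}}\,e^{-W(g_{\tau_{(n)}})}$, which is the form to which the James calculus applies, the test function $g_{\tau_{(n)}}$ being nonnegative and the locations $\theta_j^*$ being a.s. distinct since $\alpha$ is non-atomic.

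Second, I would invoke \cite[Proposition 3.1]{James2002}: the law of $W$ conditional on the data disintegrates as a sum of an ``ordinary'' part and fixed atoms at the observed locations, all mutually independent. The ordinary part $W'$ is an inhomogeneous CRM whose Lévy intensity is the prior intensity exponentially tilted by $g_{\tau_{(n)}}$, namely $\nu'(d\omega,d\theta)=e^{-\omega g_{\tau_{(n)}}(\theta)}\rho(d\omega)\alpha(d\theta)=e^{-\omega(\tau_{(n)}-\theta)_+}\rho(d\omega)\alpha(d\theta)$; each fixed atom sits at a $\theta_j^*$ with mass having density proportional to $\omega^{m_{n,j}}e^{-\omega g_{\tau_{(n)}}(\theta_j^*)}\rho(d\omega)$. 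Since every $\theta_j^*$ equals some $\theta_{(i)}$ with $\theta_{(i)}\le\tau_{(i)}\le\tau_{(n)}$, the positive part can be dropped to give $e^{-\omega(\tau_{(n)}-\theta_j^*)}$, which is the stated atom density; finiteness of the normalizing constants of these densities is precisely the factorization $\prod_j\kappa(m_{n,j},\tau_{(n)}-\theta_j^*)$ appearing as the marginal in Proposition~\ref{likelihood}, so nothing new has to be checked.

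Finally I would specialize to $W\sim\GGP(\alpha,\sigma_0,\zeta_0)$: substituting $\rho(d\omega)=\Gamma(1-\sigma_0)^{-1}\omega^{-1-\sigma_0}e^{-\zeta_0\omega}d\omega$ into the atom density yields $p(\omega_j^*\mid\text{rest})\propto\omega^{m_{n,j}-\sigma_0-1}e^{-(\zeta_0+\tau_{(n)}-\theta_j^*)\omega}$, the kernel of a $\Gam(m_{n,j}-\sigma_0,\zeta_0+\tau_{(n)}-\theta_j^*)$ density, which is proper because $m_{n,j}\ge1>\sigma_0$ and $\zeta_0+\tau_{(n)}-\theta_j^*>0$. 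The only delicate point is the bookkeeping needed to cast the likelihood in the exact product-of-atoms-times-exponential form required by \cite[Proposition 3.1]{James2002} and to verify the mild integrability; there is no genuinely new estimate, the analytic content being inherited from Proposition~\ref{likelihood}.
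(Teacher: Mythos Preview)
Your proposal is correct and follows exactly the route the paper takes: the paper simply states that the posterior characterization ``follows from \cite[Proposition 3.1 page 18]{James2002}'', and you have spelled out precisely how that citation applies, recasting the likelihood \eqref{eq:likelihoodnonex} in the product-of-atoms-times-Laplace-functional form and reading off the tilted L\'evy intensity and fixed-atom densities. One very minor remark: you invoke Assumption~(A1) to justify $W(g_{\tau_{(n)}})<\infty$, but this is not needed---finiteness of $\overline W(\tau_{(n)})$ on a bounded interval is part of the definition of a CRM (it only requires $\int_0^\infty(1\wedge\omega)\rho(d\omega)<\infty$), and the posterior result holds without the moment assumptions of Section~\ref{sec:properties}.
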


%It should be noticed that the posterior distribution of the CRM depends on the arrival times just through $\tau_{(n)}$ that divides the domain of the posterior in two regions.
\subsubsection{Parameter estimation and prediction}
We consider the CRM with base measure $\alpha(d\theta) = \xi\,\theta^{\xi-1}d\theta$ and generalized gamma L\'evy measure with parameters $\sigma_0$ and $\zeta_0$. The set of parameters is therefore $\eta= (\xi,\sigma_0,\zeta_0)$. Having observed a partition $\Pi_n$, we aim at estimating the parameters $\eta$ and predict $\Pi_{n+m}$ for $m\geq 1$. The marginal likelihood $\Pr(\Pi_n|\eta)$ is however intractable. We use a sequential Monte Carlo algorithm~\citep{deFreitas2001,DelMoral2006} with target distribution $\Pr(d\theta_{(1:n)},d\tau_{(1:n)}|\Pi_n,\eta)$ in order to get unbiased estimators of the marginal likelihoods $\Pr(\Pi_n|\eta)$ for a grid of values of $\eta$, and compute the maximum likelihood estimate $\widehat \eta$. The proposal distribution for the arrival times $\tau_{(n)}$ is a truncated normal on $[\tau_{(n-1)},\infty)$, while the proposal for the cluster location of a new cluster is uniform on $[0,\tau_{(n)}]$.  We also use a sequential Monte Carlo algorithm in order to sample from the predictive $\Pr(\Pi_{n+m}|\Pi_n,\widehat \eta)$ using Proposition \ref{posterior}.

\section{Random partitions and random multigraphs}
\label{sec:graphs}
The non-exchangeable random partition model proposed can be used to derive models for random multigraphs, see~\citep{Bloem-Reddy2017}. Recall that $\Pi_n=(A_{n,1},\ldots,A_{n,K_n})$, where the blocks are sorted in order of appearance. For each $i=1,2,\ldots$, let $c_i$ be the index of the cluster to which item $i$ belongs, that is $i\in A_{n,c_i}$ for all $n\geq i$. An undirected multigraph $G=\Phi(\Pi)$, possibly with self-loops and  with a countably infinite number of edges, is derived from the random partition $\Pi$ by
$$
G=((c_1,c_2),(c_3,c_4),\ldots)
$$
where each pair $(c_{2n-1},c_{2n})$ represents an undirected edge between the vertex $c_{2n-1}$ and the vertex $c_{2n}$. The set of vertices is either $\{1,\ldots,K\}$ if the partition has a finite number of blocks, or the set $\mathbb N$. Let $G_n$ be the restriction of $G$ to the first $n$ edges that is, to the first $2n$ items of $\Pi$. Then $K_{2n}$, the number of clusters in $\Pi_{2n}$, is also the number of vertices of $G_n$, $m_{2n,j}$ is the degree of vertex $j$, $j=1,\ldots,K_{2n}$ and $K_{2n,j}/K_{2n}$ is the proportion of vertices of degree $j$.

The multigraphs $G$ obtained by transformation of an exchangeable random partition form a subclass of the edge-exchangeable graphs~\citep{Crane2017,Cai2016}. This subclass is called rank one edge-exchangeable graphs by Janson~\citep{Janson2017a}. Of particular interest is the so-called Hollywood model~\citep{Crane2017}, obtained from a two-parameter CRP random partition. In this case, inherited from the properties of the associated random partition~\citep{Pitman2002}, one can obtain sparse multigraphs with power-law degree distribution. A consequence of the exchangeability assumption is the fact that the degree sequence grows linearly with the number of edges: for any vertex $j$, its degree $m_{2n,j}\asymp n$ almost surely as the number of edges $n$ tends to infinity. As shown in the following corollary of the results of Section~\ref{sec:properties}, our construction allows to obtain sparse multigraphs with power-law degree distribution and sublinear growth rate for degree sequences.

\begin{corollary}
Let $\Pi$ be a non-exchangeable partition with parameters $\alpha$ and $\rho$ verifying assumptions (A1-A3). Let $G=\Phi(\Pi)$ the associated random multigraph. For a subgraph $G_n$ corresponding to the first $n$ edges, let $K_{2n}$ be the number of vertices, $m_{2n,j}$ the degree of vertex $j$ and $K_{2n,r}$ the number of vertices of degree $r \geq 1$. Then, almost surely as the number of edges $n$ tends to infinity
\begin{align*}
m_{2n,j}&\asymp L_{\xi+1}^*(n)\,n^{1/(1+\xi)},~~j\geq 1\\
K_{2n}&\sim \widetilde \ell (n) (2n)^{(\sigma+\xi)/(1+\sigma)}\\
  \frac{K_{2n,r}}{K_{2n}}&\rightarrow\frac{\sigma\Gamma(r-\sigma)}{r!\Gamma(1-\sigma)},~~ r\geq 1
\end{align*}
where the slowly varying functions $L_{\xi+1}^*$ and $\widetilde \ell$ are defined in Equations \eqref{eq:Lstar} and \eqref{eq:elltilde}.

\end{corollary}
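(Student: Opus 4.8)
The plan is to obtain the three asymptotic statements for the multigraph $G_n$ directly by translating the already-established results of Section~\ref{sec:properties} through the correspondence recorded just before the statement: the subgraph $G_n$ with its first $n$ edges corresponds exactly to the partition $\Pi_{2n}$ on the first $2n$ items, so that the number of vertices of $G_n$ is $K_{2n}$, the degree of vertex $j$ is $m_{2n,j}$, and the number of degree-$r$ vertices is $K_{2n,r}$. Thus every assertion to be proved is literally a statement about $\Pi_{2n}$, and the only genuine work is to check that passing from the index $n$ to the index $2n$ does not disturb the asymptotic equivalences, which it does not because multiplication of the argument by a constant is harmless for power laws up to slowly varying factors. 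I would state this reduction explicitly as the first line of the proof.

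Next I would invoke Corollary~\ref{corollarypowerlaw} with $n$ replaced by $2n$. For the degrees, the corollary (via the Microclustering Corollary following Proposition~\ref{thm:clustsize}) gives $m_{2n,j}\sim W(\{\theta_j^\ast\})\left(\tfrac{\xi+1}{\kappa(1,0)}\right)^{1/(\xi+1)} L_{\xi+1}^\ast(2n)\,(2n)^{1/(\xi+1)}$ almost surely; since $L_{\xi+1}^\ast$ is slowly varying we have $L_{\xi+1}^\ast(2n)\sim L_{\xi+1}^\ast(n)$, and absorbing the constants $2^{1/(\xi+1)}$, $\left(\tfrac{\xi+1}{\kappa(1,0)}\right)^{1/(\xi+1)}$ and the random factor $W(\{\theta_j^\ast\})$ into the $\asymp$ symbol yields $m_{2n,j}\asymp L_{\xi+1}^\ast(n)\,n^{1/(1+\xi)}$ for each fixed $j\geq 1$. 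For the number of vertices, Corollary~\ref{corollarypowerlaw} gives $K_{2n}\sim\widetilde\ell(2n)\,(2n)^{(\sigma+\xi)/(\xi+1)}$, and again $\widetilde\ell(2n)\sim\widetilde\ell(n)$ so one may write $K_{2n}\sim\widetilde\ell(n)\,(2n)^{(\sigma+\xi)/(\xi+1)}$ (here one should note the exponent in the statement; the honest exponent coming from Corollary~\ref{corollarypowerlaw} is $(\sigma+\xi)/(\xi+1)$, and I would make sure the displayed exponent matches that). For the degree distribution, the ratio statement $K_{2n,r}/K_{2n}\to \sigma\Gamma(r-\sigma)/(r!\,\Gamma(1-\sigma))$ for $\sigma\in(0,1)$, and $K_{2n,r}=o(K_{2n})$ when $\sigma=0$, are immediate from Corollary~\ref{corollarypowerlaw} applied along the subsequence $2n$; since the convergence there is almost sure along all of $\mathbb N$, it holds along the even integers in particular.

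Finally I would add a one-sentence remark recording that these rates certify the two structural features advertised: $m_{2n,j}=o(n)$ gives sublinear degree growth (sparsity of the degree sequence in the sense of the microclustering property), while the power-law tail $\sigma\Gamma(r-\sigma)/(r!\,\Gamma(1-\sigma))\asymp r^{-(1+\sigma)}$ for large $r$, already noted in Corollary~\ref{corollarypowerlaw}, gives the power-law degree distribution. The main (and essentially only) obstacle is bookkeeping: one must be careful that the exponent written for $K_{2n}$ in the statement is consistent with Corollary~\ref{corollarypowerlaw} and that the substitution $n\mapsto 2n$ is applied uniformly; there is no new probabilistic content beyond Propositions~\ref{thm:clustsize} and~\ref{thm:nbclust} and their corollaries.
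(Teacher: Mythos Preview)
Your proposal is correct and matches the paper's approach: the paper presents this statement as an immediate corollary of the results of Section~\ref{sec:properties} without giving a separate proof, and your argument---restricting the almost-sure asymptotics of Propositions~\ref{thm:clustsize}, \ref{thm:nbclust} and Corollary~\ref{corollarypowerlaw} to the subsequence $2n$ and using slow variation to replace $L_{\xi+1}^\ast(2n)$, $\widetilde\ell(2n)$ by $L_{\xi+1}^\ast(n)$, $\widetilde\ell(n)$---is exactly the intended derivation. Your observation about the exponent is also apt: the displayed $(\sigma+\xi)/(1+\sigma)$ in the statement is a typo for $(\sigma+\xi)/(1+\xi)$, as dictated by Corollary~\ref{corollarypowerlaw}.
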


\section{Discussion}
\label{sec:discussion}

To obtain random partitions with the microclustering property, one option is to give up the exchangeability assumption, as we did in this paper. An alternative approach is to drop the Kolmogorov consistency assumption discussed in the introduction. Miller et al.~\cite{Miller2015} and Betancourt et al.~\cite{Betancourt2016}, who derived random partition models with the microclustering property, take this option, and consider a collection $(\Pi_n)_{n\geq 1}$ of finitely exchangeable random partitions of $[n]$ that do not define a (Kolmogorov-consistent) random partition of $\mathbb N$. Another related contribution is the work of ~\citep{zhou2016frequency} where the authors also define a collection $(\Pi_n)_{n\geq 1}$ of finitely exchangeable random partitions of $[n]$ that do not satisfy Kolmogorov consistency property; the authors emphasize that it is indeed a desirable feature for modeling frequencies of frequencies, which motivates their work. Their model is also based on some Poissonization idea.

There has been a lot of interest over the past years in the development of non-exchangeable partitions based on dependent Dirichlet processes and more generally dependent random measures~\citep{MacEachern1999,Griffin2006,Caron2007,Foti2015,Blei2011,Caron2016}. The focus of these works is rather different though, as they do not aim to capture/characterize the microclustering property. The model presented here builds on a Poisson construction on an augmented space, and is therefore somewhat reminiscent of the work of~\citep{Rao2009,Lin2010,Chen2013,donnelly1991}.

Bloem-Reddy and Orbanz \citep{Bloem-Reddy2017} considered a general class of exchangeable and non-exchangeable random partitions of $\mathbb N$, motivated by preferential attachments models for random multigraphs. For certain values of the parameters, it can generate partitions with the microclustering property, but with a somewhat different asymptotic behavior for the number of clusters. The microclustering property is obtained whenever the number of clusters grows linearly with the dataset \citep[Theorem 7]{Bloem-Reddy2017}. In our approach, the number of clusters always grows sublinearly, and the rate can be controlled by the properties of the L\'evy measure $\rho$.

\section{Experiments} \label{experiments}
In what follows we compare our non-exchangeable model to the two-parameter Chinese restaurant process~\citep{Pitman2002}. For the non-exchangeable model, we consider a GGP with mean measure $\rho(d\omega)\alpha(d\theta)=1/\Gamma(1-\sigma)\omega^{-1-\sigma}e^{-\zeta\omega}d\omega\xi \theta^{\xi-1}d\theta$ where the parameters $\xi\in \{1,2,3\}$, $\sigma\in[0,1)$ and $\zeta>0$ are unknown. For the two-parameter CRP, the two parameters $\sigma_2\in[0,1)$ and $\kappa_2>0$ are considered unknown.  Observed data are partitions of size $n$, partitioned into a training set of size $n_\text{train}$ and a test set of size $n_{\text{test}}$ where $n_\text{train}+n_{\text{test}}=n$. The parameters of each model are estimated on the training data using maximum likelihood with a grid of values for the parameters: 25 equidistant points in $[0,1)$ for $\sigma$, $\xi\in \{1,2,3\}$, and a grid obtained by dichotomic search on the interval $[0,100]$ for $\zeta$, and similarly for the two-parameter CRP. The EPPF $\Pr(\Pi_{n_\text{train}}|\sigma_2,\kappa_2)$ of the two-parameter CRP has an analytic form and is calculated directly. For our method, we approximate the likelihood $\Pr(\Pi_{n_\text{train}}|\xi,\sigma,\zeta)$ using sequential Monte Carlo methods with 10000 particles, as described in Section \ref{inference}.

\begin{figure}
\begin{center}
\includegraphics[width=10cm]{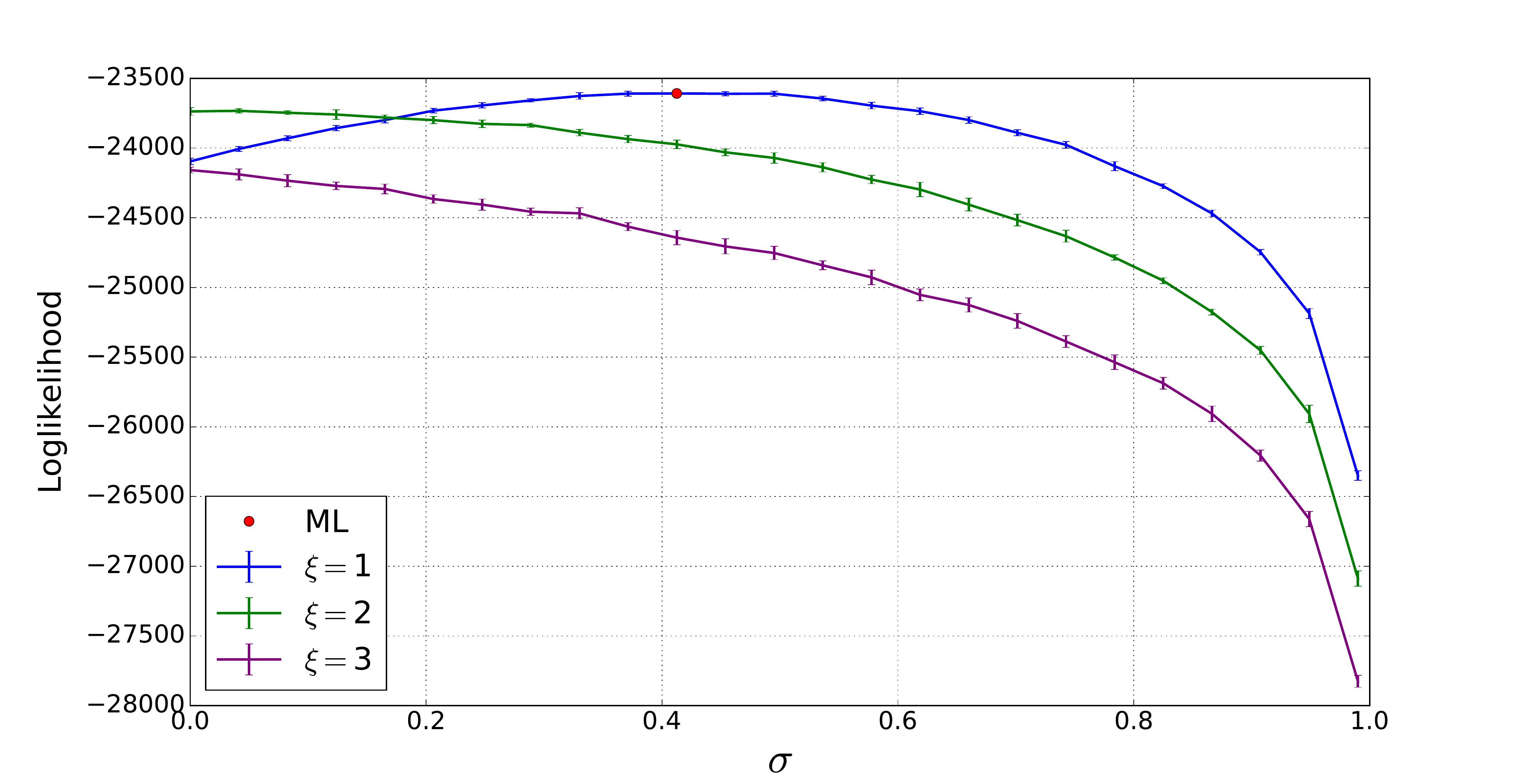}
\end{center}
\caption{Loglikelihood estimates for $\zeta = 10$ and different values of $\xi$ and $\sigma$. For every grid point, 10 SMC estimates are obtained, and the mean and $\pm 1$ standard deviation error bars are reported.}
\label{fig:loglikelihood}
\end{figure}

For each cluster $j=1,\ldots,K_{n_\text{train}}$ in the training set, we then aim at predicting its size $m_{k,j}$ for $k=n_\text{train}+1,\ldots,n$. Let $m^{\text{(true)}}_{k,j}$ be the true size of cluster $j$ in the partition of size $k$ and consider the L2 error
$$
E=\frac{1}{K_{n_{\text{train}}}}\sum_{j = 1}^{K_{n_{\text{train}}}}\frac{1}{n_\text{test}} \sum_{k=n_{\text{train}}+1}^{n}\left(m_{k,j} - m^{\text{(true)}}_{k,j}\right)^2\geq 0.
$$
We are interested in the distribution of the predictive error
\begin{equation}
\Pr\left(E\in dE \mid \Pi_{n_\text{train}},\widehat \eta\right )\label{eq:predictiveerrror}
\end{equation}
where $\widehat\eta$ are the fitted parameters, under the two-parameter CRP or our model.

Additionally, we want to check that the model can still capture the distribution of the cluster sizes adequately. To this aim, we also report $95\%$ predictive credible intervals for the proportion of clusters of a given size in the test set, and compare this to the empirical distribution.

\paragraph{Synthetic data.} In order to validate the inference procedure, we first run experiments on a simulated dataset, where the data are simulated from our model with parameters set to $(\xi,\sigma,\zeta)=(1,0.4125,10)$. In this model, the cluster size grows at a rate of $\sqrt n$, as can be seen from Figure~\ref{clustsize_plot}(a) that shows the growth of the cluster sizes with respect to the sample size $n$. Additionally, the proportion of clusters of a given size has an asymptotic power-law distribution, see the top row of Figure~\ref{fig:powerlawpred}. As shown in Figure~\ref{fig:loglikelihood}, the SMC estimate of the log-likelihood is rather accurate, and we recover the true parameters. The mean and quantiles of the predictive error under our model and the two-parameter CRP are reported in Table~\ref{tab:error}. As expected, the predictive under our model outperforms the two-parameter CRP, which is misspecified in that case. Posterior predictive of the proportion of clusters of a given size is reported in the first row of Figure~\ref{fig:powerlawpred}.

  \paragraph{Real data.}
We consider two datasets of the same size. The first one is the Amazon dataset of movies' reviews \cite{mcauley2015image} where each movie represents a cluster containing its reviews, which are ordered. The second dataset is a time-ordered collection of answers to questions in the Math Overflow website\footnote{https://mathoverflow.net/} where the clusters contain answers to the same question. Evolutions of the cluster sizes are reported in Figure~\ref{clustsize_plot}(b-c) for these datasets. We aim at predicting, based on the training set, the number of reviews to a given movie for the Amazon dataset, and the number of questions answered to a given question for the Math Overflow dataset. In both cases our non-exchangeable model provides better predictions of the cluster sizes (see Table~\ref{tab:error} and Figure~\ref{fig:amazonpredictive}). Estimates and credible intervals for the parameter $\sigma$ and $\sigma_2$ are reported in Table~\ref{tab:sigma}. Figure~\ref{fig:powerlawpred} shows that both models give reasonable predictive fit to the proportion of clusters of a given size.

\begin{figure}
\centering
\subfigure[Synthetic]
{\includegraphics[width=.31\textwidth]{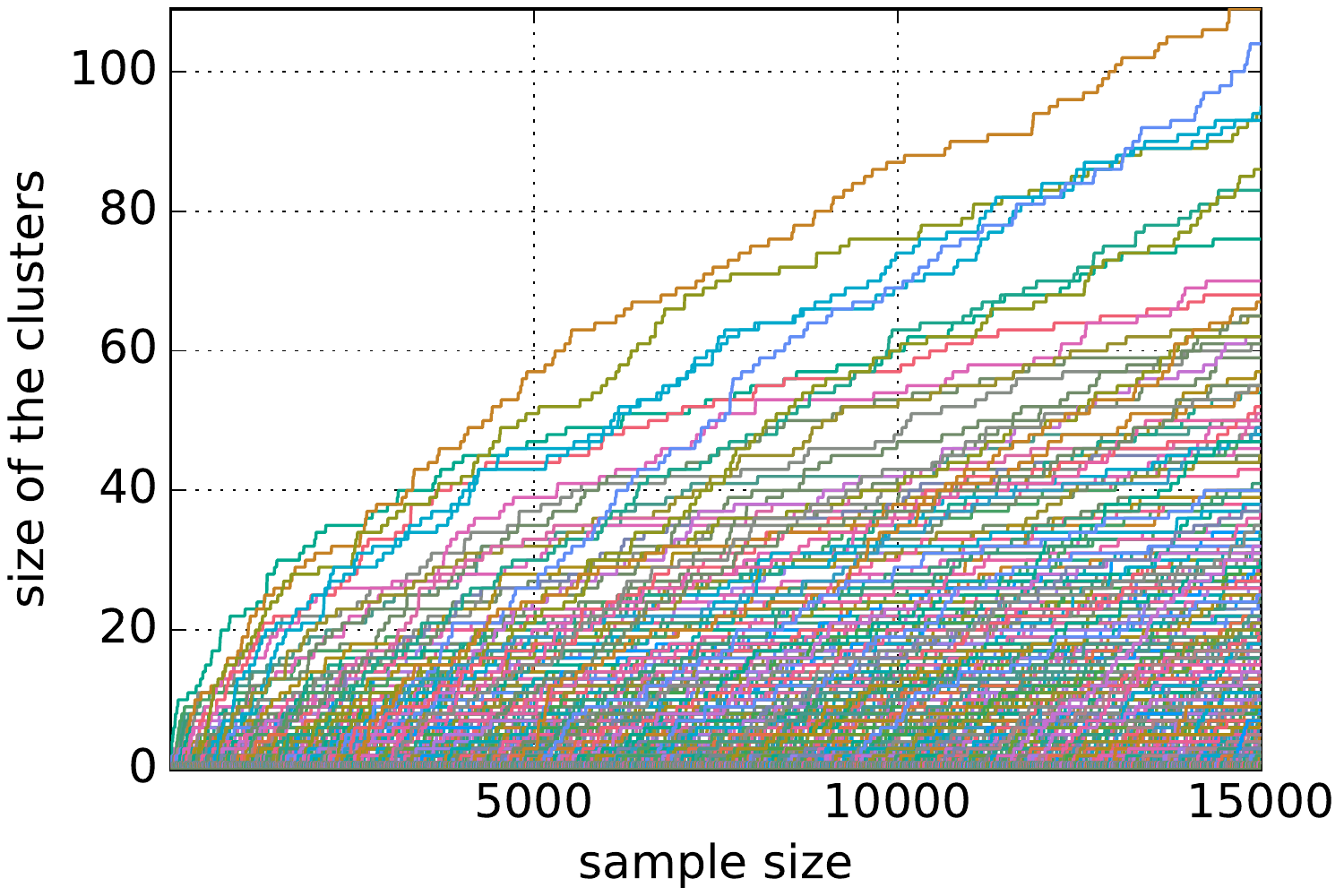}} \quad
\subfigure[Amazon]
{\includegraphics[width=.31\textwidth]{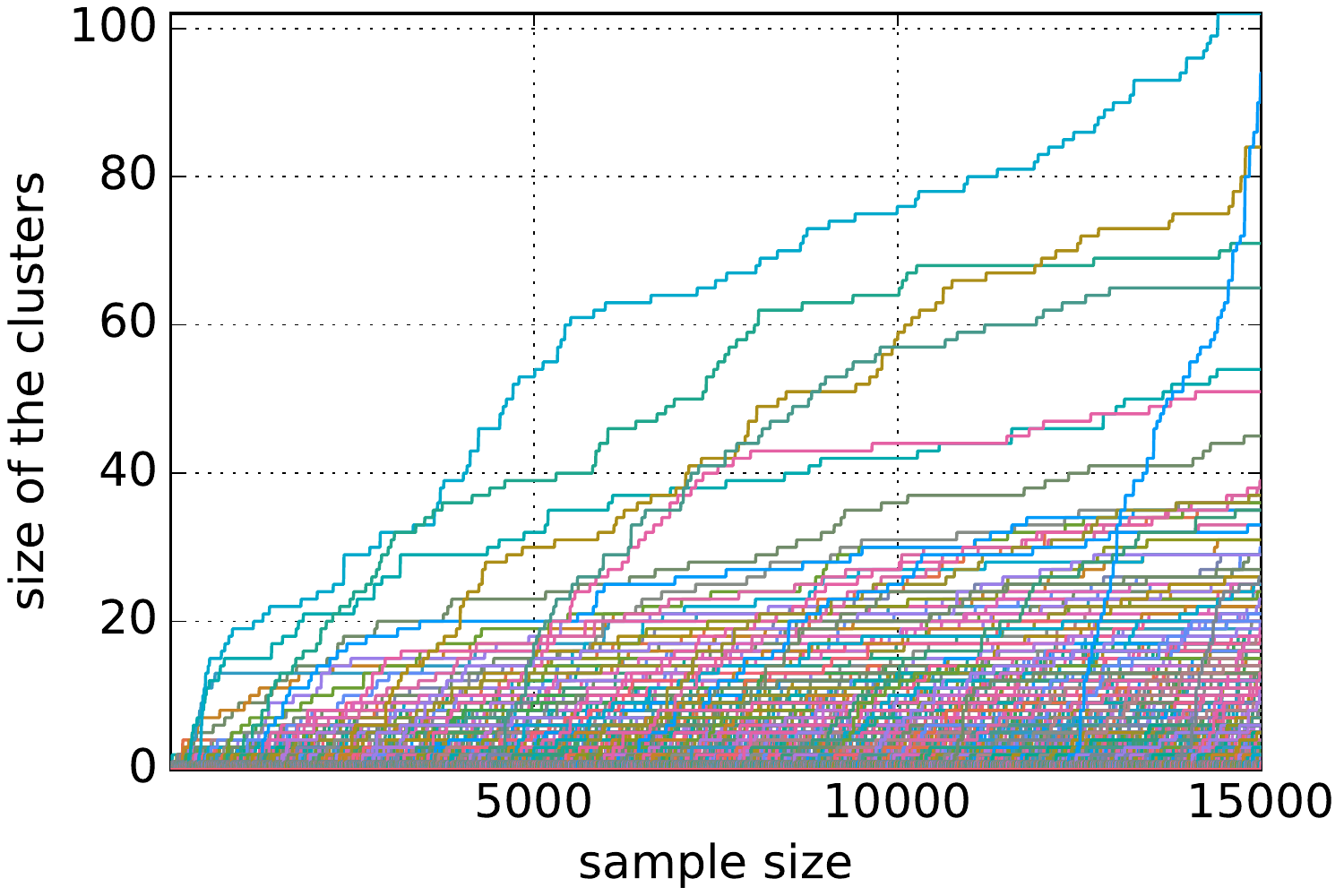}} \quad
\subfigure[Math Overflow]
{\includegraphics[width=.31\textwidth]{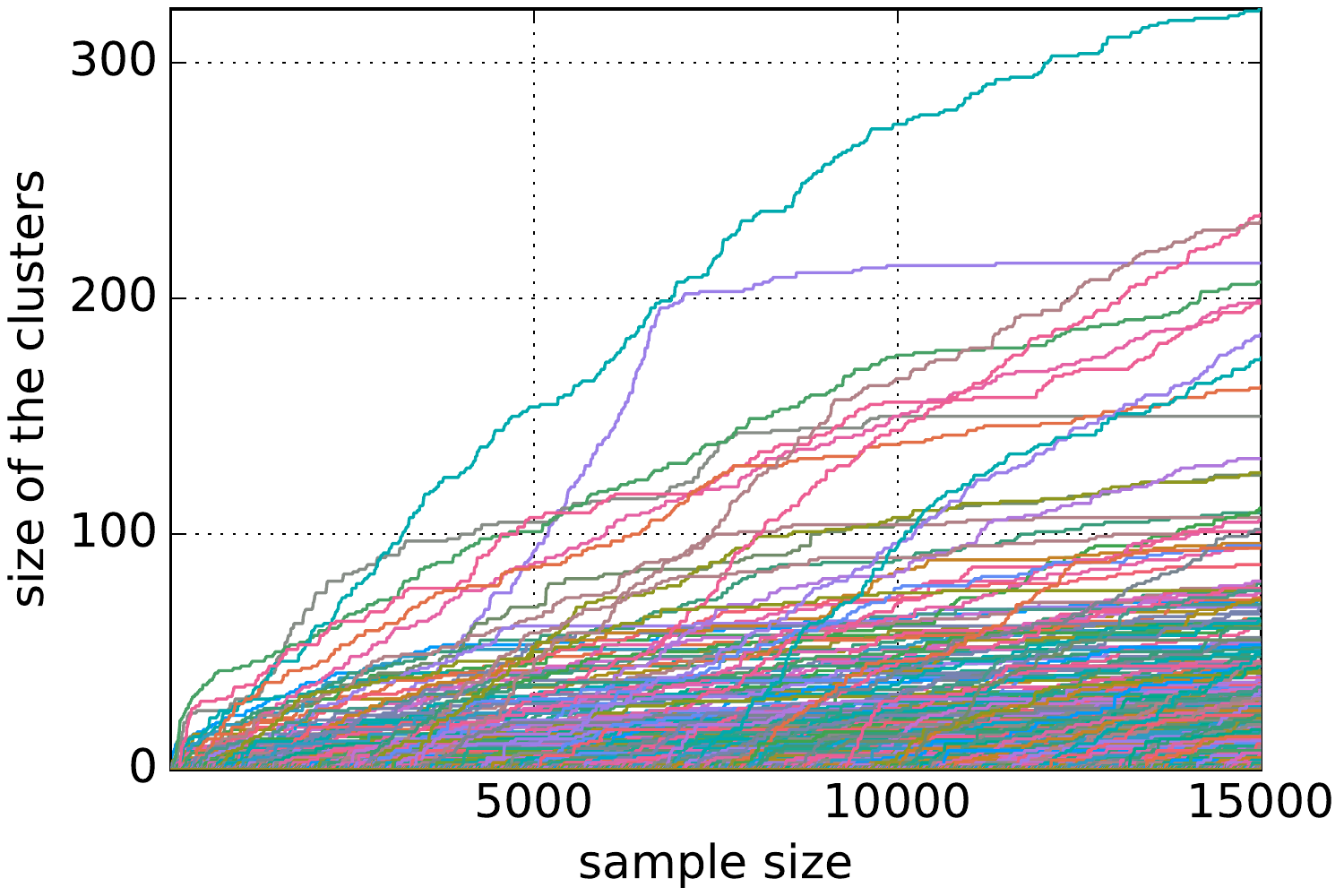}}
\caption{Evolution of the clusters' sizes $m_{j,n}$ with respect to the sample size $n$ for the (a) synthetic, (b) Amazon and (c) Math Overflow datasets.}
\label{clustsize_plot}
\end{figure}

\begin{table}[h]
\caption{Mean and quantiles of the predictive error using 100 samples from the predictive distributions.}
\label{tab:error}
\begin{center}
\begin{tabular}{rllll}
  \toprule
   & \multicolumn{2}{c}{Non-exchangeable} & \multicolumn{2}{c}{Two-parameter CRP} \\
%   & & & & \\
   & \multicolumn{1}{c}{L2 error} & \multicolumn{1}{c}{$90\%$ CI} & \multicolumn{1}{c}{L2 error} & \multicolumn{1}{c}{$90\%$ CI} \\
%   & L2 error & $90\%$ CI & L2 error & $90\%$ CI \\
  \midrule
  Synthetic & $6.92$ & $[6.37,\, 7.42]$ & $14.9$ & $[13.4,\,16.2]$\\
  Amazon & $4.14$ & $[3.91,\,4.40]$ & $6.05$  & $[5.63,\,6.49]$\\
  Math Overflow & $1.08\times 10^2$ & $[1.01,\,1.22]\times 10^2$ & $1.67\times 10^2$ & $[1.58,\, 1.77]\times 10^2$\\
  \bottomrule
\end{tabular}
\end{center}
\end{table}

\begin{table}
\caption{MLE for the the parameter $\sigma$ of the non-exchangeable model and the discount parameter $\sigma_2$ of the two-parameter CRP model, and 0.025 and 0.975 quantiles of their posterior distributions.}
\label{tab:sigma}
\begin{center}
\begin{tabular}{rclcl}
  \toprule
   & \multicolumn{2}{c}{Non-exchangeable} & \multicolumn{2}{c}{Two-parameter CRP} \\
%   & & & & \\
   & \multicolumn{1}{c}{MLE of $\sigma$} & \multicolumn{1}{c}{$[q_{0.025},q_{0.975}]$} & \multicolumn{1}{c}{MLE of $\sigma_2$} & \multicolumn{1}{c}{$[q_{0.025},q_{0.975}]$} \\
%   & L2 error & $90\%$ CI & L2 error & $90\%$ CI \\
  \midrule
  Synthetic & $0.413$ & $[0.381,\,0.445]$ & $0.463$ & $[0.414,\,0.505]$\\
  Amazon  & $0.578$ & $[0.523,\,0.609]$ & $0.434$ & $[0.373,\,0.484]$\\
  Math Overflow  & $0.371$ & $[0.339,\,0.403]$ & $0.304$ & $[0.238,\,0.360]$\\
  \bottomrule
\end{tabular}
\end{center}

\end{table}

\begin{figure}[h!]
\centering
\label{pred_clustsize}
\includegraphics[scale=.4]{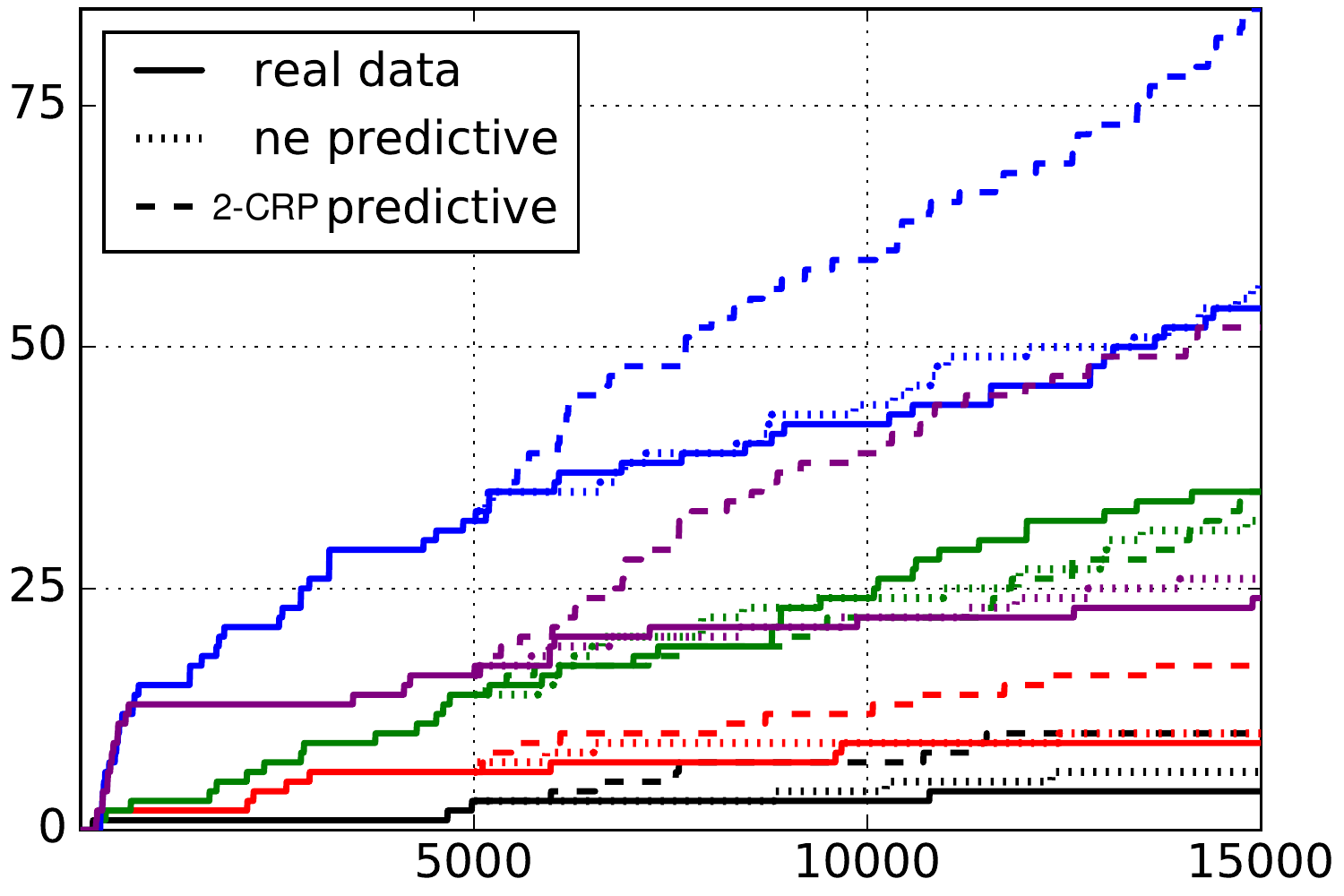}
\caption{Amazon dataset. Observed (plain line) and predicted sizes of some clusters (in different colours) from  the non-exchangeable (dotted line) and the two-parameter CRP models (dashed line). }
\label{fig:amazonpredictive}
\end{figure}

%\newpage

\begin{figure}[h!]
\centering
\subfigure[Synthetic]
{\includegraphics[width=.4\textwidth]{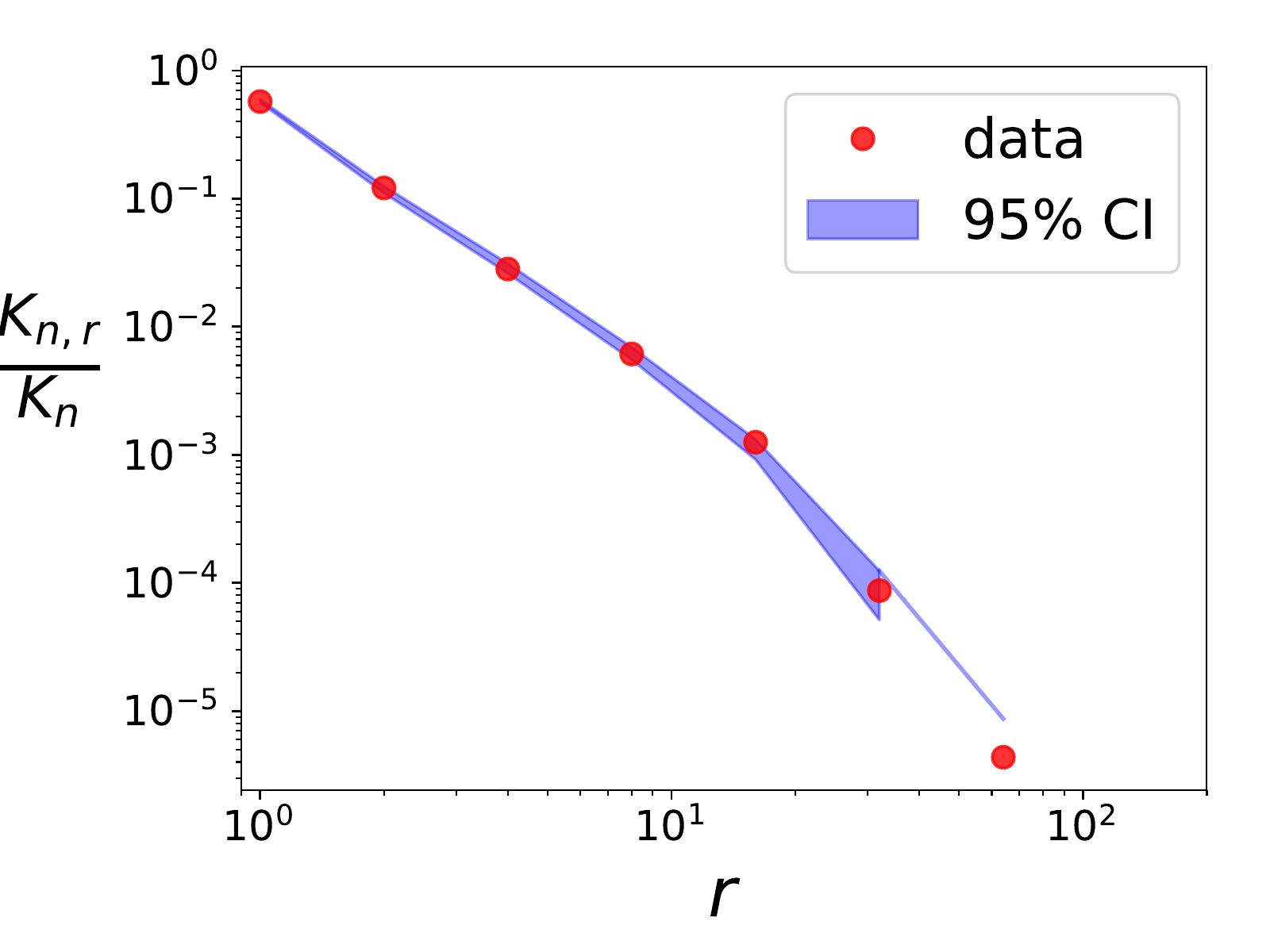} \quad \includegraphics[width=.4\textwidth]{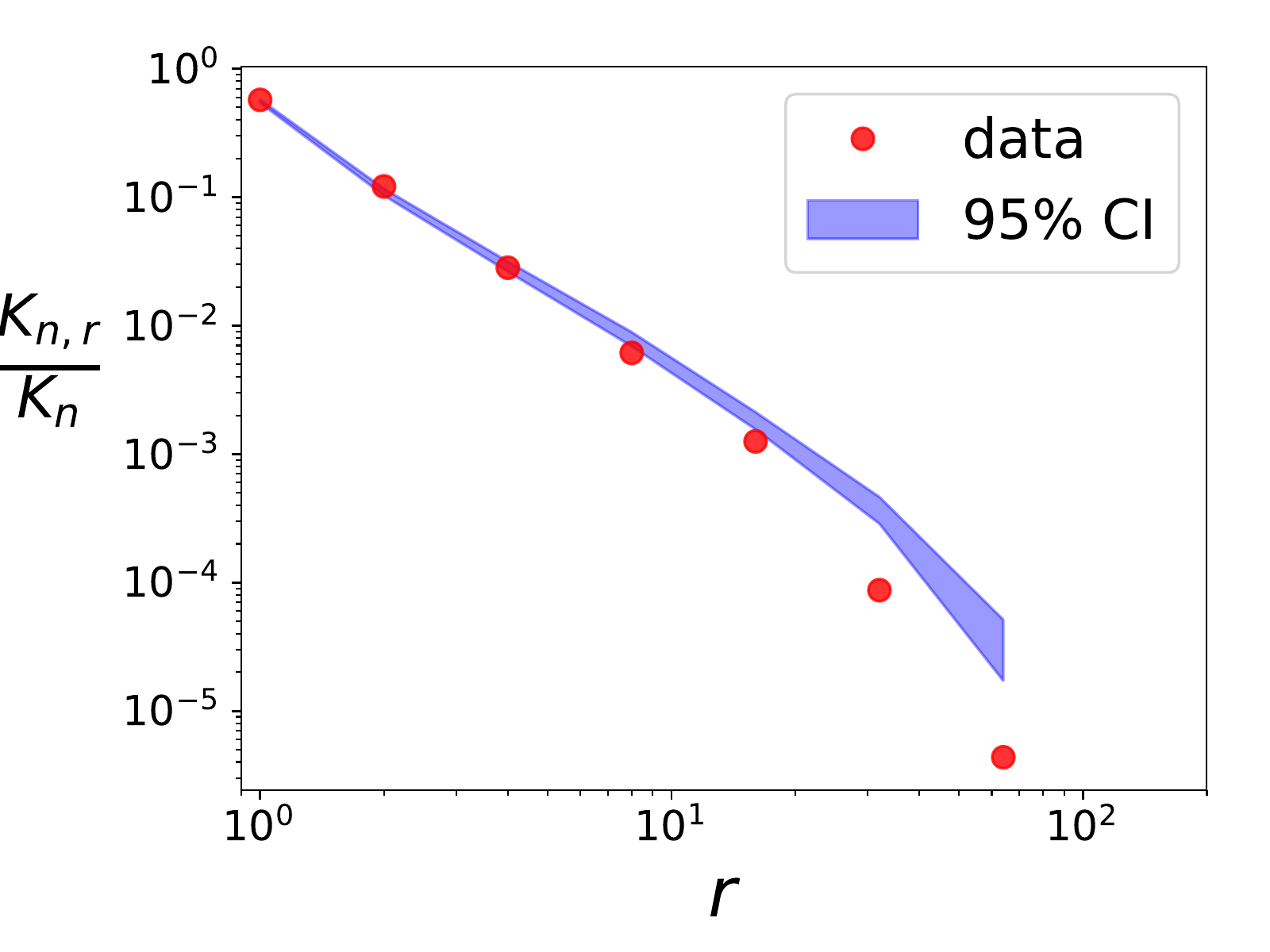}} \\
\subfigure[Amazon]
{\includegraphics[width=.4\textwidth]{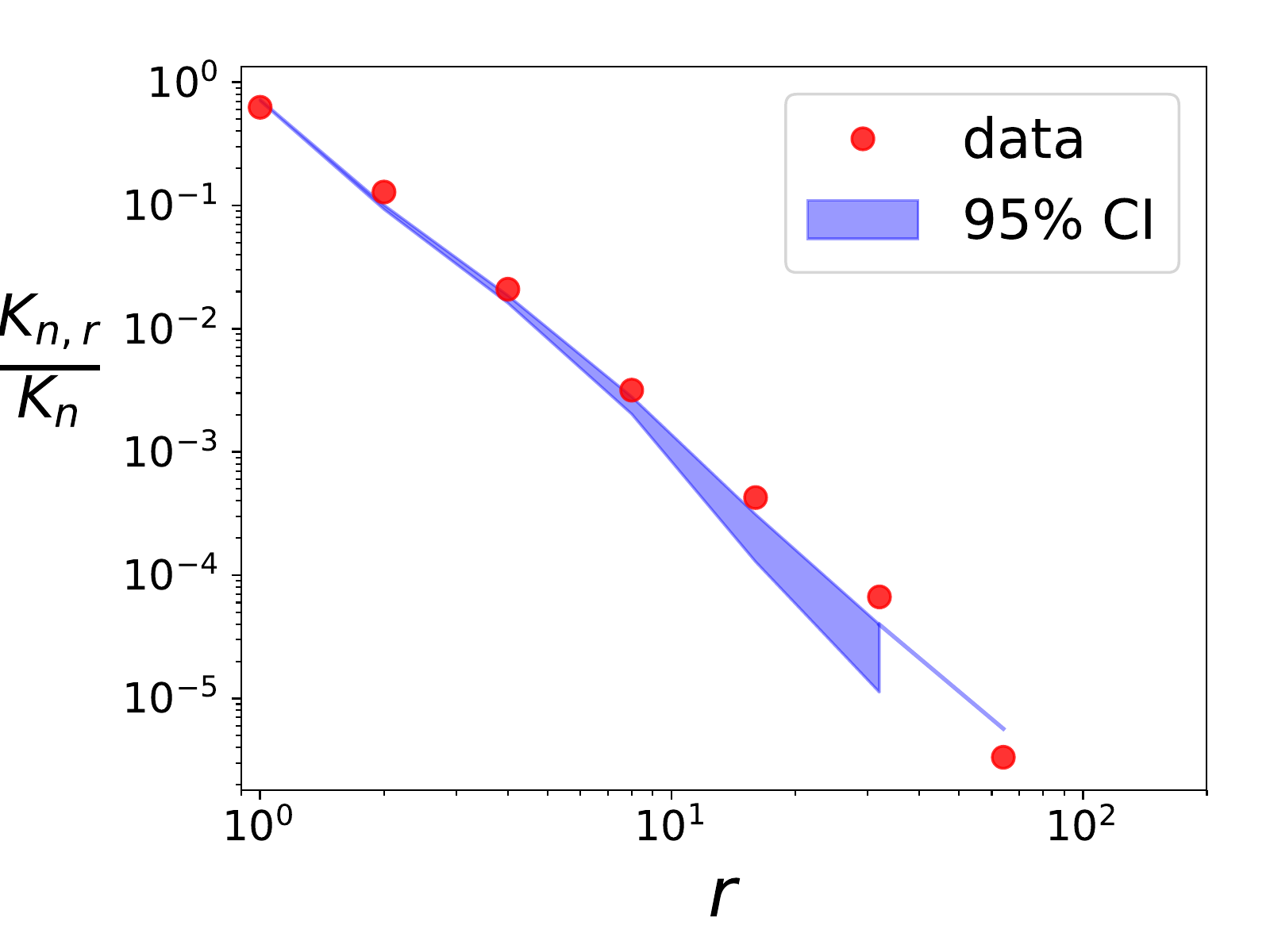} \quad \includegraphics[width=.4\textwidth]{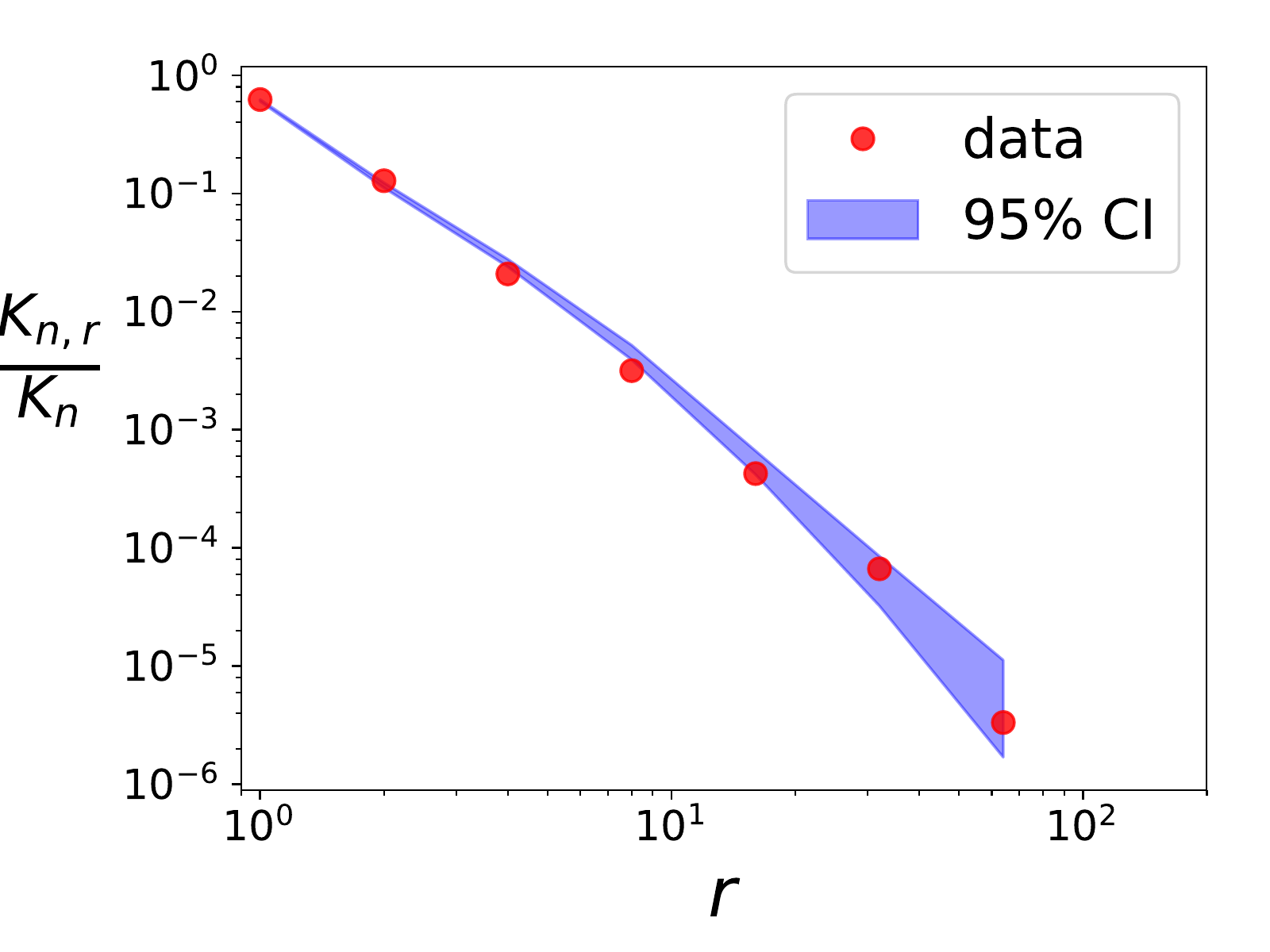}} \\
\subfigure[Math Overflow]
{\includegraphics[width=.4\textwidth]{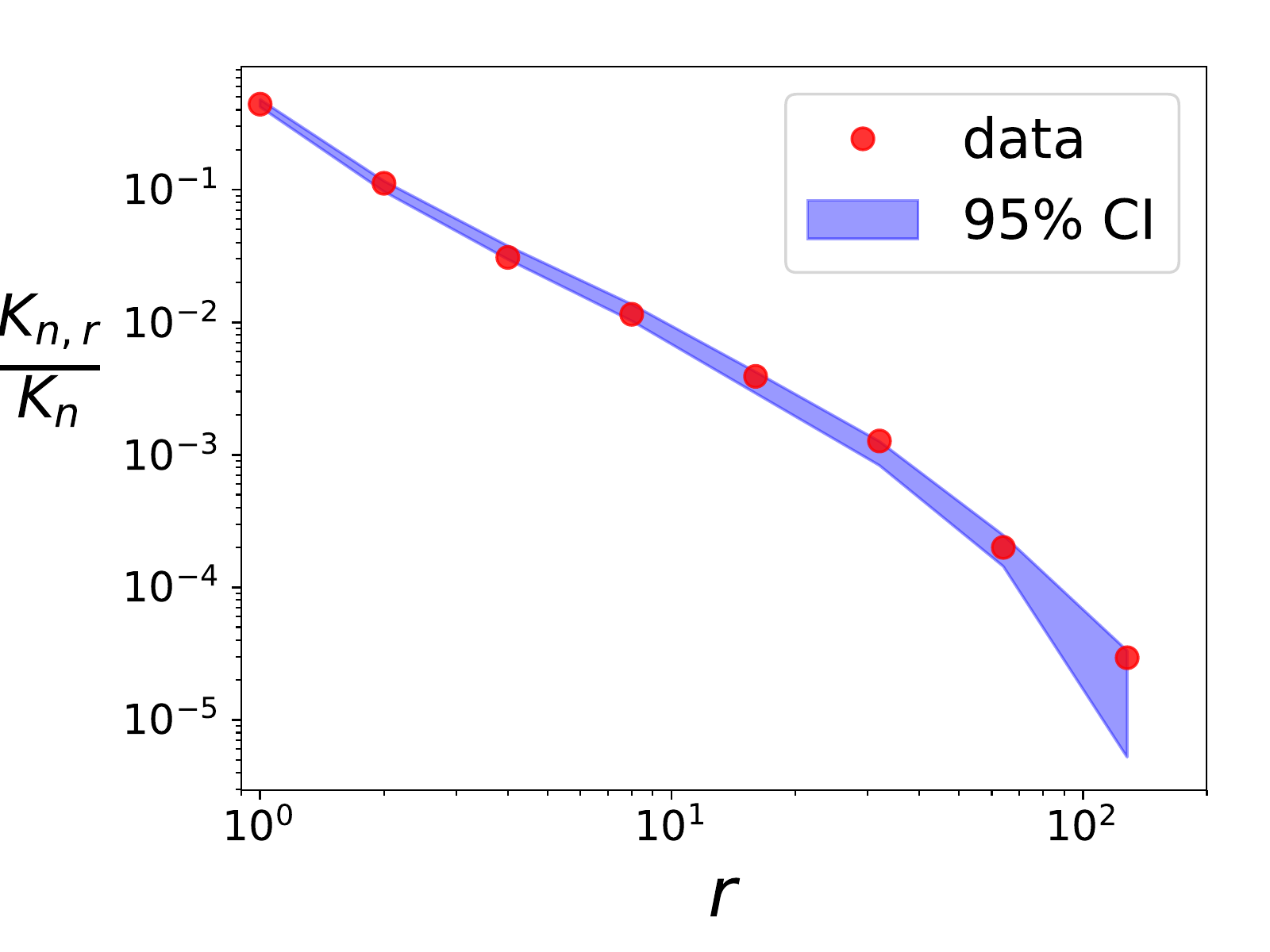} \quad \includegraphics[width=.4\textwidth]{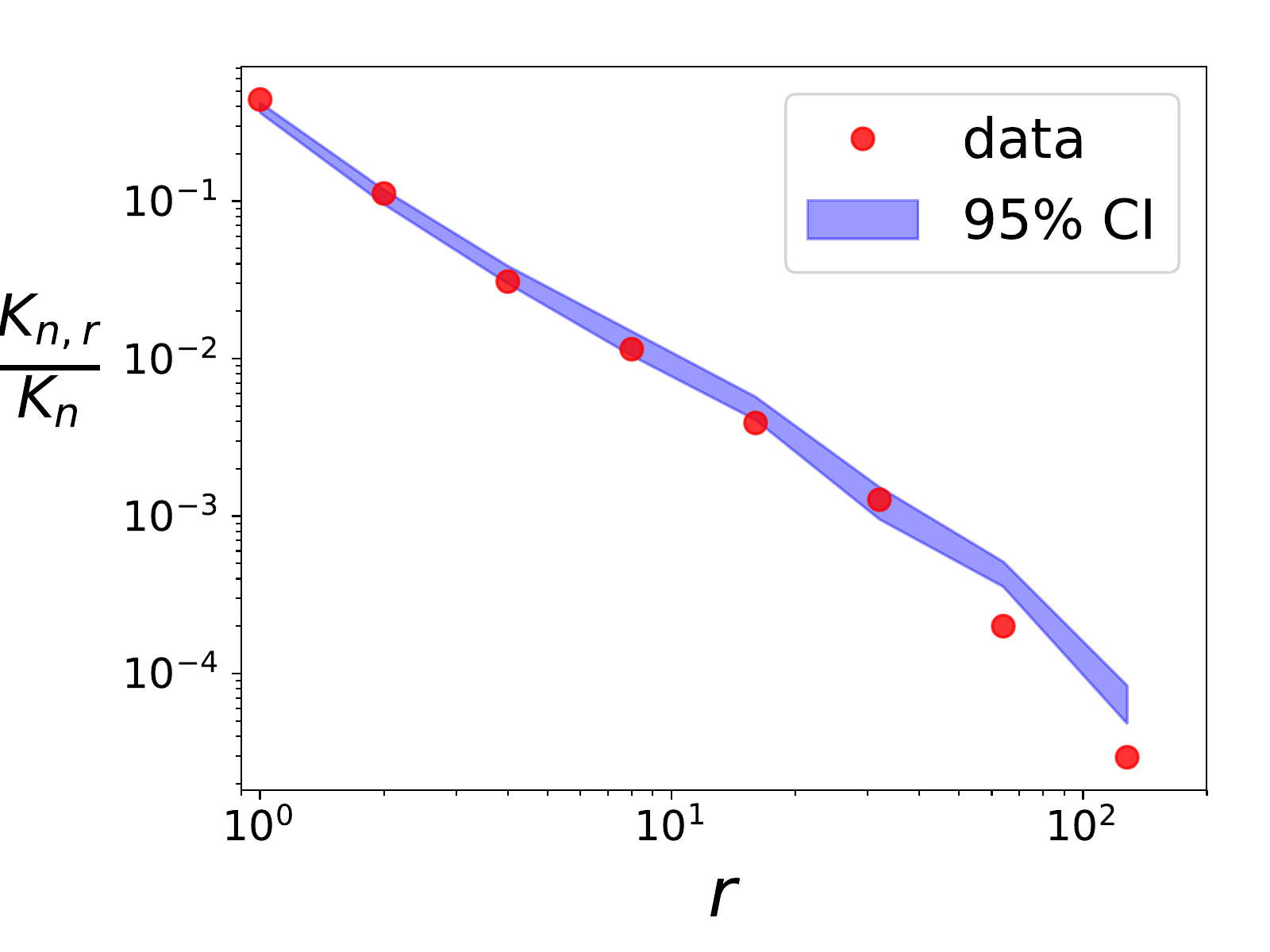}}
\caption{Empirical proportions of clusters of given size (red dots) and  $95\%$  posterior predictive credible intervals (blue) for our non-exchangeable model (left) and the two-parameter CRP (right).}
\label{fig:powerlawpred}
\end{figure}

\section*{Supplementary material}
A demo of the simulation and inference for the non-exchangeable random partition model can be found at \url{https://github.com/giuseppedib/microclustering}.

\section*{Acknowledgements}
Giuseppe Di Benedetto is funded by EPSRC (grant reference code EP/L016710/1).

\newpage

\appendix

\section{Proof of the main theorems}

\begin{proof}[Proof of proposition ~\ref{predictive}]
From Eq.~\eqref{eq:joint}, the joint distribution of
$((\theta_{(i)}, \tau_{(i)})_{i=1,\dots, n})$ is given by
\begin{align*}
  \Pr(d\theta_{(1:n)},& d\tau_{(1:n)}) =\Bigg\{ \sum_{i=1}^{K_{n-1}}
  \left[ \prod_{\substack{j=1\\ j\neq i}}^{K_{n-1}}
    \kappa(m_{n-1,j}, \tau_{(n)}-\theta_j^*)\alpha (\theta_j^*)\right]
 \kappa(m_{n-1,i}+1, \tau_{(n)}-\theta_i^*)\alpha (\theta_i^*)  \,
  \delta_{\theta_i^*}(d\theta_{(n)})\\
  &+ \left[\prod_{j=1}^{K_{n-1}}\kappa(m_{n-1,j}, \tau_{(n)}-\theta_j^*)\alpha (\theta_j^*)\right]
  \kappa(1, \tau_{(n)}-\theta_n^*) \alpha(\theta_{n}^*)\,
  d\theta_{(n)} \Bigg\}\,\\
  &\times e^{-\int_0^{\tau_{(n)}} \psi(\tau_{(n)}-\theta)\alpha(d\theta)}
\left[\prod_{i=1}^{n}\1{\theta_{(i)}<\tau_{(i)}}\right]
  \1{\tau_{(1)}<\tau_{(2)}<\ldots<\tau_{(n)}} d\theta_{(1:n-1)}d\tau_{(1:n)}.\end{align*}
Integrating over $\theta_{(n)}$, we obtain
\begin{align*}
\Pr(d\theta_{(1:n-1)},& d\tau_{(1:n)}) =
\Bigg\{ \sum_{i=1}^{K_{n-1}}\left[ \prod_{\substack{j=1\\ j\neq i}}^{K_{n-1}}
\kappa(m_{n-1,j}, \tau_{(n)}-\theta_j^*)\alpha (\theta_j^*)\right]\kappa(m_{n-1,i}+1, \tau_{(n)}-\theta_i^*)\alpha (\theta_i^*) \\
&+\left[\prod_{j=1}^{K_{n-1}}\kappa(m_{n-1,j}, \tau_{(n)}-\theta_j^*)
  \alpha (\theta_j^*)\right]\int_0^{\tau_{(n)}} \kappa(1, \tau_{(n)}-\theta_{(n)}) \alpha(\theta_{(n)})\,d\theta_{(n)}\Bigg\}\\
&e^{-\int_0^{\tau_{(n)}} \psi(\tau_{(n)}-\theta)\alpha(d\theta)} \left[\prod_{i=1}^{n-1}\1{\theta_{(i)}<\tau_{(i)}}\right]\1{\tau_{(1)}<\tau_{(2)}<\ldots<\tau_{(n)}} d\theta_{(1:n-1)}d\tau_{(1:n)}.
\end{align*}
In the Generalised Gamma Process case we have
\[
\kappa(m,u) = \frac{1}{\Gamma(1-\sigma)}
\frac{\Gamma(m-\sigma)}{(\zeta + u)^{m-\sigma}}.
\]
Therefore $\kappa(m+1,u) = \kappa(m,u) \frac{m-\sigma}{\zeta+u}$ and
\[
\sum_{i=1}^{K_{n-1}}\left[ \prod_{\substack{j=1\\ j\neq i}}^{K_{n-1}}
\kappa(m_{n-1,j}, \tau_{(n)}-\theta_j^*)\alpha (\theta_j^*)\right]
\kappa(m_{n-1,i}+1, \tau_{(n)}-\theta_i^*)\alpha (\theta_i^*)
\]\[= \left[ \prod_{j=1}^{K_{n-1}}\kappa(m_{n-1,j}, \tau_{(n)}-\theta_j^*)\alpha (\theta_j^*)\right]\sum_{i=1}^{K_{n-1}} \frac{m_{n-1,i}-\sigma}{\tau_{(n)}-\theta_i^*+\zeta}
\]
hence
\begin{align*}
\Pr(d\theta_{(1:n-1)},&d\tau_{(1:n)}) =
\frac{1}{\Gamma(1-\sigma) ^{K_{n-1}}}
\left[\prod_{j=1}^{K_{n-1}}\frac{\Gamma(m_{n-1,j}-\sigma)\,\alpha(\theta_j^*)}{(\tau_{(n)}-\theta_j^*+\zeta)^{m_{n-1,j}-\sigma}}\right]\\
&\times\Bigg(\sum_{j=1}^{K_{n-1}}\frac{m_{n-1,j}-\sigma}{\tau_{(n)}-\theta_j^*+\zeta} +\int_0^{\tau_{(n)}}\frac{\alpha(\theta)}{(\tau_{(n)} - \theta+\zeta)^{1-\sigma}}\, d\theta \Bigg)\\ &\times
e^{-\int_0^{\tau_{(n)}} \psi(\tau_{(n)}-\theta)\alpha(d\theta)}\left[\prod_{i=1}^{n-1}\1{\theta_{(i)}<\tau_{(i)}}\right] \1{\tau_{(1)}<\ldots<\tau_{(n)}}
d\theta_{(1:n-1)}d\tau_{(1:n)}
\end{align*}
from which we obtain the results of the theorem.
\end{proof}

\vspace*{0.5cm}

\begin{proof}[Proof of proposition~\ref{thm:clustsize}]
Given $W$, $N(t)$ is a non-homogeneous Poisson process with rate $\overline W(t)=\sum_{j\geq 1}\omega_j\1{\vartheta_j\leq t}$. Hence, using Fubini's and Campbell's theorems,
$$
\mathbb{E}[N(t)]=\mathbb{E}\left[\int_0^t \overline W(x)dx\right ]=\doverline{\alpha}(t)\kappa(1,0)$$ %\mathbb{E}[\overline{\overline W}(t)]
where $\doverline{\alpha}(t)=\int_0^t \overline \alpha(t)$.
Similarly,
 \begin{align*}
\var(N(t)) &= \var(\mathbb{E}[N(t)\,|\,W]) + \mathbb{E}[\var(N(t)\,|\,W)] = \var(\overline{\overline W}(t)) + \mathbb{E}[\overline{\overline W}(t)]\\
&= \kappa(2,0)\int_0^t(t-\theta)^2\alpha(d\theta) + \kappa(1,0)\doverline{\alpha}(t) \\
&= \kappa(2,0)\int_0^t \overline{\overline\alpha}(x)dx+ \kappa(1,0)\overline{\overline\alpha}(t)
\end{align*}
Using Karamata's theorem~\citep[Proposition 1.5.8]{Bingham1987} and Assumption (A3), we obtain that
$$\mathbb{E}[N(t)]\sim \frac{\kappa(0,1)}{\xi+1}t^{\xi +1}L(t)\text{ and }\var(N(t))\sim \frac{\kappa(2,0)}{(\xi+1)(\xi+2)}t^{\xi +2}L(t).$$ Therefore, for any $0<a<\xi$ we have $\var(N(t))=O(t^{-a}\mathbb E[N(t)^2])$. Using \citep[Lemma B.1]{Caron2017a}, we conclude that, almost surely as $t$ tends to infinity
\begin{align*}
N(t)&\sim\mathbb E[N(t)]\sim \frac{\kappa(1,0)}{\xi +1}t^{\xi+1}L(\xi).
\end{align*}

Conditional on $W$, $X_j(t)$ is a non-homogeneous Poisson process with rate $\omega_j \1{\vartheta_i>t}$ hence
$$
X_j(t)\sim \omega_j t
$$
almost surely as $t$ tends to infinity. It follows similarly that $M_j(t)$, the size, at time $t$, of the $j$th cluster to appear, satisfies
$$
M_j(t)\sim \omega_j^\ast\, t
$$
where $\omega_j^\ast=W(\{\theta_j^\ast\})$. Additionally, Propositions~\ref{likelihood} and \ref{posterior} imply that
\begin{align*}
\Pr(d\omega^\ast_1, d\theta_{(1)},d\tau_{(1)})&=\omega_1^\ast e^{-\omega_1^\ast (\tau_{(1)}-\theta_{(1)})}\rho(d\omega_1^\ast)\alpha(d\theta_{(1)})
e^{-\int_0^{\tau_{(1)}} \psi(\tau_{(1)}-\theta )\alpha(d\theta) } \1{\theta_{(1)}%
    <\tau_{(1)}}
\,d\tau_{(1)}.
\end{align*}
It follows
\begin{align*}
\Pr(d\omega^\ast_1)=\omega_1^\ast \rho(d\omega_1^\ast)\int_0^\infty\int_0^\tau e^{-\omega_1^\ast (\tau-\theta)}
e^{-\int_0^t \psi(\tau-u )\alpha(du) }
\alpha(d\theta)d\tau.
\end{align*}
\end{proof}

\begin{proof}[Proof of proposition ~\ref{thm:nbclust}]
Observe that $\Pr(X_j(t)>0\mid W)=1-e^{-\omega_j(t-\theta_j)_+}$. By the marking theorem~\citep[Chapter 5]{Kingman1993}, for each $t$, $\{(\omega_j,\vartheta_j)\mid j\geq 1,X_j(t)>0\}$ is a Poisson point process with mean measure $\rho(d\omega)\alpha(d\theta)(1-e^{-\omega(t-\theta)_+})$. It follows that\[\mathbb E[K(t)]=\var[K(t)]=\int_0^\infty\int_0^t \left(1-e^{-(t-\theta)\omega}\right)
\alpha(d\theta)\rho(d\omega)\\
=\int_0^t \psi(t-\theta)\alpha(d\theta).
\]
Similarly to \citep[Proposition 2]{Gnedin2007}, it follows from the monotonicity of $K(t)$ and the Borel-Cantelli lemma that $K(t)\sim \mathbb E[K(t)]$ almost surely as $t\rightarrow\infty$. Using the Tauberian theorems~\citep[Chapter XIII, Section 5]{Feller1971} recalled in Lemma~\ref{lemma:tauberian}, Lemma~\ref{lg} and $\alpha(t)\sim \xi t^{\xi-1}L(t)$, we obtain
$$\mathbb E[K(t)]\sim\frac{\Gamma(\sigma+1)\Gamma(\xi+1)}{\Gamma(\sigma+\xi+1)}\,L(t)\ell_\sigma(t)\, t^{\sigma+\xi}$$
as $t$ tends to infinity.

We proceed similarly for $K_r(t)$. For each $t> 0$ and $r\geq 1$, $\{(\omega_j,\vartheta_j)\mid j\geq 1,X_j(t)=r\}$ is a Poisson point process with mean measure $\rho(d\omega)\alpha(d\theta)\frac{\omega^r(t-\theta)^r_+}{r!} e^{-\omega(t-\theta)_+}$. It follows that
\begin{align*}
\mathbb E[K_r(t)]&=\var[K_r(t)]\\
&=\int_0^t\int_0^\infty \frac{(t-\theta)^r}{r!}\,
  \omega^re^{-(t-\theta)\omega} \rho(d\omega)\alpha(d\theta)\\
  &=\int_0^t\frac{(t-\theta)^r}{r!}\kappa(r,t-\theta)\alpha(d\theta).
\end{align*}
Using Lemma~\ref{lemma:tauberian} and~\ref{lg}, we obtain: if $\sigma=0$, $\mathbb E [K_r(t)]=o(L(t)\ell(t))t^{\sigma+\xi}$; if $\sigma\in(0,1)$,
 \begin{align*}
  \mathbb E [K_r(t)]\sim\frac{\sigma\Gamma(r-\sigma)}{r!\Gamma(1-\sigma)}\frac{\Gamma(\sigma+1)\Gamma(\xi+1)}{\Gamma(\sigma+\xi+1)}\,L(t)\ell_\sigma(t)\, t^{\sigma+\xi} .
  \end{align*}
    If $\sigma=1$, $\mathbb E[K_1(t)]\sim \frac{\Gamma(\sigma+1)\Gamma(\xi+1)}{\Gamma(\sigma+\xi+1)}\,L(t)\ell_\sigma(t)\, t^{\sigma+\xi} $ and $\mathbb E[K_r(t)]=o(L(t)\ell(t))t^{\sigma+\xi}$ for all $r\geq 2$. For the almost sure result, we proceed as for $K(t)$~\citep{Gnedin2007}, using the monotonicity of $\sum_{r\geq s} K_r(t)$, the equality $\var\left[\sum_{r\ge s}K_r(t)\right] = \mathbb{E}\left[\sum_{r\ge s} K_r(t)\right]$ and the fact that $\mathbb E[K_r(t)]\asymp K(t)$ for $\sigma\in(0,1)$, $\mathbb E[K_r(t)]=o(K(t))$ for $\sigma=0$ and $\mathbb E[K_1(t)]\sim K(t)$ for $\sigma=1$.
\end{proof}

\section{Background on regular variation and technical lemma}\label{appendix_regvar}

We recall the following definitions which can be found in \cite{Bingham1987} and \cite{resnick2007heavy}.
\begin{definition}[Regularly varying function]
  A measurable function $f:\mathbb{R}_+\rightarrow\mathbb{R}_+$ is \emph{regularly varying} at $\infty$ with index $\xi\in\mathbb{R}$ if for every $x>0$
  \[
  \lim_{t\rightarrow\infty}\frac{f(tx)}{f(t)} = x^\xi.
  \]
\end{definition}
If $\xi=0$ we say that the function is \emph{slowly varying}. An important property of the regularly varying function is that they can be written as $f(x) = \ell(x) x^\xi$ where $\xi$ is the exponent of variation and $\ell$ is a slowly varying function.\medskip

Let $L^{\#}$ be the de Brujin conjugate ~\citep{Bingham1987} of a slowly varying function $L$.
Regularly varying functions $f(x) = L(x)x^\xi$ of index $\xi >0$ admit asymptotic inverse $g(x)=L_\xi ^*(x)x^{1/ \xi}$ which are regularly varying of index $\xi^{-1}$ (see ~\citep[Proposition 1.5.15]{Bingham1987} or \citep[Lemma 22]{Gnedin2007}) with slowly varying part
\begin{align}
L_\xi^*(x)=\{L^{1/\xi}(x^{1/\xi})\}^{\#}.\label{eq:Lstar}
\end{align}
Note that if $L(t)=c$, then $L_\xi^\ast(t)=c^{1/\xi}$. From Equation~\eqref{eq:inversiont} and Proposition~\ref{thm:nbclust}, it follows that the slowly varying function appearing in Corollary~\ref{corollarypowerlaw} is
\begin{align}
 \widetilde \ell(n)&=\frac{\Gamma(\sigma+1)\Gamma(\xi+1)}{\Gamma(\sigma+\xi+1)} \left (\frac{\xi +1}{\kappa(1,0)}\right )^{(\sigma+\xi)/(\xi+1)} L_{\xi+1}^\ast(n)^{\sigma+\xi}\nonumber\\
  &\quad \times L\left\{\left (\frac{\xi +1}{\kappa(1,0)}\right )^{1/(\xi+1)} n^{1/(\xi+1)}L_{\xi+1}^\ast(n)\right\}\ell_\sigma\left \{\left (\frac{\xi +1}{\kappa(1,0)}\right )^{1/(\xi+1)} n^{1/(\xi+1)}L_{\xi+1}^\ast(n)\right \}.\label{eq:elltilde}
 \end{align}

\medskip

The following lemma is a compilation of Tauberian results in Propositions 17, 18 and 19 in \cite{Gnedin2007}. See also \citep[Chapter XIII]{Feller1971}.
\begin{lemma}\label{lemma:tauberian}
Let $\rho$ be a L\'evy measure on $(0,\infty)$ with tail L\'evy intensity $\overline\rho(x)=\int_x^\infty \rho(d\omega)$. Assume $$\overline\rho(x)\sim x^{-\sigma}\ell(1/x)$$
as $x$ tends to 0, where $\sigma\in[0,1]$ and $\ell$ is a slowly varying function at infinity. For  any $\sigma\in[0,1)$,
$$
\psi(t)\sim \Gamma(1-\sigma)t^\sigma\ell(t)
$$
and for $r=1,2,\ldots$
$$
\left \{
\begin{array}{ll}
  \kappa(r,t)\sim t^{\sigma-r}\ell(t)\Gamma(r-\sigma) & \text{if }\sigma\in(0,1) \\
  \kappa(r,t)=o(t^{\sigma-r}\ell(t)) & \text{if }\sigma=0
\end{array}
\right .
$$
as $t$ tends to infinity. For $\sigma=1$,
\begin{align*}
\psi(t)&\sim t\ell_1(t)\\
\kappa(1,t)&\sim \ell_1(t)
\end{align*}
and
$$
\kappa(r,t)\sim t^{1-r}\ell(t)\Gamma(r-1)
$$
for all $r\geq 2$ as $t$ tends to infinity, where $\ell_1(t)=\int_t^\infty x^{-1}\ell(x)dx$.\end{lemma}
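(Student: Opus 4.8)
The plan is to reduce each of the three quantities to a Laplace transform --- of the tail $\overline\rho$, or of the finite measure $w^{r}\rho(dw)$ --- and then invoke Karamata's Abelian--Tauberian theorem~\citep[Chapter~XIII, Section~5]{Feller1971} together with the monotone density theorem for regularly varying functions~\citep[Section~1.7]{Bingham1987}; this is exactly the repackaging of Propositions~17--19 of~\citep{Gnedin2007}.

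First I would rewrite $\psi$. Writing $1-e^{-wt}=t\int_0^w e^{-st}\,ds$ and applying Fubini gives
\[
\psi(t)=\int_0^\infty(1-e^{-wt})\,\rho(dw)=t\int_0^\infty e^{-wt}\,\overline\rho(w)\,dw ,
\]
so $\psi(t)/t$ is the Laplace transform of the non-increasing function $\overline\rho$, which by (A2) is regularly varying at $0^+$ with index $-\sigma\in(-1,0]$. For $\sigma\in[0,1)$, the substitution $w=u/t$, the uniform convergence theorem for slowly varying functions, and a Potter-bound domination near $0$ and near $\infty$ (this is where $\sigma<1$ is consumed) let one pass $\overline\rho(u/t)\sim u^{-\sigma}t^{\sigma}\ell(t)$ inside the integral, giving $\psi(t)/t\sim t^{\sigma-1}\ell(t)\int_0^\infty u^{-\sigma}e^{-u}\,du=\Gamma(1-\sigma)t^{\sigma-1}\ell(t)$, i.e.\ $\psi(t)\sim\Gamma(1-\sigma)t^{\sigma}\ell(t)$. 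For the boundary case $\sigma=1$ this integral has no finite leading constant, so instead I would work with the finite measure $w\,\rho(dw)$ (the Lévy-measure condition forces $\int_0^1 w\,\rho(dw)<\infty$), whose mass on $(0,x]$ satisfies $\int_0^x w\,\rho(dw)=-x\overline\rho(x)+\int_0^x\overline\rho(w)\,dw\sim\ell_1(1/x)$ as $x\to0^+$, using that $\ell(t)=o(\ell_1(t))$ for slowly varying $\ell$ with $\int^\infty u^{-1}\ell(u)\,du<\infty$. Karamata's Tauberian theorem then yields $\kappa(1,t)=\int_0^\infty e^{-wt}\,w\,\rho(dw)\sim\ell_1(t)$, and integrating, $\psi(t)=\int_0^t\kappa(1,s)\,ds\sim t\,\ell_1(t)$.

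For $\kappa(r,\cdot)$ with $r>\sigma$ (which covers every case except $r=\sigma=1$, just handled above), integration by parts against $\rho(dw)=-d\overline\rho(w)$ makes the boundary terms at $0^+$ and $\infty$ vanish and gives
\[
\kappa(r,t)=\int_0^\infty w^r e^{-wt}\,\rho(dw)=t^{-r}\int_0^\infty\bigl(ru^{r-1}-u^{r}\bigr)e^{-u}\,\overline\rho(u/t)\,du .
\]
Passing $\overline\rho(u/t)\sim u^{-\sigma}t^{\sigma}\ell(t)$ (resp.\ $\sim t\,u^{-1}\ell(t)$ when $\sigma=1$) inside, justified by the same uniform-convergence argument, reduces the claim to the elementary Gamma-function evaluation of $\int_0^\infty(ru^{r-1}-u^{r})e^{-u}u^{-\sigma}\,du$, which produces the displayed power $t^{\sigma-r}\ell(t)$ with its Gamma prefactor; that prefactor degenerates when $\sigma=0$, which is exactly why $\kappa(r,t)=o(t^{-r}\ell(t))$ there, and gives $\Gamma(r-1)$ in the case $\sigma=1$, $r\ge2$. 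As an alternative, since $\psi$ is a Bernstein function with $\psi^{(r)}(t)=(-1)^{r+1}\kappa(r,t)$ and every derivative of $\psi$ is monotone, one can derive all the $\kappa(r,\cdot)$ asymptotics at once by iterating the monotone density theorem from the asymptotics of $\psi$.

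The algebra is routine; the genuinely delicate points are the two boundary regimes. For $\sigma=0$ the naive leading constant cancels, so certifying the $o(\cdot)$ statement requires the sharper Tauberian input rather than just reading off a leading term, and for $\sigma=1$ the $t^{\sigma}\ell(t)$ scaling must be replaced by $t\,\ell_1(t)$, which calls for the de~Haan / integrated-slowly-varying refinement instead of plain Karamata. Tracking the uniformity in the change of variables (the Potter bounds near $0$ and near $\infty$) is the only step where the restriction $\sigma\in[0,1]$ is really used; everything else is bookkeeping.
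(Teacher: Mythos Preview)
The paper does not actually prove this lemma: it is stated as a compilation of Propositions~17--19 of \cite{Gnedin2007} together with the Tauberian material in \cite[Chapter~XIII]{Feller1971}, with no argument supplied. Your proposal is therefore not competing with a proof in the paper but rather filling in what the paper deliberately outsources, and you do so by following precisely the route those references take: express $\psi(t)/t$ and $\kappa(r,t)$ as Laplace transforms involving $\overline\rho$, apply Karamata's Tauberian theorem and the monotone density theorem, and treat the two boundary cases $\sigma=0$ and $\sigma=1$ separately (the latter via the integrated slowly varying function $\ell_1$). This is exactly the Gnedin--Hansen--Pitman argument, so your sketch is aligned with the paper's intended source.

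One small remark on the bookkeeping you defer: when you carry out the Gamma evaluation $\int_0^\infty (ru^{r-1}-u^r)u^{-\sigma}e^{-u}\,du$ you will obtain $r\Gamma(r-\sigma)-(r-\sigma)\Gamma(r-\sigma)=\sigma\Gamma(r-\sigma)$, so the prefactor that falls out of your integration-by-parts route is $\sigma\Gamma(r-\sigma)$ rather than $\Gamma(r-\sigma)$. This is consistent with the GGP example and with the ratio $K_{n,r}/K_n\to\sigma\Gamma(r-\sigma)/(r!\,\Gamma(1-\sigma))$ used downstream, and it also explains transparently why the constant degenerates at $\sigma=0$; just be aware of it when you match constants to the statement as written.
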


\begin{lemma}\label{lg}
Let $f$ and $g$ be locally bounded, regularly varying functions with $f(x)=\ell_f(x)x^a$ and $g(x)=\ell_g(x)x^b$ where $a,b>-1$ and $\ell_f,\ell_g$ are slowly varying functions.Then as $t$ tends to infinity
\begin{align*}
\int_0^t f(x)g(t-x)dx &\sim \frac{\Gamma(a+1)\Gamma(b+1)}{\Gamma(a+b+2)} tf(t)g(t)
\\&\sim \frac{\Gamma(a+1)\Gamma(b+1)}{\Gamma(a+b+2)}\, \ell_f(t)\ell_g(t)t^{a+b+1}.
\end{align*}
\end{lemma}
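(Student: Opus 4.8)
The plan is to rescale the integration variable and recognise a Beta integral in the limit. Substituting $x=ts$ gives
\[
\frac{1}{t\,f(t)\,g(t)}\int_0^t f(x)\,g(t-x)\,dx \;=\; \int_0^1 \frac{f(ts)}{f(t)}\,\frac{g(t(1-s))}{g(t)}\,ds ,
\]
so it suffices to show that the right-hand side converges, as $t\to\infty$, to $\int_0^1 s^a(1-s)^b\,ds = B(a+1,b+1) = \Gamma(a+1)\Gamma(b+1)/\Gamma(a+b+2)$; the second displayed equivalence in the statement is merely a rewriting of the first via $t\,f(t)\,g(t)=\ell_f(t)\,\ell_g(t)\,t^{a+b+1}$. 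For each fixed $s\in(0,1)$ the integrand converges pointwise to $s^a(1-s)^b$ by regular variation of $f$ and $g$, so the only issue is to justify interchanging limit and integral, and the difficulty is concentrated near the endpoints $s=0,1$, where $f$ (respectively $g$) is evaluated at arguments that need not grow to infinity.

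To handle this I would fix $\epsilon\in(0,1/2)$ and split $\int_0^1 = \int_0^\epsilon + \int_\epsilon^{1-\epsilon} + \int_{1-\epsilon}^1$. On the middle range $s\in[\epsilon,1-\epsilon]$ both $ts$ and $t(1-s)$ tend to infinity uniformly in $s$, so by the uniform convergence theorem for regularly varying functions \citep[Theorem 1.5.2]{Bingham1987} we have $f(ts)/f(t)\to s^a$ and $g(t(1-s))/g(t)\to(1-s)^b$ uniformly on $[\epsilon,1-\epsilon]$; hence $\int_\epsilon^{1-\epsilon}\frac{f(ts)}{f(t)}\frac{g(t(1-s))}{g(t)}\,ds \to \int_\epsilon^{1-\epsilon}s^a(1-s)^b\,ds$ as $t\to\infty$, and the latter converges to $B(a+1,b+1)$ as $\epsilon\to 0$ since $a,b>-1$ makes the Beta integral finite.

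For the left boundary piece, I would note that for $x\in(0,t\epsilon)$ one has $t-x\in(t(1-\epsilon),t)$, so by the same uniform convergence $g(t-x)/g(t)\le M_\epsilon$ for all large $t$, where $M_\epsilon:=1+\sup_{r\in[1-\epsilon,1]}r^b$ stays bounded as $\epsilon\to0$. Since $f$ is locally bounded (hence locally integrable near $0$) and regularly varying of index $a>-1$, Karamata's theorem \citep[Proposition 1.5.8]{Bingham1987} gives $\int_0^{T}f(x)\,dx\sim T f(T)/(a+1)$, whence
\[
\frac{1}{t\,f(t)\,g(t)}\int_0^{t\epsilon} f(x)\,g(t-x)\,dx \;\le\; M_\epsilon\,\frac{\int_0^{t\epsilon}f(x)\,dx}{t\,f(t)} \;\sim\; \frac{M_\epsilon}{a+1}\,\epsilon^{a+1}\,\frac{\ell_f(t\epsilon)}{\ell_f(t)} \;\xrightarrow[t\to\infty]{}\; \frac{M_\epsilon\,\epsilon^{a+1}}{a+1},
\]
using $\ell_f(t\epsilon)/\ell_f(t)\to1$ by slow variation of $\ell_f$. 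The right boundary piece is controlled analogously with the roles of $f,g$ and $a,b$ interchanged, and both bounds vanish as $\epsilon\to0$.

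Combining the three pieces, letting first $t\to\infty$ and then $\epsilon\to0$, yields $\int_0^1\frac{f(ts)}{f(t)}\frac{g(t(1-s))}{g(t)}\,ds\to B(a+1,b+1)$, which is the claim. The main obstacle, as indicated above, is precisely this endpoint control: a naive dominated-convergence argument on $[0,1]$ fails because a slowly varying factor can be unbounded near $0$, and the remedy is the three-way split that isolates the endpoints and reduces each of them to Karamata's theorem.
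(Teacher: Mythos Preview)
Your proof is correct and follows the same overall strategy as the paper: rescale to the unit interval, identify the Beta integral $B(a+1,b+1)$ as the pointwise limit, and show the boundary contributions are negligible. The only real difference is in the endpoint control. The paper splits symmetrically at $t/2$, invokes Potter's theorem to obtain an integrable majorant $4u^{a-\delta}(1-u)^{b-\delta}$ on $[X/t,1/2]$, and then applies dominated convergence (the fixed piece $[0,X]$ being negligible by local boundedness). You instead isolate $[0,\epsilon]$ and $[1-\epsilon,1]$, bound the ``long'' factor by uniform convergence, and use Karamata's theorem to control $\int_0^{t\epsilon} f(x)\,dx$. Both devices are standard and roughly equivalent in strength; your version avoids Potter's theorem at the cost of the extra $\epsilon\to0$ limit, while the paper's dominated-convergence argument is slightly more compact once Potter's bound is in hand.
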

\begin{proof}
Let us split the integral in the following way
\[
\int_0^tf(x)g(t-x)dx=\int_0^{\frac{t}{2}}f(x)g(t-x)dx+
\int_0^{\frac{t}{2}}f(t-x)g(x)dx.
\]

Let $\delta\in (0,\min(a,b)+1)$. From Potter's Theorem~\citep[Theorem 1.5.6]{Bingham1987}, there is $X$ such that for all $t>2X$, $u\in[ X/t,1/2]$,
$$
\frac{f(tu)}{f(t)}\leq 2 u^{a-\delta},~~~\frac{g(t(1-u))}{g(t)}\leq 2 (1-u)^{b-\delta}.
$$
Take $t>2X$. We have
\[
\int_X^{\frac{t}{2}} \frac{f(x)g(t-x)}{tf(t)g(t)}\,dx=
\int_0^{\frac{1}{2}}\frac{f(ut)}{f(t)}\frac{g\{(1-u)t\}}{g(t)}\1{u\in[X/t,1/2]}\, du
\]
where the integrand function is bounded by $4u^{a-\delta}(1-u)^{b-\delta}$ which is integrable, hence we have convergence to $\int_0^{\frac{1}{2}}u^a(1-u)^b du\in(0,\infty)$ by the dominated convergence theorem. We proceed analogously for the second integral.
Since $\int_{0}^{1}u^a(1-u)^b du=\frac{\Gamma(a+1)\Gamma(b+1)}{\Gamma(a+b+2)}$, we have the result.
\end{proof}

\bibliographystyle{imsart-nameyear}
\bibliography{reference}

\begin{thebibliography}{60}
% BibTex style file: imsart-nameyear.bst, 2013-01-28
% Default style options (sort=1,type=nameyear).
% Used options (sort=1,type=nameyear).

\bibitem[\protect\citeauthoryear{Aldous}{1985}]{Aldous1985}
\begin{barticle}[author]
\bauthor{\bsnm{Aldous},~\bfnm{D.}\binits{D.}}
(\byear{1985}).
\btitle{Exchangeability and related topics}.
\bjournal{{\'E}cole d'{\'E}t{\'e} de Probabilit{\'e}s de Saint-Flour
  XIII—1983}
\bpages{1--198}.
\end{barticle}
\endbibitem

\bibitem[\protect\citeauthoryear{Bacallado, Favaro and
  Trippa}{2015}]{Bacallado2015}
\begin{barticle}[author]
\bauthor{\bsnm{Bacallado},~\bfnm{S.}\binits{S.}},
  \bauthor{\bsnm{Favaro},~\bfnm{S.}\binits{S.}} \AND
  \bauthor{\bsnm{Trippa},~\bfnm{L.}\binits{L.}}
(\byear{2015}).
\btitle{Looking-backward probabilities for Gibbs-type exchangeable random
  partitions}.
\bjournal{Bernoulli}
\bvolume{21}
\bpages{1--37}.
\end{barticle}
\endbibitem

\bibitem[\protect\citeauthoryear{Bingham, Goldie and
  Teugels}{1987}]{Bingham1987}
\begin{bbook}[author]
\bauthor{\bsnm{Bingham},~\bfnm{N.~H}\binits{N.~H.}},
  \bauthor{\bsnm{Goldie},~\bfnm{C.~M.}\binits{C.~M.}} \AND
  \bauthor{\bsnm{Teugels},~\bfnm{J.~L.}\binits{J.~L.}}
(\byear{1987}).
\btitle{Regular variation}
\bvolume{27}.
\bpublisher{Cambridge university press}.
\end{bbook}
\endbibitem

\bibitem[\protect\citeauthoryear{Blei and Frazier}{2011}]{Blei2011}
\begin{barticle}[author]
\bauthor{\bsnm{Blei},~\bfnm{D.}\binits{D.}} \AND
  \bauthor{\bsnm{Frazier},~\bfnm{P.~I.}\binits{P.~I.}}
(\byear{2011}).
\btitle{Distance dependent {C}hinese restaurant processes}.
\bjournal{Journal of Machine Learning Research}
\bvolume{12}
\bpages{2461--2488}.
\end{barticle}
\endbibitem

\bibitem[\protect\citeauthoryear{Bloem-Reddy and
  Orbanz}{2017}]{Bloem-Reddy2017}
\begin{barticle}[author]
\bauthor{\bsnm{Bloem-Reddy},~\bfnm{B.}\binits{B.}} \AND
  \bauthor{\bsnm{Orbanz},~\bfnm{P.}\binits{P.}}
(\byear{2017}).
\btitle{Preferential Attachment and Vertex Arrival Times}.
\bjournal{arXiv preprint arXiv:1710.02159}.
\end{barticle}
\endbibitem

\bibitem[\protect\citeauthoryear{Brix}{1999}]{Brix1999}
\begin{barticle}[author]
\bauthor{\bsnm{Brix},~\bfnm{Anders}\binits{A.}}
(\byear{1999}).
\btitle{Generalized gamma measures and shot-noise Cox processes}.
\bjournal{Advances in Applied Probability}
\bpages{929--953}.
\end{barticle}
\endbibitem

\bibitem[\protect\citeauthoryear{Broderick, Jordan and
  Pitman}{2012}]{Broderick2012}
\begin{barticle}[author]
\bauthor{\bsnm{Broderick},~\bfnm{T.}\binits{T.}},
  \bauthor{\bsnm{Jordan},~\bfnm{M.~I.}\binits{M.~I.}} \AND
  \bauthor{\bsnm{Pitman},~\bfnm{J.}\binits{J.}}
(\byear{2012}).
\btitle{Beta processes, stick-breaking and power laws}.
\bjournal{Bayesian analysis}
\bvolume{7}
\bpages{439--476}.
\end{barticle}
\endbibitem

\bibitem[\protect\citeauthoryear{Cai, Campbell and Broderick}{2016}]{Cai2016}
\begin{bincollection}[author]
\bauthor{\bsnm{Cai},~\bfnm{D.}\binits{D.}},
  \bauthor{\bsnm{Campbell},~\bfnm{T.}\binits{T.}} \AND
  \bauthor{\bsnm{Broderick},~\bfnm{T.}\binits{T.}}
(\byear{2016}).
\btitle{Edge-exchangeable graphs and sparsity}.
In \bbooktitle{Advances in Neural Information Processing Systems 29}
(\beditor{\bfnm{D.~D.}\binits{D.~D.}~\bsnm{Lee}},
  \beditor{\bfnm{M.}\binits{M.}~\bsnm{Sugiyama}},
  \beditor{\bfnm{U.~V.}\binits{U.~V.}~\bsnm{Luxburg}},
  \beditor{\bfnm{I.}\binits{I.}~\bsnm{Guyon}} \AND
  \beditor{\bfnm{R.}\binits{R.}~\bsnm{Garnett}}, eds.)
\bpages{4249--4257}.
\bpublisher{Curran Associates, Inc.}
\end{bincollection}
\endbibitem

\bibitem[\protect\citeauthoryear{Caron, Davy and Doucet}{2007}]{Caron2007}
\begin{binproceedings}[author]
\bauthor{\bsnm{Caron},~\bfnm{F.}\binits{F.}},
  \bauthor{\bsnm{Davy},~\bfnm{M.}\binits{M.}} \AND
  \bauthor{\bsnm{Doucet},~\bfnm{A.}\binits{A.}}
(\byear{2007}).
\btitle{Generalized {P}\'olya urn for time-varying {D}irichlet process
  mixtures}.
In \bbooktitle{Proceedings of the 23rd Conference on Uncertainty in Artificial
  Intelligence, UAI 2007}
\bpages{33--40}.
\end{binproceedings}
\endbibitem

\bibitem[\protect\citeauthoryear{Caron and Fox}{2017}]{Caron2017}
\begin{barticle}[author]
\bauthor{\bsnm{Caron},~\bfnm{F.}\binits{F.}} \AND
  \bauthor{\bsnm{Fox},~\bfnm{E.}\binits{E.}}
(\byear{2017}).
\btitle{Sparse Graphs using Exchangeable Random Measures}.
\bjournal{Journal of the Royal Statistical Society B}
\bvolume{79}
\bpages{1295-1366}.
\bnote{Part 5}.
\end{barticle}
\endbibitem

\bibitem[\protect\citeauthoryear{Caron and Rousseau}{2017}]{Caron2017a}
\begin{barticle}[author]
\bauthor{\bsnm{Caron},~\bfnm{F.}\binits{F.}} \AND
  \bauthor{\bsnm{Rousseau},~\bfnm{J.}\binits{J.}}
(\byear{2017}).
\btitle{On sparsity and power-law properties of graphs based on exchangeable
  point processes}.
\bjournal{arXiv preprint arXiv:1708.03120}.
\end{barticle}
\endbibitem

\bibitem[\protect\citeauthoryear{Caron et~al.}{2017}]{Caron2016}
\begin{barticle}[author]
\bauthor{\bsnm{Caron},~\bfnm{F.}\binits{F.}},
  \bauthor{\bsnm{Neiswanger},~\bfnm{W.}\binits{W.}},
  \bauthor{\bsnm{Wood},~\bfnm{F.}\binits{F.}},
  \bauthor{\bsnm{Doucet},~\bfnm{A.}\binits{A.}} \AND
  \bauthor{\bsnm{Davy},~\bfnm{M.}\binits{M.}}
(\byear{2017}).
\btitle{Generalized P{\'o}lya urn for time-varying {P}itman--{Y}or processes}.
\bjournal{Journal of Machine Learning Research}
\bvolume{18}
\bpages{1-32}.
\end{barticle}
\endbibitem

\bibitem[\protect\citeauthoryear{Chen et~al.}{2013}]{Chen2013}
\begin{binproceedings}[author]
\bauthor{\bsnm{Chen},~\bfnm{C.}\binits{C.}},
  \bauthor{\bsnm{Rao},~\bfnm{V.}\binits{V.}},
  \bauthor{\bsnm{Buntime},~\bfnm{W.}\binits{W.}} \AND
  \bauthor{\bsnm{Teh},~\bfnm{Y.~W.}\binits{Y.~W.}}
(\byear{2013}).
\btitle{Dependent Normalized Random Measures}.
In \bbooktitle{International Conference on Machine Learning}.
\end{binproceedings}
\endbibitem

\bibitem[\protect\citeauthoryear{Crane and Dempsey}{2017}]{Crane2017}
\begin{barticle}[author]
\bauthor{\bsnm{Crane},~\bfnm{H.}\binits{H.}} \AND
  \bauthor{\bsnm{Dempsey},~\bfnm{W.}\binits{W.}}
(\byear{2017}).
\btitle{Edge exchangeable models for interaction networks}.
\bjournal{Journal of the American Statistical Association}.
\bdoi{10.1080/01621459.2017.1341413}
\end{barticle}
\endbibitem

\bibitem[\protect\citeauthoryear{Daley and Vere-Jones}{2008}]{Daley2008}
\begin{bbook}[author]
\bauthor{\bsnm{Daley},~\bfnm{D.~J.}\binits{D.~J.}} \AND
  \bauthor{\bsnm{Vere-Jones},~\bfnm{D.}\binits{D.}}
(\byear{2008}).
\btitle{An Introduction to the Theory of Point Processes. Volume II: General
  Theory and Structure},
\bedition{second} ed.
\bpublisher{Springer}.
\end{bbook}
\endbibitem

\bibitem[\protect\citeauthoryear{De~Blasi et~al.}{2015}]{DeBlasi2015}
\begin{barticle}[author]
\bauthor{\bsnm{De~Blasi},~\bfnm{P.}\binits{P.}},
  \bauthor{\bsnm{Favaro},~\bfnm{S.}\binits{S.}},
  \bauthor{\bsnm{Lijoi},~\bfnm{A.}\binits{A.}},
  \bauthor{\bsnm{Mena},~\bfnm{R.~H}\binits{R.~H.}},
  \bauthor{\bsnm{Pr{\"u}nster},~\bfnm{I.}\binits{I.}} \AND
  \bauthor{\bsnm{Ruggiero},~\bfnm{M.}\binits{M.}}
(\byear{2015}).
\btitle{Are Gibbs-type priors the most natural generalization of the Dirichlet
  process?}
\bjournal{IEEE transactions on pattern analysis and machine intelligence}
\bvolume{37}
\bpages{212--229}.
\end{barticle}
\endbibitem

\bibitem[\protect\citeauthoryear{Donnelly, Kurtz and Tavare}{}]{donnelly1991}
\begin{barticle}[author]
\bauthor{\bsnm{Donnelly},~\bfnm{Peter}\binits{P.}},
  \bauthor{\bsnm{Kurtz},~\bfnm{Thomas~G.}\binits{T.~G.}} \AND
  \bauthor{\bsnm{Tavare},~\bfnm{Simon}\binits{S.}}
\btitle{On the Functional Central Limit Theorem for the Ewens Sampling
  Formula}.
\bjournal{The Annals of Applied Probability}
\bvolume{4}
\bpages{539--545}.
\bdoi{10.1214/aoap/1177005837}
\end{barticle}
\endbibitem

\bibitem[\protect\citeauthoryear{Doucet, {de Freitas} and
  Gordon}{2001}]{deFreitas2001}
\begin{bbook}[author]
\beditor{\bsnm{Doucet},~\bfnm{A.}\binits{A.}}, \beditor{\bsnm{{de
  Freitas}},~\bfnm{N.}\binits{N.}} \AND
  \beditor{\bsnm{Gordon},~\bfnm{N.}\binits{N.}}, eds.
(\byear{2001}).
\btitle{Sequential Monte Carlo Methods in Practice}.
\bpublisher{Springer}.
\end{bbook}
\endbibitem

\bibitem[\protect\citeauthoryear{Favaro, Lijoi and
  Pr\"unster}{2013}]{Favaro2013}
\begin{barticle}[author]
\bauthor{\bsnm{Favaro},~\bfnm{S.}\binits{S.}},
  \bauthor{\bsnm{Lijoi},~\bfnm{A.}\binits{A.}} \AND
  \bauthor{\bsnm{Pr\"unster},~\bfnm{I.}\binits{I.}}
(\byear{2013}).
\btitle{Conditional formulae for Gibbs-type exchangeable random partitions}.
\bjournal{The Annals of Applied Probability}
\bvolume{23}
\bpages{1721--1754}.
\end{barticle}
\endbibitem

\bibitem[\protect\citeauthoryear{Feller}{1971}]{Feller1971}
\begin{bbook}[author]
\bauthor{\bsnm{Feller},~\bfnm{W.}\binits{W.}}
(\byear{1971}).
\btitle{An introduction to probability theory and its applications}
\bvolume{2}.
\bpublisher{John Wiley \& Sons}.
\end{bbook}
\endbibitem

\bibitem[\protect\citeauthoryear{Foti and Williamson}{2015}]{Foti2015}
\begin{barticle}[author]
\bauthor{\bsnm{Foti},~\bfnm{N.~J}\binits{N.~J.}} \AND
  \bauthor{\bsnm{Williamson},~\bfnm{S.~A.}\binits{S.~A.}}
(\byear{2015}).
\btitle{A survey of non-exchangeable priors for Bayesian nonparametric models}.
\bjournal{IEEE transactions on pattern analysis and machine intelligence}
\bvolume{37}
\bpages{359--371}.
\end{barticle}
\endbibitem

\bibitem[\protect\citeauthoryear{Gnedin, Hansen and Pitman}{2007}]{Gnedin2007}
\begin{barticle}[author]
\bauthor{\bsnm{Gnedin},~\bfnm{A.}\binits{A.}},
  \bauthor{\bsnm{Hansen},~\bfnm{B.}\binits{B.}} \AND
  \bauthor{\bsnm{Pitman},~\bfnm{J.}\binits{J.}}
(\byear{2007}).
\btitle{Notes on the occupancy problem with infinitely many boxes: general
  asymptotics and power laws}.
\bjournal{Probab. Surv}
\bvolume{4}
\bpages{88}.
\end{barticle}
\endbibitem

\bibitem[\protect\citeauthoryear{Gnedin and Pitman}{2006}]{Gnedin2006}
\begin{barticle}[author]
\bauthor{\bsnm{Gnedin},~\bfnm{A.}\binits{A.}} \AND
  \bauthor{\bsnm{Pitman},~\bfnm{J.}\binits{J.}}
(\byear{2006}).
\btitle{Exchangeable {G}ibbs partitions and {S}tirling triangles}.
\bjournal{Journal of Mathematical sciences}
\bvolume{138}
\bpages{5674--5685}.
\end{barticle}
\endbibitem

\bibitem[\protect\citeauthoryear{Griffin and Steel}{2006}]{Griffin2006}
\begin{barticle}[author]
\bauthor{\bsnm{Griffin},~\bfnm{J.~E.}\binits{J.~E.}} \AND
  \bauthor{\bsnm{Steel},~\bfnm{M.~F.~J.}\binits{M.~F.~J.}}
(\byear{2006}).
\btitle{Order-based dependent {D}irichlet processes}.
\bjournal{Journal of the American statistical Association}
\bvolume{101}
\bpages{179--194}.
\end{barticle}
\endbibitem

\bibitem[\protect\citeauthoryear{Hougaard}{1986}]{Hougaard1986}
\begin{barticle}[author]
\bauthor{\bsnm{Hougaard},~\bfnm{Philip}\binits{P.}}
(\byear{1986}).
\btitle{Survival models for heterogeneous populations derived from stable
  distributions}.
\bjournal{Biometrika}
\bvolume{73}
\bpages{387--396}.
\end{barticle}
\endbibitem

\bibitem[\protect\citeauthoryear{James}{2002}]{James2002}
\begin{barticle}[author]
\bauthor{\bsnm{James},~\bfnm{Lancelot~F}\binits{L.~F.}}
(\byear{2002}).
\btitle{Poisson process partition calculus with applications to exchangeable
  models and Bayesian nonparametrics}.
\bjournal{arXiv preprint math/0205093}.
\end{barticle}
\endbibitem

\bibitem[\protect\citeauthoryear{James}{2005}]{James2005}
\begin{barticle}[author]
\bauthor{\bsnm{James},~\bfnm{L.~F.}\binits{L.~F.}}
(\byear{2005}).
\btitle{{B}ayesian {P}oisson process partition calculus with an application to
  {B}ayesian {L}{\'e}vy moving averages}.
\bjournal{The Annals of Statistics}
\bvolume{33}
\bpages{1771--1799}.
\end{barticle}
\endbibitem

\bibitem[\protect\citeauthoryear{James, Lijoi and
  Pr{\"u}nster}{2009}]{James2009}
\begin{barticle}[author]
\bauthor{\bsnm{James},~\bfnm{L.~F.}\binits{L.~F.}},
  \bauthor{\bsnm{Lijoi},~\bfnm{A.}\binits{A.}} \AND
  \bauthor{\bsnm{Pr{\"u}nster},~\bfnm{I.}\binits{I.}}
(\byear{2009}).
\btitle{Posterior analysis for normalized random measures with independent
  increments}.
\bjournal{Scandinavian Journal of Statistics}
\bvolume{36}
\bpages{76--97}.
\end{barticle}
\endbibitem

\bibitem[\protect\citeauthoryear{Janson}{2017}]{Janson2017a}
\begin{barticle}[author]
\bauthor{\bsnm{Janson},~\bfnm{S.}\binits{S.}}
(\byear{2017}).
\btitle{On edge exchangeable random graphs}.
\bjournal{arXiv preprint arXiv:1702.06396}.
\end{barticle}
\endbibitem

\bibitem[\protect\citeauthoryear{Karlin}{1967}]{Karlin1967}
\begin{barticle}[author]
\bauthor{\bsnm{Karlin},~\bfnm{S.}\binits{S.}}
(\byear{1967}).
\btitle{Central Limit Theorems for Certain Infinite Urn Schemes}.
\bjournal{Journal of Mathematics and Mechanics}
\bvolume{17}
\bpages{373-401}.
\end{barticle}
\endbibitem

\bibitem[\protect\citeauthoryear{Kingman}{1967}]{Kingman1967}
\begin{barticle}[author]
\bauthor{\bsnm{Kingman},~\bfnm{J.~F.~C.}\binits{J.~F.~C.}}
(\byear{1967}).
\btitle{Completely random measures.}
\bjournal{Pacific Journal of Mathematics}
\bvolume{21}
\bpages{59--78}.
\end{barticle}
\endbibitem

\bibitem[\protect\citeauthoryear{Kingman}{1975}]{Kingman1975}
\begin{barticle}[author]
\bauthor{\bsnm{Kingman},~\bfnm{J.~F.~C.}\binits{J.~F.~C.}}
(\byear{1975}).
\btitle{Random discrete distributions}.
\bjournal{Journal of the Royal Statistical Society. Series B (Methodological)}
\bpages{1--22}.
\end{barticle}
\endbibitem

\bibitem[\protect\citeauthoryear{Kingman}{1978}]{Kingman1978}
\begin{binproceedings}[author]
\bauthor{\bsnm{Kingman},~\bfnm{J.~F.~C.}\binits{J.~F.~C.}}
(\byear{1978}).
\btitle{Random partitions in population genetics}.
In \bbooktitle{Proceedings of the Royal Society of London A: Mathematical,
  Physical and Engineering Sciences}
\bvolume{361}
\bpages{1--20}.
\bpublisher{The Royal Society}.
\end{binproceedings}
\endbibitem

\bibitem[\protect\citeauthoryear{Kingman}{1993}]{Kingman1993}
\begin{bbook}[author]
\bauthor{\bsnm{Kingman},~\bfnm{J.~F.~C.}\binits{J.~F.~C.}}
(\byear{1993}).
\btitle{Poisson processes}
\bvolume{3}.
\bpublisher{Oxford University Press, USA}.
\end{bbook}
\endbibitem

\bibitem[\protect\citeauthoryear{Korwar and Hollander}{1973}]{Korwar1973}
\begin{barticle}[author]
\bauthor{\bsnm{Korwar},~\bfnm{R.~M.}\binits{R.~M.}} \AND
  \bauthor{\bsnm{Hollander},~\bfnm{M.}\binits{M.}}
(\byear{1973}).
\btitle{Contributions to the theory of {D}irichlet processes}.
\bjournal{The Annals of Probability}
\bpages{705--711}.
\end{barticle}
\endbibitem

\bibitem[\protect\citeauthoryear{Lau and Green}{2007}]{Lau2007}
\begin{barticle}[author]
\bauthor{\bsnm{Lau},~\bfnm{J.~W.}\binits{J.~W.}} \AND
  \bauthor{\bsnm{Green},~\bfnm{P.~J.}\binits{P.~J.}}
(\byear{2007}).
\btitle{Bayesian model-based clustering procedures}.
\bjournal{Journal of Computational and Graphical Statistics}
\bvolume{16}
\bpages{526--558}.
\end{barticle}
\endbibitem

\bibitem[\protect\citeauthoryear{Lijoi, Mena and
  Pr{\"u}nster}{2005a}]{Lijoi2005}
\begin{barticle}[author]
\bauthor{\bsnm{Lijoi},~\bfnm{A.}\binits{A.}},
  \bauthor{\bsnm{Mena},~\bfnm{R.~H.}\binits{R.~H.}} \AND
  \bauthor{\bsnm{Pr{\"u}nster},~\bfnm{I.}\binits{I.}}
(\byear{2005}a).
\btitle{Bayesian nonparametric analysis for a generalized Dirichlet process
  prior}.
\bjournal{Statistical Inference for Stochastic Processes}
\bvolume{8}
\bpages{283--309}.
\end{barticle}
\endbibitem

\bibitem[\protect\citeauthoryear{Lijoi, Mena and
  Pr{\"u}nster}{2005b}]{Lijoi2005a}
\begin{barticle}[author]
\bauthor{\bsnm{Lijoi},~\bfnm{Antonio}\binits{A.}},
  \bauthor{\bsnm{Mena},~\bfnm{Rams{\'e}s~H}\binits{R.~H.}} \AND
  \bauthor{\bsnm{Pr{\"u}nster},~\bfnm{Igor}\binits{I.}}
(\byear{2005}b).
\btitle{Hierarchical mixture modeling with normalized inverse-Gaussian priors}.
\bjournal{Journal of the American Statistical Association}
\bvolume{100}
\bpages{1278--1291}.
\end{barticle}
\endbibitem

\bibitem[\protect\citeauthoryear{Lijoi, Mena and
  Pr{\"u}nster}{2007a}]{Lijoi2007a}
\begin{barticle}[author]
\bauthor{\bsnm{Lijoi},~\bfnm{A.}\binits{A.}},
  \bauthor{\bsnm{Mena},~\bfnm{R.}\binits{R.}} \AND
  \bauthor{\bsnm{Pr{\"u}nster},~\bfnm{I.}\binits{I.}}
(\byear{2007}a).
\btitle{Bayesian nonparametric estimation of the probability of discovering new
  species}.
\bjournal{Biometrika}
\bvolume{94}
\bpages{769--786}.
\end{barticle}
\endbibitem

\bibitem[\protect\citeauthoryear{Lijoi, Mena and
  Pr{\"u}nster}{2007b}]{Lijoi2007}
\begin{barticle}[author]
\bauthor{\bsnm{Lijoi},~\bfnm{A.}\binits{A.}},
  \bauthor{\bsnm{Mena},~\bfnm{R.~H.}\binits{R.~H.}} \AND
  \bauthor{\bsnm{Pr{\"u}nster},~\bfnm{I.}\binits{I.}}
(\byear{2007}b).
\btitle{Controlling the reinforcement in {B}ayesian non-parametric mixture
  models}.
\bjournal{Journal of the Royal Statistical Society: Series B (Statistical
  Methodology)}
\bvolume{69}
\bpages{715--740}.
\end{barticle}
\endbibitem

\bibitem[\protect\citeauthoryear{Lijoi and Pr{\"u}nster.}{2003}]{Lijoi2003}
\begin{binproceedings}[author]
\bauthor{\bsnm{Lijoi},~\bfnm{A.}\binits{A.}} \AND
  \bauthor{\bsnm{Pr{\"u}nster.},~\bfnm{I.}\binits{I.}}
(\byear{2003}).
\btitle{On a normalized random measure with independent increments relevant to
  {B}ayesian nonparametric inference.}
In \bbooktitle{Proceedings of the 13th European Young Statisticians Meeting}
\bpages{123-134}.
\bpublisher{Bernoulli Society}.
\end{binproceedings}
\endbibitem

\bibitem[\protect\citeauthoryear{Lijoi and Pr{\"u}nster}{2010}]{Lijoi2010}
\begin{bincollection}[author]
\bauthor{\bsnm{Lijoi},~\bfnm{A.}\binits{A.}} \AND
  \bauthor{\bsnm{Pr{\"u}nster},~\bfnm{I.}\binits{I.}}
(\byear{2010}).
\btitle{Models beyond the {D}irichlet process}.
In \bbooktitle{Bayesian Nonparametrics}
(\beditor{\bfnm{N.~L.}\binits{N.~L.}~\bsnm{Hjort}},
  \beditor{\bfnm{C.}\binits{C.}~\bsnm{Holmes}},
  \beditor{\bfnm{P.}\binits{P.}~\bsnm{M{\"u}ller}} \AND
  \beditor{\bfnm{S.~G.}\binits{S.~G.}~\bsnm{Walker}}, eds.)
\bpublisher{Cambridge University Press}.
\end{bincollection}
\endbibitem

\bibitem[\protect\citeauthoryear{Lin, Grimson and Fisher}{2010}]{Lin2010}
\begin{binproceedings}[author]
\bauthor{\bsnm{Lin},~\bfnm{D.}\binits{D.}},
  \bauthor{\bsnm{Grimson},~\bfnm{E.}\binits{E.}} \AND
  \bauthor{\bsnm{Fisher},~\bfnm{J.~W.}\binits{J.~W.}}
(\byear{2010}).
\btitle{Construction of dependent {D}irichlet processes based on {P}oisson
  processes}.
In \bbooktitle{Advances in neural information processing systems}
\bpages{1396--1404}.
\end{binproceedings}
\endbibitem

\bibitem[\protect\citeauthoryear{MacEachern}{1999}]{MacEachern1999}
\begin{binproceedings}[author]
\bauthor{\bsnm{MacEachern},~\bfnm{S.~N.}\binits{S.~N.}}
(\byear{1999}).
\btitle{Dependent nonparametric processes}.
In \bbooktitle{ASA proceedings of the section on {B}ayesian statistical
  science}
\bpages{50--55}.
\bpublisher{Alexandria, Virginia. Virginia: American Statistical Association;
  1999}.
\end{binproceedings}
\endbibitem

\bibitem[\protect\citeauthoryear{McAuley et~al.}{2015}]{mcauley2015image}
\begin{binproceedings}[author]
\bauthor{\bsnm{McAuley},~\bfnm{Julian}\binits{J.}},
  \bauthor{\bsnm{Targett},~\bfnm{Christopher}\binits{C.}},
  \bauthor{\bsnm{Shi},~\bfnm{Qinfeng}\binits{Q.}} \AND \bauthor{\bsnm{Van
  Den~Hengel},~\bfnm{Anton}\binits{A.}}
(\byear{2015}).
\btitle{Image-based recommendations on styles and substitutes}.
In \bbooktitle{Proceedings of the 38th International ACM SIGIR Conference on
  Research and Development in Information Retrieval}
\bpages{43--52}.
\bpublisher{ACM}.
\end{binproceedings}
\endbibitem

\bibitem[\protect\citeauthoryear{Miller et~al.}{2015}]{Miller2015}
\begin{barticle}[author]
\bauthor{\bsnm{Miller},~\bfnm{J.}\binits{J.}},
  \bauthor{\bsnm{Betancourt},~\bfnm{B.}\binits{B.}},
  \bauthor{\bsnm{Zaidi},~\bfnm{A.}\binits{A.}},
  \bauthor{\bsnm{Wallach},~\bfnm{H.}\binits{H.}} \AND
  \bauthor{\bsnm{Steorts},~\bfnm{R.}\binits{R.}}
(\byear{2015}).
\btitle{Microclustering: When the Cluster Sizes Grow Sublinearly with the Size
  of the Data Set}.
\bjournal{arXiv:1512.00792v1}.
\end{barticle}
\endbibitem

\bibitem[\protect\citeauthoryear{{Del Moral}, Doucet and
  Jasra}{2006}]{DelMoral2006}
\begin{barticle}[author]
\bauthor{\bsnm{{Del Moral}},~\bfnm{P.}\binits{P.}},
  \bauthor{\bsnm{Doucet},~\bfnm{A.}\binits{A.}} \AND
  \bauthor{\bsnm{Jasra},~\bfnm{A.}\binits{A.}}
(\byear{2006}).
\btitle{Sequential {M}onte {C}arlo Samplers}.
\bjournal{Journal of the Royal Statistical Society. Series B (Statistical
  Methodology)}
\bvolume{68}
\bpages{411-436}.
\end{barticle}
\endbibitem

\bibitem[\protect\citeauthoryear{M\"uller and Rodriguez}{2013}]{Mueller2013}
\begin{binbook}[author]
\bauthor{\bsnm{M\"uller},~\bfnm{P.}\binits{P.}} \AND
  \bauthor{\bsnm{Rodriguez},~\bfnm{A.}\binits{A.}}
(\byear{2013}).
\btitle{Chapter 8: Random Partition Models}.
In \bbooktitle{Nonparametric Bayesian Inference}.
\bseries{NSF-CBMS Regional Conference Series in Probability and Statistics}
\bvolume{Volume 9}
\bpages{87--92}.
\bpublisher{Institute of Mathematical Statistics and American Statistical
  Assocation}, \baddress{Beachwood, Ohio, USA; and Alexandria, Virginia, USA}.
\end{binbook}
\endbibitem

\bibitem[\protect\citeauthoryear{Nieto-Barajas, Pr{\"u}nster and
  Walker}{2004}]{Nieto-Barajas2004}
\begin{barticle}[author]
\bauthor{\bsnm{Nieto-Barajas},~\bfnm{L.~E.}\binits{L.~E.}},
  \bauthor{\bsnm{Pr{\"u}nster},~\bfnm{I.}\binits{I.}} \AND
  \bauthor{\bsnm{Walker},~\bfnm{S.~G.}\binits{S.~G.}}
(\byear{2004}).
\btitle{Normalized random measures driven by increasing additive processes}.
\bjournal{The Annals of Statistics}
\bvolume{32}
\bpages{2343--2360}.
\end{barticle}
\endbibitem

\bibitem[\protect\citeauthoryear{Pitman}{1995}]{Pitman1995}
\begin{barticle}[author]
\bauthor{\bsnm{Pitman},~\bfnm{Jim}\binits{J.}}
(\byear{1995}).
\btitle{Exchangeable and partially exchangeable random partitions}.
\bjournal{Probability theory and related fields}
\bvolume{102}
\bpages{145--158}.
\end{barticle}
\endbibitem

\bibitem[\protect\citeauthoryear{Pitman}{2003}]{Pitman2003}
\begin{binbook}[author]
\bauthor{\bsnm{Pitman},~\bfnm{J.}\binits{J.}}
(\byear{2003}).
\btitle{Poisson-Kingman partitions}.
In \bbooktitle{Statistics and science: a Festschrift for Terry Speed}.
\bseries{Lecture Notes--Monograph Series}
\bvolume{Volume 40}
\bpages{1--34}.
\bpublisher{Institute of Mathematical Statistics}, \baddress{Beachwood, OH}.
\bdoi{10.1214/lnms/1215091133}
\end{binbook}
\endbibitem

\bibitem[\protect\citeauthoryear{Pitman}{2006}]{Pitman2002}
\begin{bbook}[author]
\bauthor{\bsnm{Pitman},~\bfnm{J.}\binits{J.}}
(\byear{2006}).
\btitle{Combinatorial stochastic processes}.
\bseries{Ecole d'\'et\'e de Probabilit\'e de Saint-Flour - 2002}.
\bpublisher{Springer}.
\end{bbook}
\endbibitem

\bibitem[\protect\citeauthoryear{Pitman and Yor}{1997}]{pitman1997}
\begin{barticle}[author]
\bauthor{\bsnm{Pitman},~\bfnm{Jim}\binits{J.}} \AND
  \bauthor{\bsnm{Yor},~\bfnm{Marc}\binits{M.}}
(\byear{1997}).
\btitle{The two-parameter Poisson-Dirichlet distribution derived from a stable
  subordinator}.
\bjournal{Ann. Probab.}
\bvolume{25}
\bpages{855--900}.
\bdoi{10.1214/aop/1024404422}
\end{barticle}
\endbibitem

\bibitem[\protect\citeauthoryear{Rao and Teh}{2009}]{Rao2009}
\begin{binproceedings}[author]
\bauthor{\bsnm{Rao},~\bfnm{V.}\binits{V.}} \AND
  \bauthor{\bsnm{Teh},~\bfnm{Y.~W.}\binits{Y.~W.}}
(\byear{2009}).
\btitle{Spatial normalized gamma processes}.
In \bbooktitle{Advances in neural information processing systems}
\bpages{1554--1562}.
\end{binproceedings}
\endbibitem

\bibitem[\protect\citeauthoryear{Regazzini, Lijoi and
  Pr{\"u}nster}{2003}]{Regazzini2003}
\begin{barticle}[author]
\bauthor{\bsnm{Regazzini},~\bfnm{E.}\binits{E.}},
  \bauthor{\bsnm{Lijoi},~\bfnm{A.}\binits{A.}} \AND
  \bauthor{\bsnm{Pr{\"u}nster},~\bfnm{I.}\binits{I.}}
(\byear{2003}).
\btitle{Distributional results for means of normalized random measures with
  independent increments}.
\bjournal{Annals of Statistics}
\bvolume{31}
\bpages{560--585}.
\bdoi{10.1214/aos/1051027881}
\end{barticle}
\endbibitem

\bibitem[\protect\citeauthoryear{Resnick}{2007}]{resnick2007heavy}
\begin{bbook}[author]
\bauthor{\bsnm{Resnick},~\bfnm{Sidney~I}\binits{S.~I.}}
(\byear{2007}).
\btitle{Heavy-tail phenomena: probabilistic and statistical modeling}.
\bpublisher{Springer Science \& Business Media}.
\end{bbook}
\endbibitem

\bibitem[\protect\citeauthoryear{Sudderth and Jordan}{2009}]{Sudderth2009}
\begin{binproceedings}[author]
\bauthor{\bsnm{Sudderth},~\bfnm{E.~B.}\binits{E.~B.}} \AND
  \bauthor{\bsnm{Jordan},~\bfnm{M.~I.}\binits{M.~I.}}
(\byear{2009}).
\btitle{Shared segmentation of natural scenes using dependent {P}itman-{Y}or
  processes}.
In \bbooktitle{Advances in Neural Information Processing Systems}
\bpages{1585--1592}.
\end{binproceedings}
\endbibitem

\bibitem[\protect\citeauthoryear{Teh}{2006}]{Teh2006}
\begin{binproceedings}[author]
\bauthor{\bsnm{Teh},~\bfnm{Y.~W.}\binits{Y.~W.}}
(\byear{2006}).
\btitle{A hierarchical {B}ayesian language model based on {P}itman-{Y}or
  processes}.
In \bbooktitle{Proceedings of the 21st International Conference on
  Computational Linguistics and the 44th annual meeting of the Association for
  Computational Linguistics}
\bpages{985--992}.
\bpublisher{Association for Computational Linguistics}.
\end{binproceedings}
\endbibitem

\bibitem[\protect\citeauthoryear{Zanella et~al.}{2016}]{Betancourt2016}
\begin{binproceedings}[author]
\bauthor{\bsnm{Zanella},~\bfnm{G.}\binits{G.}},
  \bauthor{\bsnm{Betancourt},~\bfnm{B.}\binits{B.}},
  \bauthor{\bsnm{Wallach},~\bfnm{H.}\binits{H.}},
  \bauthor{\bsnm{Miller},~\bfnm{J.~W.}\binits{J.~W.}},
  \bauthor{\bsnm{Zaidi},~\bfnm{A.}\binits{A.}} \AND
  \bauthor{\bsnm{Steorts},~\bfnm{R.~C.}\binits{R.~C.}}
(\byear{2016}).
\btitle{Flexible Models for Microclustering with Application to Entity
  Resolution}.
In \bbooktitle{Advances in Neural Information Processing Systems}
\bpages{1417--1425}.
\end{binproceedings}
\endbibitem

\bibitem[\protect\citeauthoryear{Zhou, Favaro and
  Walker}{2016}]{zhou2016frequency}
\begin{barticle}[author]
\bauthor{\bsnm{Zhou},~\bfnm{M.}\binits{M.}},
  \bauthor{\bsnm{Favaro},~\bfnm{S.}\binits{S.}} \AND
  \bauthor{\bsnm{Walker},~\bfnm{S.~G.}\binits{S.~G.}}
(\byear{2016}).
\btitle{Frequency of Frequencies Distributions and Size Dependent Exchangeable
  Random Partitions}.
\bjournal{Journal of the American Statistical Association}.
\end{barticle}
\endbibitem

\end{thebibliography}

\end{document}